\newtheorem{thm}{Theorem}
\newtheorem{cor}{Corollary}
\newtheorem{prop}{Proposition}
\newtheorem{defi}{Definition}
\newtheorem{assump}{Assumption}
\newtheorem{lem}{Lemma}
\theoremstyle{remark}
\newtheorem{remark}{Remark}
\renewcommand{\O}{\mathbb{O}}
\newcommand{\R}{\mathbb{R}}
\newcommand{\C}{\mathbb{C}}
\newcommand{\N}{\mathbb{N}}
\newcommand{\vct}[1]{\boldsymbol{#1}}
\newcommand{\mtx}[1]{\boldsymbol{#1}}
\renewcommand{\H}{\mathrm{H}}
\newcommand{\T}{\mathrm{T}}
\newcommand{\Span}{\operatorname{Span}}
\newcommand{\trace}{\operatorname{trace}}
\newcommand{\rank}{\operatorname{rank}}
\newcommand{\dist}{\operatorname{dist}}
\newcommand{\set}[1]{\mathcal{#1}}
\DeclareMathOperator*{\minimize}{\text{minimize}}
\DeclareMathOperator*{\argmin}{\text{arg~min}}
\DeclareMathOperator*{\argmax}{\text{arg~max}}
\def \st {\operatorname*{subject\ to\ }}
\newcommand{\calS}{\mathcal{S}}
\newcommand{\calA}{\mathcal{A}}
\newcommand{\calC}{\mathcal{C}}
\newcommand{\calO}{\mathcal{O}}
\newcommand{\calE}{\mathcal{E}}
\newcommand{\calX}{\mathcal{X}}
\newcommand{\calR}{\mathcal{R}}
\newcommand{\calG}{\mathcal{G}}
\newcommand{\vb}{\vct{b}}
\newcommand{\ve}{\vct{e}}
\newcommand{\vq}{\vct{q}}
\newcommand{\vu}{\vct{u}}
\newcommand{\vv}{\vct{v}}
\newcommand{\vx}{\vct{x}}
\newcommand{\vy}{\vct{y}}
\newcommand{\vz}{\vct{z}}
\newcommand{\valpha}{\vct{\alpha}}
\newcommand{\vphi}{\vct{\phi}}
\newcommand{\vpsi}{\vct{\psi}}
\newcommand{\vzero}{\vct{0}}
\newcommand{\mA}{\mtx{A}}
\newcommand{\mB}{\mtx{B}}
\newcommand{\mC}{\mtx{C}}
\newcommand{\mD}{\mtx{D}}
\newcommand{\mE}{\mtx{E}}
\newcommand{\mH}{\mtx{H}}
\newcommand{\mL}{\mtx{L}}
\newcommand{\mM}{\mtx{M}}
\newcommand{\mP}{\mtx{P}}
\newcommand{\mQ}{\mtx{Q}}
\newcommand{\mR}{\mtx{R}}
\newcommand{\mS}{\mtx{S}}
\newcommand{\mU}{\mtx{U}}
\newcommand{\mV}{\mtx{V}}
\newcommand{\mW}{\mtx{W}}
\newcommand{\mX}{\mtx{X}}
\newcommand{\mZ}{\mtx{Z}}
\newcommand{\mDelta}{\mtx{\Delta}}
\newcommand{\mLambda}{\mtx{\Lambda}}
\newcommand{\mOmega}{\mtx{\Omega}}
\newcommand{\mPhi}{\mtx{\Phi}}
\newcommand{\mPsi}{\mtx{\Psi}}
\newcommand{\mSigma}{\mtx{\Sigma}}
\newcommand{\mUpsilon}{\mtx{\Upsilon}}
\newcommand{\mId}{{\bf I}}
\newcommand{\mzero}{{\bf 0}}
\newcommand{\setO}{\set{O}}
\newlength{\imgwidth}
\newcommand{\revise}[1]{{\color{black}{#1}}}
\newcommand{\twoCol}[2]{\ifthenelse{\boolean{twoColVersion}} {#1} {#2} }
\title{The Global Optimization Geometry \\of Low-Rank Matrix Optimization}
\author{Zhihui Zhu, Qiuwei Li, Gongguo Tang, and Michael B. Wakin
\thanks{The first two authors contribute equally. This work was supported by NSF grant CCF-1409261, NSF grant CCF-1464205, NSF grant 2008460, NSF CAREER grant CCF-1149225, and Award N660011824020 from the DARPA Lagrange Program. ZZ is with the Department of Electrical and  Computer Engineering, University of Denver, QL is with the Department of Mathematics, University of California, Los Angeles, and GT and MW  are with the  Department of Electrical Engineering, Colorado School of Mines. Email: zhihui.zhu@du.edu,  qiuweili@math.ucla.edu, \{gtang, mwakin\}@mines.edu.}
}
\begin{document}

\maketitle

\begin{abstract}
This paper considers general rank-constrained optimization problems that minimize a general objective function $f(\mX)$ over the set of rectangular $n\times m$ matrices that have rank at most $r$. To tackle the rank constraint and also to reduce the computational burden, we factorize $\mX$ into $\mU\mV^\T$ where $\mU$ and $\mV$ are $n\times r$ and $m\times r$ matrices, respectively, and then optimize over the small matrices $\mU$ and $\mV$. We characterize the global optimization geometry of the nonconvex factored problem and show that the corresponding objective function satisfies the robust strict saddle property as long as the original objective function $f$ satisfies restricted strong convexity and smoothness properties, ensuring global convergence of many local search algorithms (such as noisy gradient descent) in polynomial time for solving the factored problem. We also provide a comprehensive analysis for the optimization geometry of a matrix factorization problem where we aim to find $n\times r$ and $m\times r$ matrices $\mU$ and $\mV$ such that $\mU\mV^\T$ approximates a given matrix $\mX^\star$. Aside from the robust strict saddle property, we show that the objective function of the matrix factorization problem has no spurious local minima and obeys the strict saddle property not only for the exact-parameterization case where $\rank(\mX^\star) = r$, but also for the over-parameterization case where $\rank(\mX^\star) < r$ and the under-parameterization case where $\rank(\mX^\star) > r$. These geometric properties imply that a number of iterative optimization algorithms (such as gradient descent) converge to a global solution with random initialization.
\end{abstract}

\begin{IEEEkeywords}
Low-rank optimization, matrix factorization, matrix sensing,  nonconvex optimization, optimization geometry
\end{IEEEkeywords}

\section{Introduction}
Low-rank matrices arise in a wide variety of applications throughout science and engineering, ranging from quantum tomography~\cite{aaronson2007learnability}, signal processing~\cite{liu2009interior},
machine learning~\cite{srebro2004maximum,xu2016dynamic}, and so on; see~\cite{davenport2016overview} for a comprehensive review. In all of these settings, we often encounter the following rank-constrained optimization problem:
\begin{equation}\begin{split}
& \minimize_{\mX\in\R^{n\times m}} f(\mX),\\
& \st \rank(\mX)\leq r,
\label{eq:general problem}\end{split}\end{equation}
where the objective function $f:\R^{n\times m}\rightarrow \R$ is smooth.

Whether the objective function $f$ is convex or nonconvex, the rank constraint renders  low-rank matrix optimizations of the form~\eqref{eq:general problem} highly nonconvex and computationally NP-hard in general~\cite{fazel2004rank}. Significant efforts have been devoted to transforming~\eqref{eq:general problem} into a convex problem by replacing the rank constraint with one involving the nuclear norm. This strategy has been widely utilized in matrix inverse problems~\cite{recht2010guaranteed} arising in signal processing \cite{davenport2016overview}, machine learning \cite{harchaoui2012large}, and control \cite{fazel2004rank}. With convex analysis techniques, nuclear norm  minimization has been proved to provide optimal performance in recovering low-rank matrices~\cite{candes2009exact}. However, in spite of the optimal performance,
solving nuclear norm minimization is very computationally expensive even with specialized first-order algorithms. For example, the singular value thresholding  algorithm \cite{cai2010singular} requires performing an expensive singular value decomposition (SVD) in each iteration, making it computationally prohibitive in large-scale settings. This prevents nuclear norm minimization from scaling to practical problems.

To relieve the computational bottleneck, recent studies propose to factorize the variable into $\mX =\mU\mV^\T$, and optimize over the $n\times r$ and $m\times r$ matrices $\mU$ and $\mV$ rather than the $n\times m$ matrix $\mX$. The rank constraint in~\eqref{eq:general problem} then is automatically satisfied through the factorization. This strategy is usually referred to as the Burer-Monteiro type decomposition after the authors in~\cite{burer2003nonlinear,burer2005local}. Plugging this parameterization of $\mX$ in~\eqref{eq:general problem}, we can recast the program into the following one:
\begin{align}
\minimize_{\mU\in\R^{n\times r},\mV\in\R^{m\times r}} h(\mU,\mV):=f(\mU\mV^\T).
\label{eq:factored problem no regularizer}\end{align}
The bilinear nature of the parameterization renders the objective function of \eqref{eq:factored problem no regularizer} nonconvex. Hence, it can  potentially have spurious local minima (i.e., local minimizers that are not global minimizers) or even saddle points. With technical innovations in analyzing the landscape of nonconvex functions, however, several recent works have shown that the factored objective function $h(\mU,\mV)$ in certain matrix inverse problems has no spurious local minima~\cite{bhojanapalli2016lowrankrecoveryl,ge2016matrix,park2016non}.

\subsection{Summary of results and outline}
In this paper,  we provide a comprehensive geometric analysis for solving general low-rank optimizations of the form~\eqref{eq:general problem} using the factorization approach \eqref{eq:factored problem no regularizer}. Our work actually rests on the recent works~\cite{ge2015escaping,sun2015nonconvex,lee2016gradient,panageas2016gradient,jin2017escape} ensuring a number of iterative optimization methods (such as gradient descent) converge to a local minimum with random initialization provided the problem satisfies the so-called strict saddle property (see Definition~\ref{def:strict saddle property} in Section~\ref{sec:Preliminaries}). If the objective function further obeys the robust strict saddle property~\cite{ge2015escaping} (see Definition~\ref{def:robust strict saddle} in Section~\ref{sec:Preliminaries}) or belongs to the class of so-called $\calX$ functions~\cite{sun2015nonconvex}, the recent works~\cite{ge2015escaping,sun2015nonconvex} show that many local search algorithms can converge to a local minimum in polynomial time. The implications of this line of work have had a tremendous impact on a number of nonconvex problems in applied mathematics, signal processing, and machine learning.

We begin this paper in Section~\ref{sec:Preliminaries} with the notions of strict saddle, strict saddle property, and robust strict saddle property. Considering that many invariant functions are not strongly convex (or even convex) in any neighborhood around a local minimum point, we then provide a revised robust strict saddle property\footnote{A similar notion of a revised robust strict saddle property has also been utilized in \cite{jin2017escape}, which shows that noisy gradient descent converges to a local minimum in a number iterations that depends only poly-logarithmically on the dimension. In a nutshell, \cite{jin2017escape} has a different focus than this work: the focus in \cite{jin2017escape} is on providing convergence analysis of a noisy gradient descent algorithm with a robust strict saddle property, while in the present paper, we establish a robust strict saddle property for the nonsymmetric matrix factorization and more general low-rank optimization (including matrix sensing) problems with the factorization approach.} requiring a regularity condition (see Definition~\ref{def:regularity condition} in Section~\ref{sec:Preliminaries}) rather than strong convexity  near the local minimum points (which is one of the requirements for the strict saddle property). The stochastic gradient descent algorithm is guaranteed to converge to a local minimum point in polynomial time for problems satisfying the revised robust strict saddle property~\cite{ge2015escaping,jin2017escape}.

In Section~\ref{sec:low-rank optimization}, we consider the geometric analysis for solving general low-rank optimizations of the form~\eqref{eq:general problem} using the factorization approach \eqref{eq:factored problem no regularizer}. Provided the objective function $f$ satisfies certain restricted strong convexity and smoothness conditions, we show that the low-rank optimization problem with the factorization~\eqref{eq:factored problem no regularizer} (with an additional regularizer---see Section~\ref{sec:low-rank optimization} for the details) obeys the revised robust strict saddle property.  In Section~\ref{sec:matrix sensing}, we consider a stylized application in matrix sensing where the measurement operator satisfies the restricted isometry property (RIP)~\cite{recht2010guaranteed}. In the case of Gaussian measurements, as guaranteed by this robust strict saddle property, a number of iterative optimizations can find the unknown matrix $\mX^\star$ of rank $r$ in polynomial time with high probability when the number of measurements exceeds a constant times $(n+m)r^2$. 

Our main approach for analyzing the optimization geometry of \eqref{eq:factored problem no regularizer} is based on the
geometric analysis for  the following non-square low-rank matrix factorization problem: given $\mX^\star \in \R^{n \times m}$,
\begin{equation}
\minimize_{\mU\in\R^{n\times r},\mV^{m\times r}} \left\|\mU\mV^\T - \mX^\star\right\|_F^2.
\label{eq:low rank fact no regu}\end{equation}
In particular, we show the optimization geometry for the low-rank matrix factorization problem~\eqref{eq:low rank fact no regu} is preserved for the general low-rank optimization \eqref{eq:factored problem no regularizer} under certain restricted strong convexity and smoothness conditions on $f$. Thus, in Appendix~\ref{sec:low rank factorization}, we provide a comprehensive geometric analysis for \eqref{eq:low rank fact no regu}, which can be viewed as an important foundation of many popular matrix factorization problems such as the matrix sensing problem and matrix completion. We show that the low-rank matrix factorization problem~\eqref{eq:low rank fact no regu} (with an additional regularizer) has no spurious local minima and obeys the strict saddle property---that is  the objective function in~\eqref{eq:low rank fact no regu} has a directional negative curvature at all critical points but local minima---not only for the exact-parameterization case where $\rank(\mX^\star) =r$, but also for the over-parameterization case where $\rank(\mX^\star) <r$ and the under-parameterization case where $\rank(\mX^\star) >r$. The strict saddle property and lack of spurious local minima ensure that a number of local search algorithms applied to the matrix factorization problem~\eqref{eq:low rank fact no regu} converge to global optima which correspond to the best rank-$r$ approximation to $\mX^\star$. Further, we completely analyze the low-rank matrix factorization problem~\eqref{eq:low rank fact no regu} for the exact-parameterization case and show that it obeys the revised robust strict saddle property.

\subsection{Relation to existing work}
Unlike the objective functions of convex optimizations that have simple landscapes,
such as where all local minimizers are global ones, the objective functions of general nonconvex programs have much more complicated landscapes. In recent years, by exploiting the underlying optimization geometry, a surge of progress has been made in providing theoretical justifications for matrix factorization problems such as~\eqref{eq:factored problem no regularizer} using a number of previously heuristic algorithms (such as alternating minimization \cite{li2019alternating}, 
gradient descent, and the trust region method). Typical examples include phase retrieval \cite{sun2016geometric,candes2015Wirtinger,chen2015solving}, blind deconvolution \cite{lee2016fast,li2016rapid}, dictionary learning \cite{agarwal2014learning,sun2016complete1,sun2015complete}, phase synchronization~\cite{liu2016estimation} and
matrix sensing and completion \cite{tu2015low,wang2016unified,sun2016guaranteed,ge2016matrix,jin2016provable,jain2013low,ge2017no}.

These iterative algorithms can be sorted into two categories based on whether a good initialization is required. One set of algorithms consist of two steps: initialization and local refinement. Provided the function satisfies a regularity condition or similar properties, a good guess lying in the attraction basin of the global optimum can lead to global convergence of the following iterative step. We can obtain such initializations by spectral methods for phase retrieval~\cite{candes2015Wirtinger}, phase synchronization~\cite{liu2016estimation} and low-rank matrix recovery problems~\cite{tu2015low,bhojanapalli2015dropping,zhao2015nonconvex,wang2016unified}. As we have mentioned, a regularity condition is also adopted in the revised robust strict saddle property.

Another category of works attempt to analyze the landscape of the objective functions in a larger space rather than the regions near the global optima. We can further separate these approaches into two types based on whether they involve the strict saddle property or the robust strict saddle property. The strict saddle property and lack of spurious local minima are proved for low-rank, positive semidefinite (PSD) matrix recovery~\cite{bhojanapalli2016lowrankrecoveryl} and completion~\cite{ge2016matrix},  PSD matrix optimization problems with generic objective functions~\cite{li2016}, low-rank non-square matrix estimation from linear observations~\cite{park2016non}, low-rank nonsquare optimization problems with generic objective functions~\cite{zhu2017GlobalOptimality} and generic nuclear norm regularized problems \cite{li2016}.
The strict saddle property along with  the lack of spurious local minima ensures a number of iterative algorithms such as gradient descent~\cite{ge2015escaping} and the trust region method~\cite{conn2000trust}  converge to the global minimum with random initialization~\cite{lee2016gradient,ge2015escaping,sun2015complete}.

A few other works which are closely related to our work attempt to study the {\em global geometry} by characterizing  the landscapes of the objective functions in the whole space rather than the regions near the global optima or all the critical points. As we discussed before, a number of local search algorithms are guaranteed to find a local optimum (which is also the global optimum if there are no spurious local minima) because of this robust strict saddle property. In~\cite{ge2015escaping}, the authors proved that tensor decomposition problems satisfy this robust strict saddle property. Sun et al.~\cite{sun2016geometric} studied the global geometry of the phase retrieval problem. The very recent work in~\cite{li2019symmetry} analyzed the global geometry for PSD low-rank matrix factorization of the form~\eqref{eq:low rank fact no regu} and the related matrix sensing problem when the rank is exactly parameterized (i.e., $r = \rank(\mX^\star)$). The factorization approach for matrix inverse problems with quadratic loss functions is considered in~\cite{ge2017no}.  We extend this line by considering general rank-constrained
optimization problems including a set of matrix inverse problems.

Finally, we remark that our work is also closely related to the recent works in low-rank matrix factorization of the form~\eqref{eq:low rank fact no regu} and its variants~\cite{tu2015low,sun2016guaranteed,bhojanapalli2016lowrankrecoveryl,ge2016matrix,li2019symmetry,wang2016unified,park2016non,ge2017no,zhu2017GlobalOptimality}. As we discussed before, most of these works except~\cite{li2019symmetry,ge2017no} (but including~\cite{park2016non} which also focuses on nonsymmetric matrix sensing) only characterize the geometry either near the global optima or all the critical points. Instead, we characterize the {\em global}geometry for general (rather than PSD) low-rank matrix factorization and sensing. Because the analysis is different, the proof strategy in the present paper is also very different than that of~\cite{park2016non,zhu2017GlobalOptimality}. The results for PSD matrix sensing in \cite{li2019symmetry} build heavily on the concentration properties of Gaussian measurements, while our results for matrix sensing depend on the RIP of the measurement operator and thus can be applied to other matrix sensing problems whose measurement operator is not necessarily from a Gaussian measurement ensemble. Also, \cite{ge2017no} considers matrix inverse problems with quadratic loss functions and its proof strategy is very different than that in the present paper: the proof in \cite{ge2017no} is specified to quadratic loss functions, while we consider  the rank-constrained optimization problem with general objective functions in \eqref{eq:general problem} and our proof utilizes the fact that the gradient and Hessian of the low-rank matrix sensing are respectively very close to those in low-rank matrix factorization. Furthermore, in terms of the matrix factorization, we show that the objective function in~\eqref{eq:low rank fact no regu} obeys the strict saddle property and has no spurious local minima not only for exact-parameterization ($r = \rank(\mX^\star)$), but also for over-parameterization ($r > \rank(\mX^\star)$) and under-parameterization ($r < \rank(\mX^\star)$). Local (rather than global) geometry results for  exact-parameterization and under-parameterization are also covered in \cite{zhu2017GlobalOptimality}. As noted above, the work in \cite{li2019symmetry,ge2017no} for low-rank matrix factorization only focuses on exact-parameterization ($r = \rank(\mX^\star)$).
The under-parameterization implies that we can find the best rank-$r$ approximation to $\mX^\star$ by many efficient iterative optimization algorithms such as gradient descent.

\subsection{Notation}
\label{sec:notation}

Before proceeding, we first briefly introduce some notation used throughout the paper. The symbols $\mId$ and $\mzero$ respectively represent the identity and zero matrices with appropriate sizes. Also $\mId_n$ is used to denote the $n\times n$ identity matrix.  For any natural number $n$, we let $[n]$ or $1:n$ denote the set $\{1,2,...,n\}$. We use $|\Omega|$ denote the cardinality (i.e., the number of elements) of a set $\Omega$. MATLAB notations  are adopted for matrix indexing; that is, for the $n\times m$ matrix $\mA$,
its $(i,j)$-th element is denoted by $\mA[i,j]$, its $i$-th row (or column) is denoted by $\mA[i,:]$ (or $\mA[:,i]$), and $\mA[\Omega_1,\Omega_2]$ refers to a $|\Omega_1|\times |\Omega_2|$ submatrix obtained by taking the elements in rows $\Omega_1$ of columns $\Omega_2$ of matrix $\mA$. Here $\Omega_1\subset[n]$ and $\Omega_2\subset[n]$. We use $a\gtrsim b$ (or $a\lesssim b$) to represent that there is a constant so that $a\geq \text{Const}\cdot b$ (or $a\leq \text{Const}\cdot b$).

We say that a (not necessarily square) matrix $\mA \in \R^{n\times r}$ is orthonormal if the columns of $\mA$ are normalized and orthogonal to each other, i.e., $\mA^\T \mA = \mId$. The set of $r\times r$ orthonormal matrices is denoted by $\calO_r:=\{\mR\in\R^{r\times r}:\mR^\T\mR = \mId\}$. We say that a (not necessarily square) matrix $\mA \in \R^{n\times r}$ is orthogonal if $\langle\mA[:,i], \mA[:,j]\rangle = 0$ for all $i\neq j$; that is the columns of $\mA$ are orthogonal to each other, but are not necessarily normalized and could even be zero.

If a function $h(\mU,\mV)$ has two arguments, $\mU\in\R^{n\times r}$ and $\mV\in\R^{m\times r}$, we occasionally use the notation $h(\mW)$ when we put these two arguments into a new one as $\mW=\begin{bmatrix}\mU \\ \mV \end{bmatrix}$.  For a scalar function $f(\mZ)$ with a matrix variable $\mZ\in\R^{n\times m}$, its gradient is an $n\times m$ matrix whose $(i,j)$-th entry is $[\nabla f(\mZ)][i,j] = \frac{\partial f(\mZ)}{\partial\mZ[i,j]}$ for all $i\in\{1,2,\ldots,n\}, j\in\{1,2,\ldots,m\} $. The Hessian of $f(\mZ)$ can be viewed as an $nm\times nm$ matrix $[\nabla^2 f(\mZ)][i,j] = \frac{\partial^2 f(\mZ)}{\partial\vz[i]\partial \vz[j]}$ for all $i,j\in\{1,\ldots,nm\}$, where $\vz[i]$ is the $i$-th entry of the vectorization of $\mZ$. An alternative way to represent the Hessian is by a bilinear form defined via
$[\nabla^2f(\mZ)](\mA,\mB) = \sum_{i,j,k,l}\frac{\partial^2 f(\mZ)}{\partial \mZ[i,j]\partial \mZ[k,\ell]}\mA[i,j]\mB[k,\ell]$ for any $\mA,\mB\in\R^{n\times m}$. These two notations will be used interchangeably whenever the specific form can be inferred from context.

\section{Preliminaries}\label{sec:Preliminaries}
In this section, we provide a number of important definitions in optimization and group theory. To begin, suppose $h(\vx):\R^n\to \R$ is  twice differentiable.
\begin{defi}[Critical points]  A point $\vx$ is a critical point of $h(\vx)$ if $\nabla h(\vx) = \vzero$.
\label{def:critical point}\end{defi}

\begin{defi}[Strict saddles; or ridable saddles in \cite{sun2015complete}]
A critical point $\vx$ is a strict saddle if the Hessian matrix evaluated at this point has a strictly negative eigenvalue, i.e., $\lambda_{\min}(\nabla^2 h(\vx))<0$.
\end{defi}

\begin{defi}[Strict saddle property~\cite{ge2015escaping}]\label{def:strict saddle property}
A twice differentiable function satisfies the strict saddle property if each  critical point either corresponds to a local minimum or is a strict saddle.
\end{defi}
Intuitively, the strict saddle property requires a function to have a directional negative curvature at all of the critical points but local minima. This property allows a number of iterative algorithms such as noisy gradient descent~\cite{ge2015escaping} and the trust region method~\cite{conn2000trust} to further decrease the function value at all the strict saddles and thus converge to a local minimum. 

In~\cite{ge2015escaping}, the authors proposed a noisy gradient descent algorithm for the optimization of functions satisfying the robust strict saddle property.
\begin{defi}[Robust strict saddle property~\cite{ge2015escaping}] Given $\alpha,\gamma,\epsilon,\delta$, a twice differentiable $h(\vx)$ satisfies the  $(\alpha,\gamma,\epsilon,\delta)$-robust strict saddle property if for every point $\vx$ at least one of the following applies:
\begin{enumerate}
\item There exists a local minimum point $\vx^\star$ such that $\|\vx^\star - \vx\|\leq \delta$, and the function $h(\vx')$ restricted to a $2\delta$ neighborhood of $\vx^\star$ (i.e., $\|\vx^\star - \vx'\|\leq 2\delta$) is $\alpha$-strongly convex;
\item $\lambda_{\min}\left(\nabla^2 h(\vx)\right)\leq -\gamma$;
\item $\|\nabla h(\vx)\|\geq \epsilon$.
\end{enumerate}
\label{def:robust strict saddle}\end{defi}
In words, the above robust strict saddle property says that for any point whose gradient is small, then either the Hessian matrix evaluated at this point has a strictly negative eigenvalue, or it is close to a local minimum point. Thus the robust strict saddle property not only requires that the function obeys the strict saddle property, but also that it is well-behaved (i.e., strongly convex) near the local minima and has large gradient at the points far way to the critical points.

Intuitively, when the gradient is large, the function value will decrease in one step by gradient descent; when the point is close to a saddle point, the noise introduced in the noisy gradient descent could help the algorithm escape the saddle point and the function value will also decrease; when the point is close to a local minimum point, the algorithm then converges to a local minimum.  Ge et al.~\cite{ge2015escaping} rigorously showed that the noisy gradient descent algorithm (see \cite[Algorithm 1]{ge2015escaping}) outputs a local minimum in a polynomial number of steps if the function $h(\vx)$ satisfies the robust strict saddle property.

It is proved in~\cite{ge2015escaping} that tensor decomposition problems satisfy this robust strict saddle property. However, requiring the local strong convexity prohibits the potential extension of the analysis in~\cite{ge2015escaping} for the noisy gradient descent algorithm to many other problems, for which it is not possible to be strongly convex in any neighborhood around the local minimum points. Typical examples include the matrix factorization problems due to the rotational degrees of freedom for any critical point. This motivates us to weaken the local strong convexity assumption relying on the approach used by \cite{candes2015Wirtinger,tu2015low}
and to provide the following revised robust strict saddle property for such problems. To that end, we list some necessary definitions related to groups and invariance of a function under the group action.

\begin{defi}[Definition 7.1 \cite{chirikjian2016harmonic})] A (closed) binary operation, $\circ$, is a law of composition that produces an element of a set from two elements of the same set. More precisely, let $\calG$ be a set and $a_1,a_2\in \calG$ be arbitrary elements. Then $(a_1,a_2)\rightarrow a_1\circ a_2\in \calG$.
\end{defi}

\begin{defi}
[Definition 7.2 \cite{chirikjian2016harmonic})] A {\bf group} is a set $\calG$ together with a (closed) binary operation $\circ$ such that for any elements $a,a_1,a_2,a_3\in \calG$ the following properties hold:
\begin{itemize}
\item Associative property: $a_1\circ (a_2\circ a_3) = (a_1 \circ a_2)\circ a_3$.
\item There exists an identity element $e\in \calG$ such that $e\circ a = a\circ e = a$.
\item There is an element $a^{-1}\in \calG$ such that $a^{-1}\circ a = a\circ a^{-1} = e$.
\end{itemize}
\end{defi}
With this definition, it is common to denote a group just by $\calG$ without saying the binary operation $\circ$ when it is clear from the context.

\begin{defi} Given a function $h(\vx):\R^n\to \R$ and a group $\calG$ of operators on $\R^n$, we say $h$ is invariant under the group action (or under an element $a$ of the group) if
\[
h(a(\vx)) = h(\vx)
\]
for all $\vx\in\R^n$ and $a\in\calG$.
\end{defi}
Suppose the group action also preserves the energy of $\vx$, i.e., $\|a(\vx)\| = \|\vx\|$ for all $a\in\calG$.
Since for any $\vx\in\R^n$, $h(a(\vx)) = h(\vx)$ for all $a\in\calG$, it is straightforward to stratify the domain of $h(\vx)$ into equivalent classes. The vectors in each of these equivalent classes differ by a group action. One implication is that when considering the distance of two points $\vx_1$ and $\vx_2$, it would be helpful to use the distance between their corresponding classes:
\begin{equation}\begin{split}
\dist(\vx_1,\vx_2) :&= \min_{a_1\in\calG,a_2\in\calG}\|a_1(\vx_1) - a_2(\vx_2)\|\\
 &= \min_{a\in\calG}\|\vx_1 - a(\vx_2)\|,
\end{split}\label{eq:dist with group}\end{equation}
where the second equality follows because $\|a_1(\vx_1) - a_2(\vx_2)\| = \|a_1(\vx_1 - a_1^{-1}\circ a_2(\vx_2))\| = \|\vx_1 - a_1^{-1}\circ a_2(\vx_2)\| $ and $a_1^{-1}\circ a_2\in\calG$.
Another implication is that the function $h(\vx)$ cannot possibly be strongly convex (or even convex) in any neighborhood around its local minimum points because of the existence of the equivalent classes. Before presenting the revised robust strict saddle property for invariant functions, we list two examples to illuminate these concepts.

\vspace{.1in}
{\noindent \em Example 1:} As one example, consider the phase retrieval problem of recovering an $n$-dimensional complex vector $\vx^\star$ from $\left\{y_i = \left|\vb_i^\H \vx^\star\right|,i = 1,\ldots,p\right\}$, the magnitude of its projection onto a collection of known complex vectors $\vb_1, \vb_2,\ldots,\vb_p$~\cite{candes2015Wirtinger,sun2016geometric}. The unknown $\vx^\star$ can be estimated by solving the following natural least-squares formulation~\cite{candes2015Wirtinger,sun2016geometric}
\[
\minimize_{\vx\in\C^n}h(\vx) = \frac{1}{2p}\sum_{i=1}^p \left(y_i^2 - \left|\vb_i^\H \vx\right|^2\right)^2,
\]
where we note that here the domain of $\vx$ is $\C^n$. For this case, we denote the corresponding
\[
\calG = \{e^{j \theta}:\theta\in[0,1)\}
\]
 and the group action as $a(\vx) = e^{j \theta}\vx$, where $a = e^{j\theta}$ is an element in $\calG$. It is clear that $h(a(\vx)) = h(\vx)$ for all $a\in\calG$. Due to this invariance of $h(\vx)$, it is impossible to recover the global phase factor of the unknown $\vx^\star$ and the function $h(\vx)$ is not strongly convex in any neighborhood of $\vx^\star$.

\vspace{.1in}
{\noindent \em Example 2:} As another example, we revisit the general factored low-rank optimization problem~\eqref{eq:factored problem no regularizer}:
\begin{align*}
\minimize_{\mU\in\R^{n\times r},\mV\in\R^{m\times r}} h(\mU,\mV) = f(\mU\mV^\T).
\end{align*}
We recast the two variables $\mU,\mV$ into $\mW$ as $\mW = \begin{bmatrix}\mU\\ \mV \end{bmatrix}$.
For this example, we denote the corresponding $\calG = \calO_r$ and the group action on $\mW$ as
$a(\mW) = \begin{bmatrix}\mU\mR\\ \mV\mR \end{bmatrix}$ where $a=\mR\in\calG$. We have that $h(a(\mW)) = h(\mW)$ for all $a\in\calG$ since $\mU\mR(\mV\mR)^\T = \mU\mV^\T$ for any $\mR\in\calO_r$. Because of this invariance, in general $h(\mW)$ is not strongly convex in any neighborhood around its local minimum points even though $f(\mX)$ is a strongly convex function; see~\cite{li2019symmetry} for the symmetric low-rank factorization problem and Theorem~\ref{thm:strict saddle property} in Appendix~\ref{sec:low rank factorization} for the nonsymmetric low-rank factorization problem.

\vspace{.1in}
In the examples illustrated above, due to the invariance, the function is not strongly convex (or even convex) in any neighborhood around its local minimum  point and thus it is prohibitive to apply the standard approach in optimization to show the convergence in a small neighborhood around the local minimum point.  To overcome this issue, Cand\`{e}s et al.~\cite{candes2015Wirtinger} utilized the so-called regularity condition as a sufficient condition for local convergence of gradient descent applied for the phase retrieval problem. This approach has also been applied for the matrix sensing problem~\cite{tu2015low} and semi-definite optimization~\cite{bhojanapalli2015dropping}.
\begin{defi}[Regularity condition~\cite{candes2015Wirtinger,tu2015low}] Suppose $h(\vx):\R^n\rightarrow \R$ is invariant under the group action of the given group $\calG$. Let $\vx^\star\in\R^n$ be a local minimum point of $h(\vx)$. Define the set $B(\delta,\vx^\star)$ as
\[
B(\delta,\vx^\star): = \left\{\vx\in\R^n:\dist(\vx,\vx^\star)\leq \delta\right\},
\]
where the distance $\dist(\vx,\vx^\star)$ is defined in~\eqref{eq:dist with group}. Then we say the function $h(\vx)$ satisfies the $(\alpha,\beta,\delta)$-regularity condition if for all $\vx\in B(\delta,\vx^\star)$, we have
\begin{align}
\left\langle \nabla h(\vx), \vx - a(\vx^\star)\right\rangle \geq \alpha \dist(\vx,\vx^\star)^2 + \beta\|\nabla h(\vx)\|^2,
\label{eq:regularity condition}\end{align}
where $a = \argmin_{a'\in\calG}\|\vx - a'(\vx^\star)\|$.
\label{def:regularity condition}\end{defi}
We remark that $(\alpha,\beta)$ in the regularity condition~\eqref{def:regularity condition} must satisfy $\alpha\beta\leq \frac{1}{4}$ since by applying Cauchy-Schwarz
\[
\left\langle \nabla h(\vx), \vx - a(\vx^\star)\right\rangle \leq \|\nabla h(\vx)\|\dist(\vx,\vx^\star)
\]
and the inequality of arithmetic and geometric means
\[
\alpha \dist^2(\vx,\vx^\star) + \beta \|\nabla h(\vx)\|^2 \geq 2\sqrt{\alpha\beta}\dist(\vx,\vx^\star)\|\nabla h(\vx)\|^2.
\]

\begin{lem}\label{lem:local descent}\cite{candes2015Wirtinger,tu2015low}
If the function $h(\vx)$ restricted to a $\delta$ neighborhood of $\vx^\star$ satisfies the $(\alpha,\beta,\delta)$-regularity condition, then as long as gradient descent starts from a point $\vx_0\in B(\delta,\vx^\star)$, the gradient descent update
\begin{align*}
\vx_{t+1} = \vx_{t} - \nu\nabla h(\vx_t)
\end{align*}
with step size $0<\nu\leq 2\beta$ obeys $\vx_t\in B(\delta,\vx^\star)$ and
\[
\dist^2(\vx_t,\vx^\star)\leq \left(1 - 2\nu \alpha\right)^t\dist^2(\vx_0,\vx^\star)
\]
for all $t\geq 0$.
\end{lem}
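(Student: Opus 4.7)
The plan is to argue by induction on $t$, simultaneously establishing the one-step contraction and the fact that the iterate stays in $B(\delta,\vx^\star)$. The base case $t=0$ is given by hypothesis. For the inductive step, assume $\vx_t \in B(\delta,\vx^\star)$ and let $a_t := \argmin_{a\in\calG}\|\vx_t - a(\vx^\star)\|$, so that $\dist(\vx_t,\vx^\star) = \|\vx_t - a_t(\vx^\star)\|$.

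The key identity is the standard expansion of the squared distance along a gradient step. Since $\dist(\vx_{t+1},\vx^\star) \leq \|\vx_{t+1} - a(\vx^\star)\|$ for \emph{any} $a\in\calG$ by the definition \eqref{eq:dist with group}, I would use $a=a_t$ as the upper bound and substitute the update rule to obtain
\begin{align*}
\dist^2(\vx_{t+1},\vx^\star) &\leq \|\vx_t - \nu\nabla h(\vx_t) - a_t(\vx^\star)\|^2 \\
&= \dist^2(\vx_t,\vx^\star) - 2\nu\langle \nabla h(\vx_t),\vx_t - a_t(\vx^\star)\rangle + \nu^2\|\nabla h(\vx_t)\|^2.
\end{align*}
Since $\vx_t\in B(\delta,\vx^\star)$ by the inductive hypothesis and $a_t$ is exactly the minimizer appearing in Definition~\ref{def:regularity condition}, the $(\alpha,\beta,\delta)$-regularity condition \eqref{eq:regularity condition} applies and bounds the inner product from below. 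Substituting this bound yields
\begin{align*}
\dist^2(\vx_{t+1},\vx^\star) \leq (1 - 2\nu\alpha)\dist^2(\vx_t,\vx^\star) + \nu(\nu - 2\beta)\|\nabla h(\vx_t)\|^2.
\end{align*}
The step size constraint $\nu \leq 2\beta$ makes the second term non-positive, giving the claimed one-step contraction.

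To close the induction, I need $\vx_{t+1}\in B(\delta,\vx^\star)$. Observing that the remark immediately following Definition~\ref{def:regularity condition} forces $\alpha\beta \leq \tfrac{1}{4}$, and since $\nu \leq 2\beta$, we have $2\nu\alpha \leq 4\alpha\beta \leq 1$, so the contraction factor lies in $[0,1]$. Hence $\dist(\vx_{t+1},\vx^\star) \leq \dist(\vx_t,\vx^\star) \leq \delta$, which closes the induction. Iterating the one-step bound over $t$ steps yields the geometric decay stated in the lemma.

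There is no real obstacle here; the only subtlety worth flagging is that, because of the group invariance, $\dist$ is defined via a minimum over $\calG$, and one must be careful to use the same group element $a_t$ on both sides of the expansion---one exploits the inequality $\dist(\vx_{t+1},\vx^\star) \leq \|\vx_{t+1}-a_t(\vx^\star)\|$ (valid for any fixed $a_t\in\calG$) to upper-bound the new distance, while the regularity condition is invoked at the point $\vx_t$ where $a_t$ is exactly the argmin. Once this alignment is made, the argument reduces to the textbook contraction proof for strongly-convex-like gradient descent.
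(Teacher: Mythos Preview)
Your proof is correct and follows essentially the same approach as the paper's own proof: expand $\dist^2(\vx_{t+1},\vx^\star)$ via the suboptimal choice $a_t$ of group element, apply the regularity condition to bound the cross term, and use $\nu\le 2\beta$ together with $\alpha\beta\le\tfrac14$ to obtain the contraction and keep the iterate in $B(\delta,\vx^\star)$. The only cosmetic difference is that you make the induction explicit, whereas the paper leaves it implicit.
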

The proof is given in~\cite{candes2015Wirtinger}. To keep the paper self-contained, we also provide the proof of Lemma~\ref{lem:local descent} in Appendix~\ref{sec:prf local descent}. We remark that  the decreasing rate $1-2\nu\alpha\in [0,1)$ since we choose $\nu\leq 2\beta$ and $\alpha\beta\leq \frac{1}{4}$.

Now we establish the following revised robust strict saddle property for invariant functions by replacing the strong convexity condition in Definition~\ref{def:robust strict saddle} with the regularity condition.

\begin{defi}[Revised robust strict saddle property for invariant functions]\label{def:revised robust strict saddle} Given a twice differentiable $h(\vx):\R^{n}\to \R$ and a group $\calG$, suppose $h(\vx)$ is invariant under the group action and the energy of $\vx$ is also preserved under the group action, i.e., $h(a(\vx)) = h(\vx)$ and $\|a(\vx)\|_2 = \|\vx\|_2$ for all $a\in\calG$. Given $\alpha,\beta,\gamma,\epsilon,\delta$, $h(\vx)$ satisfies the $(\alpha,\beta,\gamma,\epsilon,\delta)$-robust strict saddle property if for any point $\vx$ at least one of the following applies:
\begin{enumerate}
\item There exists a local minimum point $\vx^\star$ such that $\dist(\vx,\vx^\star)\leq \delta$, and the function $h(\vx')$ restricted to $2\delta$ a neighborhood of $\vx^\star$ (i.e., $\dist(\vx',\vx^\star)\leq 2\delta$) satisfies the $(\alpha,\beta,2\delta)$-regularity condition defined in Definition~\ref{def:regularity condition};
\item $\lambda_{\min}\left(\nabla^2 h(\vx)\right)\leq -\gamma$;
\item $\|\nabla h(\vx)\|\geq \epsilon$.
\end{enumerate}
\end{defi}
Compared with Definition~\ref{def:robust strict saddle}, the revised robust strict saddle property requires the local descent condition instead of strict convexity in a small neighborhood around any local minimum point. With the convergence guarantee in Lemma~\ref{lem:local descent}, the convergence analysis of the stochastic gradient descent algorithm in~\cite{ge2015escaping} for the robust strict saddle functions can also be applied for the revised robust strict saddle functions defined in Definition~\ref{def:revised robust strict saddle} with the same convergence rate.\footnote{As mentioned previously, a similar notion of a revised robust strict saddle property has also recently been utilized in \cite{jin2017escape}.} We omit the details here and refer the reader to \cite{jin2017escape} for more details on this. In the rest of the paper, the  robust strict saddle property refers to the one in Definition~\ref{def:revised robust strict saddle}.

\section{Low-rank Matrix Optimization with the factorization approach}\label{sec:low-rank optimization}
In this section, we consider the minimization of general rank-constrained optimization problems of the form~\eqref{eq:general problem} using the factorization approach \eqref{eq:factored problem no regularizer} (which we repeat as follows):
\begin{align*}
\minimize_{\mU\in\R^{n\times r},\mV\in\R^{m\times r}} h(\mU,\mV)=f(\mU\mV^\T),
\end{align*}
where the rank constraint in \eqref{eq:general problem} is automatically satisfied by the factorization approach. With necessary assumptions on $f$ in Section \ref{sec:assumption}, we provide geometric analysis of the factored problem in Section~\ref{sec:robust strict saddle for general}.  We then present a stylized application in matrix sensing in Section~\ref{sec:matrix sensing}.

\subsection{Assumptions and regularizer}
\label{sec:assumption}
Before presenting our main results, we lay out the necessary assumptions on the objective function $f(\mX)$. As is known, without any assumptions on the problem, even minimizing traditional quadratic objective functions is challenging. For this reason, we focus on problems satisfying the following two assumptions.
\begin{assump}\label{assump:1}
$f(\mX)$ has a critical point $\mX^\star\in\R^{n\times m}$ which has rank $r$.
\end{assump}
\begin{assump}\label{assump:2}
$f(\mX)$ is $(2r,4r)$-restricted strongly convex and smooth, i.e., for any $n\times m$ matrices $\mX, \mD$ with $\rank(\mX)\leq 2r$ and $\rank(\mD)\leq 4r$, the Hessian of $f(\mX)$ satisfies
\begin{align}
a\left\|\mD\right\|_F^2 \leq [\nabla^2 f(\mX)](\mD,\mD) \leq b \left\|\mD\right\|_F^2
\label{eq:RIP like}\end{align}
for some positive $a$ and $b$.
\end{assump}

\Cref{assump:1} is equivalent to the existence of a rank $r$ $\mX^\star$ such that  $\nabla f(\mX^\star) = \mzero$, which is very mild and holds in many matrix inverse problems including matrix sensing~\cite{recht2010guaranteed}, matrix completion~\cite{candes2009exact} and 1-bit matrix completion~\cite{davenport20141}, where the unknown matrix to be recovered is a critical point of $f$.

\Cref{assump:2} is also utilized in~\cite[Conditions 5.3 and 5.4]{wang2016unified} and \cite{zhu2017GlobalOptimality}, where weighted low-rank matrix factorization and a set of matrix inverse problems are proved to satisfy the $(2r,4r)$-restricted  strong convexity and smoothness condition \eqref{eq:RIP like}. We discuss matrix sensing as a typical example satisfying this assumption in Section~\ref{sec:matrix sensing}.

Combining \Cref{assump:1} and \Cref{assump:2}, we have that $\mX^\star$ is the unique global minimum of \eqref{eq:general problem}.

\begin{prop}\label{prop:RIP to unique} Suppose $f(\mX)$ satisfies the $(2r,4r)$-restricted  strong convexity and smoothness condition \eqref{eq:RIP like} with positive  $a$ and $b$. Assume $\mX^\star$ is a critical point of $f(\mX)$ with $\rank(\mX^\star) = r$. Then $\mX^\star$ is the global minimum of \eqref{eq:general problem}, i.e.,
\[
f(\mX^\star)\leq f(\mX), \ \forall \ \mX\in\R^{n\times m}, \rank(\mX)\leq r
\]
and the equality holds only at $\mX = \mX^\star$.
\end{prop}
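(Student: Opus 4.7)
The plan is to establish global optimality of $\mX^\star$ among rank-$\leq r$ matrices directly from the second-order Taylor expansion of $f$ around $\mX^\star$, using the vanishing gradient (Assumption 1) to kill the first-order term and the restricted strong convexity (Assumption 2) to control the second-order term. The careful bookkeeping is entirely about rank: every matrix that appears as either the base point of the Hessian or the direction along which the Hessian bilinear form is evaluated must be covered by the $(2r,4r)$-restricted condition. This is precisely why Assumption~\ref{assump:2} was stated with the pair $(2r,4r)$.

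First, I would fix an arbitrary $\mX\in\R^{n\times m}$ with $\rank(\mX)\leq r$ and consider the one-dimensional function $\varphi(t)=f(\mX^\star+t(\mX-\mX^\star))$ for $t\in[0,1]$. Taylor's theorem with integral remainder gives
\[
f(\mX)=f(\mX^\star)+\langle\nabla f(\mX^\star),\,\mX-\mX^\star\rangle+\int_0^1(1-t)\,[\nabla^2 f(\mX^\star+t(\mX-\mX^\star))](\mX-\mX^\star,\,\mX-\mX^\star)\,dt.
\]
By Assumption~\ref{assump:1}, $\nabla f(\mX^\star)=\mzero$, so the linear term vanishes.

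Next I would verify the two rank bounds needed to invoke~\eqref{eq:RIP like} pointwise in $t$. For each $t\in[0,1]$, the base matrix $\mX^\star+t(\mX-\mX^\star)=(1-t)\mX^\star+t\mX$ is a sum of two matrices of rank at most $r$, hence has rank at most $2r$. Similarly the direction matrix $\mX-\mX^\star$ has rank at most $\rank(\mX)+\rank(\mX^\star)\leq 2r\leq 4r$. Therefore Assumption~\ref{assump:2} applies and yields
\[
[\nabla^2 f(\mX^\star+t(\mX-\mX^\star))](\mX-\mX^\star,\mX-\mX^\star)\geq a\|\mX-\mX^\star\|_F^2
\]
for every $t\in[0,1]$. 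Substituting into the Taylor expansion gives
\[
f(\mX)-f(\mX^\star)\geq a\|\mX-\mX^\star\|_F^2\int_0^1(1-t)\,dt=\frac{a}{2}\|\mX-\mX^\star\|_F^2.
\]

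Since $a>0$, this establishes $f(\mX)\geq f(\mX^\star)$ for all rank-$\leq r$ matrices $\mX$, with strict inequality whenever $\mX\neq\mX^\star$, which is exactly the claim. There is no substantive obstacle; the only thing to be vigilant about is that the argument never applies the restricted strong convexity to a matrix of rank exceeding the promised window, and the constants $2r$ and $4r$ in Assumption~\ref{assump:2} are tailored precisely to accommodate the base point $(1-t)\mX^\star+t\mX$ and the direction $\mX-\mX^\star$ respectively.
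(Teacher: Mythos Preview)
Your proposal is correct and follows essentially the same approach as the paper: Taylor-expand $f$ around $\mX^\star$, use $\nabla f(\mX^\star)=\mzero$ to eliminate the linear term, and invoke the restricted strong convexity on the second-order term after checking that the base point $(1-t)\mX^\star+t\mX$ has rank at most $2r$ and the direction $\mX-\mX^\star$ has rank at most $2r\leq 4r$. The only cosmetic difference is that the paper uses the mean-value (Lagrange) form of the remainder whereas you use the integral form; your version has the minor advantage of yielding the sharp constant $a/2$ explicitly.
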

The proof of \Cref{prop:RIP to unique} is given in Appendix~\ref{sec:prf prop RIP to unique}.
We note that \Cref{prop:RIP to unique} guarantees that $\mX^\star$ is the unique global minimum of \eqref{eq:general problem} and it is expected that solving the factorized problem \eqref{eq:general low rank} also gives $\mX^\star$. \Cref{prop:RIP to unique} differs from \cite{zhu2017GlobalOptimality} in that it only requires $\mX^\star$ as a critical point, while \cite{zhu2017GlobalOptimality} needs $\mX^\star$ as a global minimum of $f$.

Before presenting the main result, we note that if $f$ satisfies \eqref{eq:RIP like} with positive $a$ and $b$ and we rescale $f$ as $f' = \frac{2}{a + b} f$, then $f'$ satisfies
\begin{align*}
\frac{2a}{a + b}\left\|\mD\right\|_F^2 \leq [\nabla^2 f'(\mX)](\mD,\mD) \leq \frac{2b}{a + b} \left\|\mD\right\|_F^2.
\end{align*}
It is clear that $f$ and $f'$ have the same optimization geometry (despite the scaling difference). Let $a' = \frac{2a}{a+ b} = 1 - c$ and $b' = \frac{2a}{a + b} = 1+c$ with $c = \frac{b-a}{a+b}$. We have $0<a'\leq 1 \leq b'$ and $a' + b'= 2$. Thus, throughout the paper and without the generality, we assume
\begin{align}
a = 1-c, \ b = 1 + c, \ c\in[0,1).
\label{eq: a b c}\end{align}

Now let $\mX^\star = \mPhi\mSigma\mPsi^\T = \sum_{i=1}^{r}\sigma_i\vphi_i\vpsi_i^\T$ be a reduced SVD of $\mX^\star$, where $\mSigma$ is a diagonal matrix with $\sigma_1\geq \cdots \geq \sigma_r$ along its diagonal. Denote \begin{align}
\mU^\star = \mPhi\mSigma^{1/2}\mR, \mV^\star = \mPsi\mSigma^{1/2}\mR
\label{eq:define U Vstar}\end{align}
for any $\mR\in\calO_r$. We first introduce the following ways to stack $\mU$ and $\mV$ together that are widely used through the paper:
\[
\mW = \begin{bmatrix} \mU \\ \mV \end{bmatrix}, \quad \widehat\mW = \begin{bmatrix} \mU \\ -\mV \end{bmatrix}, \mW^\star = \begin{bmatrix} \mU^\star \\ \mV^\star \end{bmatrix}, \quad \widehat\mW^\star = \begin{bmatrix} \mU^\star \\ -\mV^\star \end{bmatrix}.
\]

Before moving on, we note that for any solution $(\mU,\mV)$ to \eqref{eq:factored problem no regularizer}, $(\mU\mR_1,\mV\mR_2)$ is also a solution to \eqref{eq:factored problem no regularizer} for any $\mR_1,\mR_2\in\R^{r\times r}$ such that $\mU\mR_1\mR_2^\T\mV^\T = \mU\mV^\T$. As an extreme example, $\mR_1 = c \mId$ and $\mR_2 = \frac{1}{c}\mId$ where $c$ can be arbitrarily large. In order to address this ambiguity (i.e., to reduce the search space of $\mW$ for \eqref{eq:low rank fact no regu}), we utilize the trick in \cite{tu2015low,park2016non,wang2016unified,zhu2017GlobalOptimality} by introducing a regularizer $\rho$ and turn to solve the following problem
\begin{equation}
\minimize_{\mU\in\R^{n\times r},\mV\in\R^{m\times r}} G(\mW) := h(\mW) + \rho(\mW),
\label{eq:general low rank}\end{equation}
where
\[
\rho(\mW): = \frac{\mu}{4} \left\|\mU^\T\mU - \mV^\T \mV\right\|_F^2.
\]
We remark that $\mW^\star$ is still a global minimizer of the factored problem \eqref{eq:low rank approx fact} since both the first term and $\rho(\mW)$ achieve their global minimum at $\mW^\star$. The regularizer $\rho(\mW)$ is applied to force the difference between the Gram matrices of $\mU$ and $\mV$ as small as possible. The global minimum of $\rho(\mW)$ is $0$, which is achieved when $\mU$ and $\mV$ have the same Gram matrices, i.e., when $\mW$ belongs to
\begin{align}\label{eq:set of balanced factors}
\calE: = \left\{\mW = \begin{bmatrix} \mU \\ \mV \end{bmatrix}: \mU^\T\mU - \mV^\T \mV =  \mzero\right\}.
\end{align}
Informally, we can view \eqref{eq:general low rank} as finding a point from $\calE$ that also minimizes the first term in \eqref{eq:general low rank}. This is rigorously established in the following result which reveals that any critical point $\mW$ of $g(\mW)$ belongs to $\calE$ (that is $\mU$ and $\mV$ are balanced factors of their product $\mU\mV^\T$) for any $\mu>0$.
\begin{lem}\label{lem:critical point balanced}\cite[Theorem 3]{zhu2017GlobalOptimality}
Suppose $G(\mW)$ is defined as in \eqref{eq:general low rank} with $\mu>0$. Then any critical point $\mW$ of $G(\mW)$ belongs to $\calE$, i.e.,
\begin{align}
\nabla G(\mW) = \mzero \quad \Rightarrow \quad \mU^\T\mU=\mV^\T\mV.
\label{eq:critical point balanced}\end{align}
\end{lem}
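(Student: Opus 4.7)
The plan is to work directly with the first-order stationarity conditions $\nabla_\mU G = \mzero$ and $\nabla_\mV G = \mzero$, and to eliminate the awkward factor $\nabla f(\mU\mV^\T)$ by exploiting the symmetric structure induced by $\rho$. Setting $\mN := \nabla f(\mU\mV^\T)$ and $\mD := \mU^\T\mU - \mV^\T\mV$ (which is symmetric by construction), a routine differentiation of $G(\mW) = f(\mU\mV^\T) + \tfrac{\mu}{4}\|\mU^\T\mU - \mV^\T\mV\|_F^2$ yields the two partial-gradient identities
$$\mN\mV + \mu\,\mU\mD \;=\; \mzero, \qquad \mN^\T\mU - \mu\,\mV\mD \;=\; \mzero.$$

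The next step is to left-multiply the first identity by $\mU^\T$ and the second by $\mV^\T$, then transpose the latter (using symmetry of $\mD$ and of $\mV^\T\mV$). Subtracting the two resulting equations cancels the coupling term $\mU^\T\mN\mV$ and leaves the clean Sylvester-type relation
$$\mU^\T\mU\,\mD + \mD\,\mV^\T\mV = \mzero.$$
This is the key intermediate identity: all dependence on the (possibly complicated) function $f$ has vanished, leaving only the Gram matrices $\mU^\T\mU,\mV^\T\mV \succeq \mzero$ together with the symmetric $\mD$.

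Finally, to extract $\mD = \mzero$, I would take the Frobenius inner product of this identity with $\mD$ itself. Using cyclicity of the trace, it collapses to $\|\mU\mD\|_F^2 + \|\mV\mD\|_F^2 = 0$, which forces both $\mU\mD = \mzero$ and $\mV\mD = \mzero$. Left-multiplying these by $\mU^\T$ and $\mV^\T$ respectively and subtracting then yields $\mD^2 = (\mU^\T\mU - \mV^\T\mV)\mD = \mzero$, and since $\mD$ is symmetric this forces $\mD = \mzero$, which is exactly the claim $\mW \in \calE$. The main point requiring care is precisely this last step: a generic Sylvester equation $\mA\mD + \mD\mB = \mzero$ with PSD $\mA,\mB$ does not by itself imply $\mD = \mzero$ because the spectra of $\mA$ and $-\mB$ share the eigenvalue $0$, so one might worry that nontrivial solutions supported on the common null spaces could exist. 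The symmetry of $\mD$ (inherited from its definition) combined with the trace argument above is exactly what is needed to rule out those degenerate solutions, and no hypothesis beyond $\mu>0$ is required.
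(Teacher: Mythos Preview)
Your proof is correct and follows essentially the same architecture as the paper's: both write out the two partial-gradient equations, eliminate the opaque term $\nabla f(\mU\mV^\T)$ by combining them, and arrive at the identical Sylvester-type identity $\mU^\T\mU\,\mD + \mD\,\mV^\T\mV = \mzero$ with $\mD = \mU^\T\mU - \mV^\T\mV$.

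The only difference is in the endgame. The paper expands this identity to $(\mU^\T\mU)^2 = (\mV^\T\mV)^2$ and then invokes uniqueness of the PSD square root to conclude $\mU^\T\mU = \mV^\T\mV$. Your route instead pairs the identity with $\mD$ in the Frobenius inner product to obtain $\|\mU\mD\|_F^2 + \|\mV\mD\|_F^2 = 0$, hence $\mU\mD = \mV\mD = \mzero$, and then $\mD^2 = \mzero$, which by symmetry of $\mD$ forces $\mD = \mzero$. Your finish is slightly more self-contained, since it avoids citing the (standard but nontrivial) PSD square-root uniqueness theorem; the paper's finish is marginally shorter once that fact is taken for granted. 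Either way, the substance is the same.
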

For completeness, we include the proof of Lemma~\ref{lem:critical point balanced} in Appendix~\ref{sec:proof critical point balanced}.

\subsection{Global geometry for general low-rank optimization}
\label{sec:robust strict saddle for general}
We now characterize the global optimization geometry of the factored problem \eqref{eq:general low rank}. As explained in Section~\ref{sec:Preliminaries} that $G(\mW)$ is invariant under the matrices $\mR\in\setO_r$,  we first recall the discussions in Section~\ref{sec:Preliminaries} about the revised robust strict saddle property for the invariant functions. To that end, we follow the notion of the distance between equivalent classes for invariant functions  defined in \eqref{eq:dist with group} and define the distance between $\mW_1$ and $\mW_2$ as follows
\begin{equation}\begin{split}
\dist(\mW_1,\mW_2):&= \min_{\mR_1\in\calO_r,\mR_2\in\calO_r}\left\|\mW_1\mR_1 - \mW_2\mR_2\right\|_F\\
&= \min_{\mR\in\calO_r}\left\|\mW_1- \mW_2\mR\right\|_F.
\end{split}\label{eq:distance W1 and W2}\end{equation}
For convenience, we also denote the best rotation matrix $\mR$ so that $\left\|\mW_1- \mW_2\mR\right\|_F$ achieves its minimum by $\mR(\mW_1,\mW_2)$, i.e.,
\begin{align}
\mR(\mW_1,\mW_2) : = \arg\min_{\mR'\in\calO_r}\left\|\mW_1- \mW_2\mR'\right\|_F,
\label{eq:procrustes}\end{align}
which is also known as the orthogonal Procrustes problem~\cite{higham1995matrix}. The solution to the above minimization problem is characterized by the following lemma.
\begin{lem}\cite{higham1995matrix}
Let $\mW_2^\T\mW_1 = \mL\mS\mP^\T$ be an SVD of $\mW_2^\T\mW_1$. An optimal solution for the orthogonal Procrustes problem~\eqref{eq:procrustes} is given by
\begin{align*}
\mR(\mW_1,\mW_2)  = \mL\mP^\T.
\end{align*}
Moreover, we have
\begin{align*}
\mW_1^\T\mW_2\mR(\mW_1,\mW_2) & = \left(\mW_2\mR(\mW_1,\mW_2)\right)^\T\mW_1\\
 &= \mP\mS\mP^\T\succeq \mzero.
\end{align*}
\label{lem:Procrustes problem}\end{lem}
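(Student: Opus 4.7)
The plan is to proceed by the standard route for Procrustes-type problems: reduce the minimization of the squared Frobenius distance to a linear trace maximization, and then use the SVD of $\mW_2^\T\mW_1$ together with a one-line inequality to identify the maximizer.

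First, I would expand
\begin{equation*}
\|\mW_1-\mW_2\mR\|_F^2 = \|\mW_1\|_F^2 - 2\trace(\mR^\T\mW_2^\T\mW_1) + \|\mW_2\mR\|_F^2.
\end{equation*}
Because $\mR\in\calO_r$ satisfies $\mR\mR^\T=\mId$, the cyclic property of the trace gives $\|\mW_2\mR\|_F^2 = \trace(\mR^\T\mW_2^\T\mW_2\mR) = \|\mW_2\|_F^2$, which is independent of $\mR$. Thus solving the Procrustes problem~\eqref{eq:procrustes} is equivalent to maximizing $\trace(\mR^\T\mW_2^\T\mW_1)$ over $\mR\in\calO_r$.

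Next, I would substitute the SVD $\mW_2^\T\mW_1 = \mL\mS\mP^\T$ and use the cyclic property again to rewrite
\begin{equation*}
\trace(\mR^\T\mW_2^\T\mW_1) = \trace(\mR^\T\mL\mS\mP^\T) = \trace(\mS\,\mQ),\qquad \mQ := \mP^\T\mR^\T\mL.
\end{equation*}
Since $\mQ$ is a product of orthogonal matrices it is itself orthogonal, so every diagonal entry satisfies $|\mQ[i,i]|\leq 1$. Because $\mS$ is diagonal with nonnegative entries $\sigma_i\geq 0$, we get
\begin{equation*}
\trace(\mS\,\mQ) = \sum_{i=1}^r \sigma_i\,\mQ[i,i] \leq \sum_{i=1}^r \sigma_i = \trace(\mS),
\end{equation*}
with equality attained when $\mQ=\mId$, i.e.\ when $\mR = \mL\mP^\T$. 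This identifies $\mR(\mW_1,\mW_2) = \mL\mP^\T$ as an optimal solution.

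Finally, for the ``moreover'' claim I would plug $\mR = \mL\mP^\T$ back in:
\begin{equation*}
\mW_1^\T\mW_2\,\mR(\mW_1,\mW_2) = (\mL\mS\mP^\T)^\T\,\mL\mP^\T = \mP\mS\mL^\T\mL\mP^\T = \mP\mS\mP^\T,
\end{equation*}
and similarly $(\mW_2\mR(\mW_1,\mW_2))^\T\mW_1 = \mP\mL^\T\mL\mS\mP^\T = \mP\mS\mP^\T$; positive semidefiniteness follows from $\sigma_i\geq 0$. There is no real obstacle here: the only subtle point is the bound $|\mQ[i,i]|\leq 1$ for orthogonal $\mQ$ (which follows from each column having unit norm) together with $\sigma_i\geq 0$, both of which are elementary. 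The proof requires no appeal to any auxiliary machinery beyond the SVD and basic trace identities.
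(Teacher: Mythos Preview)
Your proof is correct and complete. The paper itself does not include a proof of this lemma; it simply states the result and cites \cite{higham1995matrix}. Your argument is exactly the standard one found in that reference: reduce to the trace maximization, substitute the SVD, and bound the trace using $|\mQ[i,i]|\leq 1$ for orthogonal $\mQ$.
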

To ease the notation, we drop $\mW_1$ and $\mW_2$ in $\mR(\mW_1,\mW_2)$ and rewrite $\mR$ instead of $\mR(\mW_1,\mW_2)$ when they ($\mW_1$ and $\mW_2$) are clear from the context.  Now we are well equipped to present the robust strict saddle property for $G(\mW)$ in the following result.
\begin{thm}\label{thm:robust strict saddle property general}
Define the following regions
\begin{align*}
\calR_1:&=\left\{\mW: \dist(\mW,\mW^\star)\leq \sigma_r^{1/2}(\mX^\star)\right\},
\end{align*}
\begin{align*}
\calR_2:&=\bigg\{\mW: \sigma_r(\mW)\leq \sqrt{\frac{1}{2}}\sigma_r^{1/2}(\mX^\star), \\ &\quad\quad\|\mW\mW^\T\|_F\leq \frac{20}{19} \|\mW^\star\mW^{\star\T}\|_F\bigg\},
\end{align*}
\begin{align*}
\calR_3'&:=\bigg\{\mW: \dist(\mW,\mW^\star)> \sigma_r^{1/2}(\mX^\star),  \|\mW\|\leq \frac{20}{19} \|\mW^\star\| ,\\
 &\sigma_r(\mW)>\sqrt{\frac{1}{2}}\sigma_r^{1/2}(\mX^\star) , \|\mW\mW^\T\|_F\leq \frac{20}{19} \|\mW^\star\mW^{\star\T}\|_F\bigg\},
 \end{align*}
 \begin{align*}
\calR_3'':&=\bigg\{\mW:\|\mW\|> \frac{20}{19} \|\mW^\star\| = \frac{20}{19}\sqrt{2}\|\mX^\star\|^{1/2},\\
 &\quad\quad\|\mW\mW^\T\|_F\leq \frac{10}{9} \|\mW^\star\mW^{\star\T}\|_F\bigg\},
 \end{align*}
 \begin{align*}
\calR_3''':&=\left\{\mW:\|\mW\mW^\T\|_F> \frac{10}{9} \|\mW^\star\mW^{\star\T}\|_F = \frac{20}{9}\|\mX^\star\|_F\right\}.
\end{align*}
Let $G(\mW)$ be defined as in \eqref{eq:general low rank} with $\mu = \frac{1}{2}$. Suppose $f(\mX)$ has a critical point $\mX^\star\in\R^{n\times m}$ of rank $r$ and satisfies the $(2r,4r)$-restricted  strong convexity and smoothness condition \eqref{eq:RIP like} with positive constants $a = 1-c,b = 1+c$  and
\begin{align}\label{eq:thm RIP condition}
c\le {\frac{1}{100}}\frac{\sigma_r^{3/2}(\mX^\star)}{\|\mX^\star\|_F\|\mX^\star\|^{1/2}}.
\end{align}
 Then $G(\mW)$ has the following robust strict saddle property:
\begin{enumerate}
\item For any $\mW\in\calR_1$, $G(\mW)$ satisfies the local regularity condition:
\begin{equation}
\begin{split}
&\left\langle \nabla G(\mW), \mW - \mW^\star\right\rangle\\
& \geq {\frac{1}{16}}\sigma_r(\mX^\star)\dist^2(\mW,\mW^\star) +  {\frac{1}{260}} \frac{1}{\|\mX^\star\|}\|\nabla G(\mW)\|_F^2.
\end{split}
\label{eq:thm sensing regularity condition}
\end{equation}
where $\dist(\mW,\mW^\star)$ and $\mR$ are defined in \eqref{eq:distance W1 and W2} and \eqref{eq:procrustes}, respectively.
\item For any $\mW\in\calR_2$, $G(\mW)$ has a directional negative curvature, i.e.,
    \begin{align}
\lambda_{\min}\left(\nabla^2G(\mW)\right) \le {-\frac{1}{6}}\sigma_r(\mX^\star).
\label{eq:thm sensing negative curvature}\end{align}
\item For any $\mW\in\calR_3=\calR_3'\cup \calR_3''\cup \calR_3'''$, $G(\mW)$ has large gradient:
    \begin{align}
    &\|\nabla G(\mW)\|_F \ge \revise{ \frac{1}{50}}\sigma_r^{3/2}(\mX^\star), \quad \forall~ \mW\in\calR_3'; \label{eq:thm sensing large gradient 1}\\
        &\|\nabla G(\mW)\|_F\ge {\frac{1}{50}} \|\mW\|^3, \quad \forall~ \mW\in\calR_3''; \label{eq:thm sensing large gradient 2}\\
    &\|\nabla G(\mW)\|_F \ge {\frac{1}{45}} \left\| \mW\mW^\T\right\|_F^{3/2}, \quad \forall~ \mW\in\calR_3'''. \label{eq:thm sensing large gradient 3}
    \end{align}
\end{enumerate}
\end{thm}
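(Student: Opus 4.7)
The plan is to reduce the analysis of $G(\mW)$ to the corresponding statements already established in Appendix~\ref{sec:low rank factorization} for the pure matrix factorization surrogate
\[
\widetilde G(\mW)\;:=\;\frac{1}{2}\|\mU\mV^\T-\mX^\star\|_F^2 \;+\;\frac{\mu}{4}\|\mU^\T\mU-\mV^\T\mV\|_F^2,
\]
and then control the discrepancy between $G$ and $\widetilde G$ using the $(2r,4r)$-restricted strong convexity/smoothness hypothesis. The key observation is that Assumption~\ref{assump:2} with $a=1-c$, $b=1+c$ implies the polarization inequality
\[
\bigl|[\nabla^2 f(\mZ)](\mA,\mB)-\langle \mA,\mB\rangle\bigr|\;\le\;c\,\|\mA\|_F\|\mB\|_F
\]
for every $\mZ$ of rank $\le 2r$ and every $\mA,\mB$ whose sum has rank $\le 4r$. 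Integrating along the segment from $\mX^\star$ to $\mU\mV^\T$ and using $\nabla f(\mX^\star)=\mzero$ from Assumption~\ref{assump:1}, this yields the pointwise bound
\[
\bigl|\langle \nabla f(\mU\mV^\T)-(\mU\mV^\T-\mX^\star),\,\mD\rangle\bigr|\;\le\;c\,\|\mU\mV^\T-\mX^\star\|_F\|\mD\|_F
\]
whenever $\rank(\mD)\le 2r$. Since $\nabla h(\mW)$ and each quadratic form $[\nabla^2 h(\mW)](\mD,\mD)$ are built from such inner products with rank-$\le 2r$ arguments (the factors $\Delta_\mU\mV^\T+\mU\Delta_\mV^\T$ and $\Delta_\mU\Delta_\mV^\T$), we obtain
\[
\nabla G(\mW)\;=\;\nabla \widetilde G(\mW)+\mE_1(\mW),\qquad
\nabla^2 G(\mW)\;=\;\nabla^2 \widetilde G(\mW)+\mE_2(\mW),
\]
where the size of $\mE_1$ and $\mE_2$ is governed by $c$ and by $\|\mU\mV^\T-\mX^\star\|_F\le \|\mW\|\,\|\mW-\mW^\star\mR\|_F+\|\mW^\star\|\,\|\mW-\mW^\star\mR\|_F$ (or simply by $\|\mW\mW^\T\|_F+\|\mW^\star\mW^{\star\T}\|_F$ in regions where $\mW$ is large).

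With these two identities in hand, I would proceed region by region, importing the corresponding statements about $\widetilde G$ from Appendix~\ref{sec:low rank factorization}. In $\calR_1$, the regularity inequality for $\widetilde G$ gives $\langle \nabla \widetilde G(\mW),\mW-\mW^\star\mR\rangle\gtrsim \sigma_r(\mX^\star)\dist^2(\mW,\mW^\star)+\frac{1}{\|\mX^\star\|}\|\nabla \widetilde G(\mW)\|_F^2$; the perturbation $\mE_1$ contributes at most $c\,\|\mW\|\,\dist(\mW,\mW^\star)\cdot\|\mW-\mW^\star\mR\|_F$, which is absorbed into the right-hand side provided \eqref{eq:thm RIP condition} holds. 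In $\calR_2$, the appendix supplies a direction $\mD$ (typically built from the bottom singular vectors of $\mW$ together with a direction aligned with $\mX^\star$) along which $[\nabla^2\widetilde G(\mW)](\mD,\mD)\lesssim -\sigma_r(\mX^\star)\|\mD\|_F^2$; the Hessian perturbation $\mE_2$ scales like $c\,\|\mW\mW^\T\|_F$, and the constraint $\|\mW\mW^\T\|_F\le\frac{20}{19}\|\mW^\star\mW^{\star\T}\|_F$ combined with \eqref{eq:thm RIP condition} again keeps this strictly below $\sigma_r(\mX^\star)$. In each of the three subregions of $\calR_3$, the factorized gradient $\nabla \widetilde G(\mW)$ is already large: in $\calR_3'$ this follows from the lower bound of the appendix giving $\|\nabla\widetilde G(\mW)\|_F\gtrsim \sigma_r^{3/2}(\mX^\star)$ because the three ``easy'' control parameters $\sigma_r(\mW)$, $\|\mW\|$, $\|\mW\mW^\T\|_F$ are all bounded; in $\calR_3''$ and $\calR_3'''$ the appendix produces lower bounds of order $\|\mW\|^3$ and $\|\mW\mW^\T\|_F^{3/2}$ respectively, driven by the cubic growth of $\nabla_\mU\widetilde G$ in $\mU,\mV$. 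The perturbation $\|\mE_1(\mW)\|_F\le c\,\|\mW\mW^\T-\mX^\star\|_F\,\|\mW\|$ is of strictly lower order than each of these (once \eqref{eq:thm RIP condition} is imposed), and is therefore absorbed.

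The main obstacle, and where the proof requires the most care, is the bookkeeping in $\calR_3'$: there the gradient is only of order $\sigma_r^{3/2}(\mX^\star)$, yet $\|\mW\mW^\T-\mX^\star\|_F$ and $\|\mW\|$ can each be as large as $O(\|\mX^\star\|^{1/2}\|\mX^\star\|_F^{1/2})$, so the perturbation can be as large as $c\,\|\mX^\star\|^{1/2}\|\mX^\star\|_F$; this is precisely what forces the sharp scaling $c\lesssim \sigma_r^{3/2}(\mX^\star)/(\|\mX^\star\|_F\|\mX^\star\|^{1/2})$ in \eqref{eq:thm RIP condition}. Once this bound is in place, the same threshold automatically suffices for $\calR_1$ and $\calR_2$. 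The contribution of the regularizer $\rho(\mW)$ is identical in $G$ and $\widetilde G$ and so needs no extra perturbation analysis; it is handled exactly as in the appendix (in particular via Lemma~\ref{lem:critical point balanced} and the identity $\nabla\rho(\mW)=\mu\,\widehat\mW\widehat\mW^\T\mW$, which keeps the balance $\mU^\T\mU\approx\mV^\T\mV$ firm enough that $\sigma_r^2(\mW)\asymp 2\sigma_r(\mU\mV^\T)$ throughout the relevant regions).
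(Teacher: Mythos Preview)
Your proposal is correct and follows essentially the same route as the paper: establish the polarization bound $|[\nabla^2 f(\mZ)](\mA,\mB)-\langle\mA,\mB\rangle|\le c\|\mA\|_F\|\mB\|_F$, integrate along the segment (using $\nabla f(\mX^\star)=\mzero$) to get first- and second-order deviation lemmas between $G$ and the factorization surrogate $g$, and then run region-by-region perturbation off the results of Appendix~\ref{sec:low rank factorization}; you also correctly pinpoint $\calR_3'$ as the region that forces the scaling \eqref{eq:thm RIP condition}. The one refinement the paper adds that you do not mention is in $\calR_1$: rather than bounding $|\langle\nabla G-\nabla g,\mW-\mW^\star\mR\rangle|$ by the crude product $c\,\|\mW\mW^\T-\mW^\star\mW^{\star\T}\|_F\,\|\mW\|\,\dist(\mW,\mW^\star)$, the paper invokes the inequality $\|(\mA-\mB)\mA^\T\|_F^2\le\frac{1}{2(\sqrt2-1)}\|\mA\mA^\T-\mB\mB^\T\|_F^2$ (valid when $\mA^\T\mB\succeq\mzero$) to obtain $|\langle\nabla G-\nabla g,\mW-\mW^\star\mR\rangle|\lesssim c\,\|\mW\mW^\T-\mW^\star\mW^{\star\T}\|_F^2$, which is absorbed directly into the $\|\mW\mW^\T-\mW^\star\mW^{\star\T}\|_F^2$ term of the factorization regularity inequality and yields the $\calR_1$ conclusion under the spectrum-free condition $c\le\tfrac{1}{50}$.
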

The proof of this result is given in Appendix~\ref{sec:prf robust strict saddle property sensing}. The main proof strategy is to utilize \Cref{assump:1} and \Cref{assump:2} about the function $f$ to control the deviation between the gradient (and the Hessian) of the general low-rank optimization \eqref{eq:general low rank} and the counterpart of the matrix factorization problem so that the landscape of the general low-rank optimization \eqref{eq:general low rank} has a similar geometry property. To that end, in Appendix~\ref{sec:low rank factorization}, we provide a comprehensive geometric analysis for the matrix factorization problem \eqref{eq:low rank fact no regu}. The reason for choosing $\mu=\frac{1}{2}$ is also discussed in Appendix~\ref{sec:robust strict saddle property PCA}.
We note that the results in  Appendix~\ref{sec:low rank factorization} are also of independent interest, as we show  that the objective function in~\eqref{eq:low rank fact no regu} obeys the strict saddle property and has no spurious local minima not only for exact-parameterization ($r = \rank(\mX^\star)$), but also for over-parameterization ($r > \rank(\mX^\star)$) and under-parameterization ($r < \rank(\mX^\star)$). Several remarks follow.

\begin{remark}
Note that
\begin{align*}
\calR_1\cup \calR_2 \cup \calR_3' \supseteq\bigg\{ &\mW: \|\mW\|\leq \frac{20}{19} \|\mW^\star\|_F,\\
 &\|\mW\mW^\T\|_F\leq \frac{10}{9} \|\mW^\star\mW^{\star\T}\|_F\bigg\},
\end{align*}
which further implies
\begin{align*}
\calR_1\cup \calR_2 \cup \calR_3'\cup\calR_3'' \supseteq\{\mW:  \|\mW\mW^\T\|_F\leq \frac{10}{9} \|\mW^\star\mW^{\star\T}\|_F\}.
\end{align*}
Thus, we conclude that $\calR_1\cup\calR_2\cup\calR_3'\cup\calR_3''\cup\calR_3''' = \R^{(n+m)\times r}$. Now the convergence analysis of the stochastic gradient descent algorithm in \cite{ge2015escaping,jin2017escape} for the robust strict saddle functions also holds for $G(\mW)$.
\end{remark}

\begin{remark}
\Cref{thm:robust strict saddle property general} states that the objective function for the general low-rank optimization~\eqref{eq:general low rank} also satisfies the robust strict saddle property when~\eqref{eq:thm RIP condition} holds. The requirement for $c$ in \eqref{eq:thm RIP condition} can be weakened to ensure the properties of $g(\mW)$ are preserved for $G(\mW)$ in some regions. For example, the local regularity condition \eqref{eq:thm sensing regularity condition} holds when
\[
c\leq \frac{1}{50}
\]
which is independent of $\mX^\star$. With the analysis of the global geometric structure in $G(\mW)$, Theorem~\ref{thm:robust strict saddle property sensing} ensures that many local search algorithms can converge to $\mX^\star$ (which is the the global minimum of \eqref{eq:general problem} as guaranteed by \Cref{prop:RIP to unique}) with random initialization. In particular, stochastic gradient descent when applied to the matrix sensing problem~\eqref{eq:sensing fact} is guaranteed to find the global minimum $\mX^\star$ in polynomial time.
\end{remark}
\begin{remark}
Local (rather than global) geometry results for the general low-rank optimization \eqref{eq:general low rank} are also covered in \cite{zhu2017GlobalOptimality}, which only characterizes the geometry at all the critical points. Instead, \Cref{thm:robust strict saddle property general} characterizes the global geometry for general low-rank optimization \eqref{eq:general low rank}. Because the analysis is different, the proof strategy for \Cref{thm:robust strict saddle property general} is also very different than that of~\cite{zhu2017GlobalOptimality}. Since \cite{zhu2017GlobalOptimality} only considers local geometry, the result in \cite{zhu2017GlobalOptimality} requires $c\leq 0.2$, which is slightly less restrictive than the one in \eqref{eq:thm RIP condition}.
\end{remark}

\begin{remark}
To explain the necessity of the requirement on the constants $a$ and $b$ in \eqref{eq:thm RIP condition}, we utilize the symmetric weighted PCA problem (so that we can visualize the landscape of the factored problem in Figure~\ref{fig:spurious local minima}) as an example where the objective function is
\begin{align}
f(\mX)=\frac{1}{2}\|\mOmega\odot(\mX-\mX^\star)\|_F^2,
\label{eq:weighted PCA}\end{align}
where $\mOmega\in\R^{n\times n}$ contains positive entries. The Hessian quadratic form for $f(\mX)$ is given by $[\nabla^2 f(\mX)](\mD,\mD)=\|\mOmega\odot \mD\|_F^2$ for any $\mD\in\R^{n\times n}$. Thus, we have
\begin{align*}
\min_{ij}|\mOmega[i,j]|^2 \leq \frac{[\nabla^2 f(\mX)](\mD,\mD)}{\|\mD\|_F^2}\leq \max_{ij}|\mOmega[i,j]|^2.
\end{align*}
Comparing with~\eqref{eq:RIP like}, we see that $f$ satisfies the restricted strong convexity and smoothness conditions with the constants $a = \min_{ij}|\mOmega[i,j]|^2$ and $b = \max_{ij}|\mOmega[i,j]|^2$. In this case, we also note that if each entry $W_{ij}$ is nonzero (i.e., $\min_{ij}|\mOmega[i,j]|^2>0$), the function $f(\mX)$ is strongly convex, rather than only restrictively strongly convex,  implying that \eqref{eq:weighted PCA}  has  a unique optimal solution $\mX^\star$.
By applying the factorization approach, we get the factored objective function
\begin{align}
h(\mU) = \frac{1}{2}\|\mOmega\odot(\mU\mU^\T-\mX^\star)\|_F^2.
\label{eq:weighted PCA factored}\end{align}
To illustrate the necessity of the requirement on the constants $a$ and $b$ as in \eqref{eq:thm RIP condition} so that the factored problem~\eqref{eq:weighted PCA factored} has no spurious local minima and obeys the robust strict saddle property, we set $\mX^\star = \begin{bmatrix} 1 & 1\\1 &1 \end{bmatrix}$ which is a rank-$1$ matrix and can be factorized as $\mX^\star = \mU^\star\mU^{\star\T}$ with $\mU^\star = \begin{bmatrix} 1 \\1\end{bmatrix}$. We then plot the landscapes of the factored objective function $h(\mU)$  with $\mOmega = \begin{bmatrix} 1 & 1\\1 &1 \end{bmatrix}$ and $\begin{bmatrix} 8 & 1\\1 &8 \end{bmatrix}$ in Figure \ref{fig:spurious local minima}. We observe from Figure \ref{fig:spurious local minima} that as long as the elements in $\mOmega$ have a small dynamic range (which corresponds to a small $b/a$), $h(\mU)$ has no spurious local minima, but if the elements in $\mOmega$ have a large dynamic range (which corresponds to a large $b/a$),  spurious local minima can appear in $h(\mU)$.

\begin{figure}[htb!]
\begin{minipage}{0.48\linewidth}
\centerline{
\includegraphics[width=1.5in]{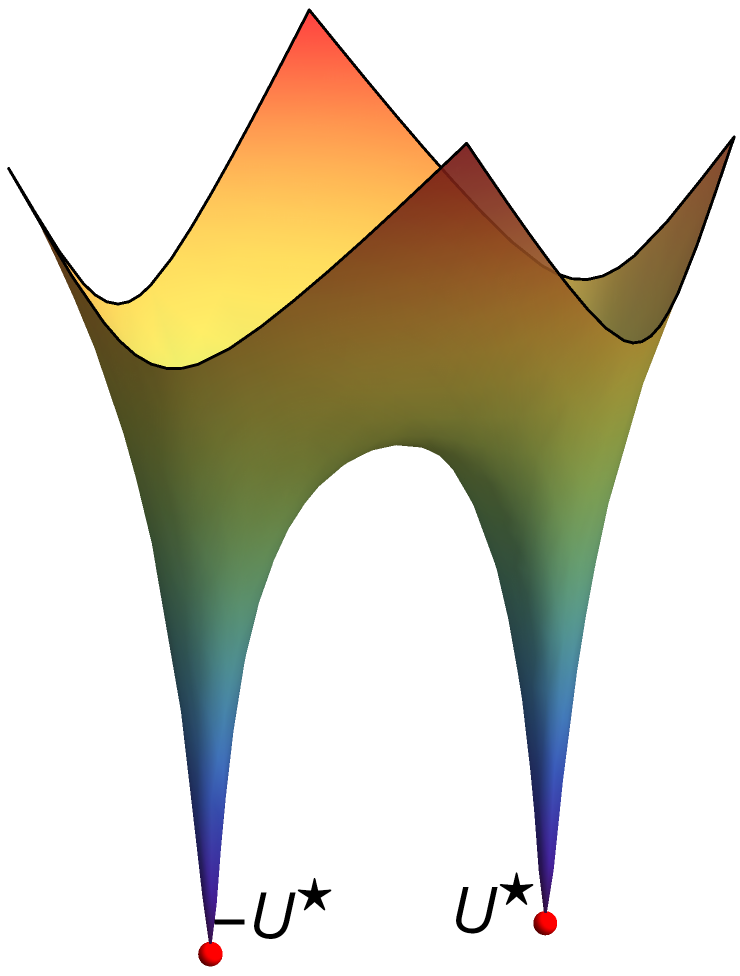}}
\centering{(a)}
\end{minipage}
\hfill
\begin{minipage}{0.48\linewidth}
\centerline{
\includegraphics[width=1.5in]{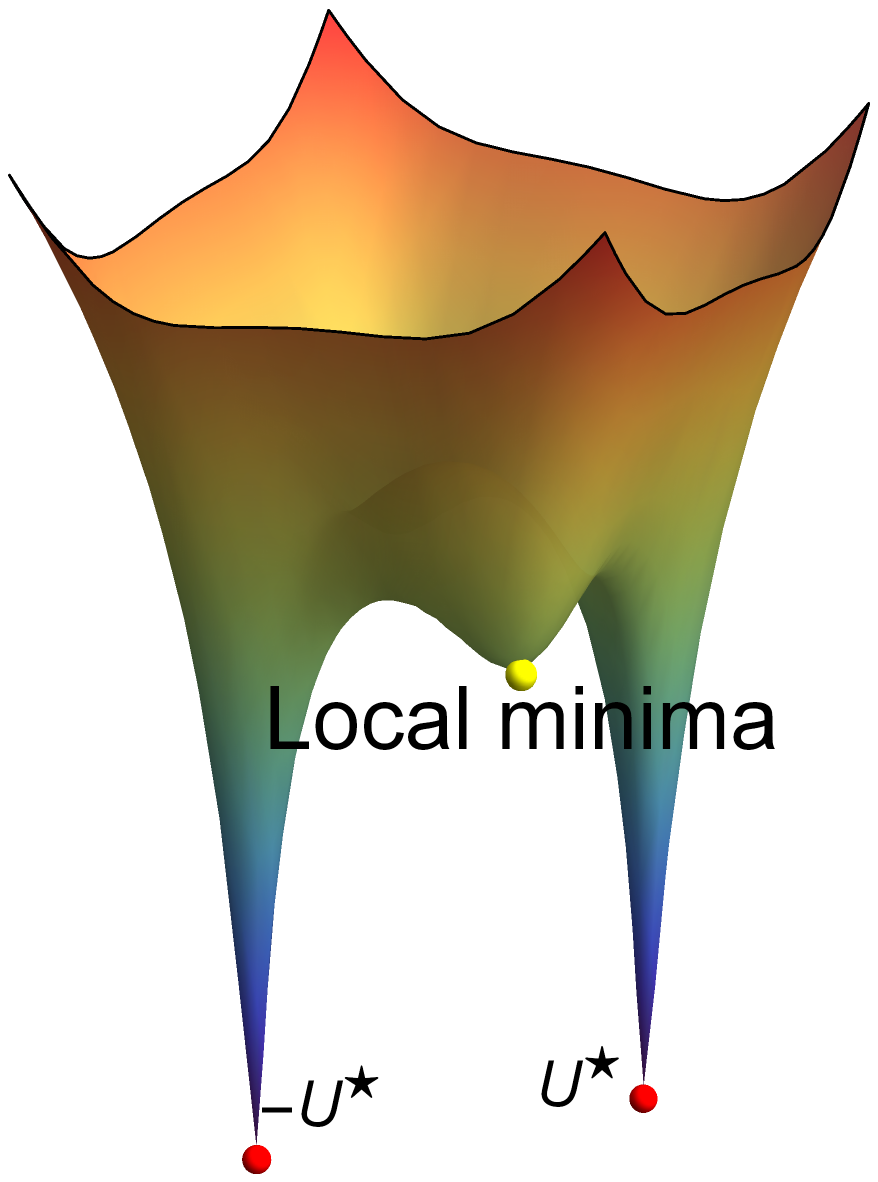}}
\centering{(b)}
\end{minipage}
\caption[Foo content]{Landscapes of $h(\mU)$ in \eqref{eq:weighted PCA factored} with $\mX^\star = \begin{bmatrix} 1 & 1\\1 &1 \end{bmatrix}$ and (a) $\mOmega = \begin{bmatrix} 1 & 1\\1 &1 \end{bmatrix}$; (b) $\mOmega = \begin{bmatrix} 8 & 1\\1 &8 \end{bmatrix}$.
}
\label{fig:spurious local minima}
\end{figure}
\end{remark}

\begin{remark}
The global geometry of low-rank matrix recovery but with analysis customized to linear measurements and quadratic loss functions is also covered in \cite{li2019symmetry,ge2017no}. Since  \Cref{thm:robust strict saddle property general} only requires the $(2r,4r)$-restricted strong convexity and smoothness property~\eqref{eq:RIP like}, aside from low-rank matrix recovery~\cite{candes2011tight}, it can also be applied to many other low-rank matrix optimization problems~\cite{udell2014generalized} which do not necessarily involve quadratic loss functions. Typical examples include 1-bit matrix completion~\cite{davenport20141,cai2013max} and Poisson principal component analysis (PCA)~\cite{salmon2014poisson}. We refer to~\cite{zhu2017GlobalOptimality} for more discussion on this issue. In next section, we consider a stylized application of \Cref{thm:robust strict saddle property general} in matrix sensing and compare it with the result in \cite{li2019symmetry}.
\end{remark}

\subsection{Stylized application: Matrix sensing}\label{sec:matrix sensing}
In this section, we extend the previous geometric analysis to the matrix sensing problem
\begin{equation}\begin{split}
\minimize_{\mU\in\R^{n\times r},\mV\in\R^{m\times r}} G(\mW) &:= \frac{1}{2}\left\|\calA\left(\mU\mV^\T - \mX^\star\right)\right\|_2^2 + \rho(\mW),
\end{split}\label{eq:sensing fact}\end{equation}
where $\calA:\R^{n\times m}\rightarrow \R^p$ is a known linear measurement operator and $\mX^\star$ is the unknown rank $r$ matrix to be recovered. In this case, we have
\[
f(\mX) = \frac{1}{2}\left\|\calA\left(\mX - \mX^\star\right)\right\|_2^2.
\]

The derivative of $f(\mX)$ at $\mX^\star$ is
\[
\nabla f(\mX^\star) = \calA^*\calA(\mX^\star - \mX^\star) = \mzero,
\]
which implies that $f(\mX)$ satisfies \Cref{assump:1}. The Hessian quadratic form $\nabla^2 f(\mX)[\mD,\mD]$ for any $n\times m$ matrices $\mX$ and $\mD$ is given by
\[
 \nabla^2 f(\mX)[\mD,\mD] = \left\|\calA(\mD)\right\|^2.
\]
The following matrix Restricted Isometry Property (RIP) serves as a way to link the low-rank matrix factorization problem~\eqref{eq:low rank approx fact} with the matrix sensing problem~\eqref{eq:sensing fact} and certifies $f(\mX)$ satisfying \Cref{assump:2}.
\begin{defi}[Restricted Isometry Property (RIP)~\cite{candes2005decoding,recht2010guaranteed}] The map $\calA:\R^{n\times m}\rightarrow \R^p$ satisfies the $r$-RIP with constant $\delta_r$ if~\footnote{By abuse of notation, we adopt the conventional notation $\delta_r$ for the RIP constant. The subscript $r$ can be used to distinguish the RIP constant $\delta_r$  from $\delta$ which is used as a small constant in Section~\ref{sec:Preliminaries}.}
\begin{align}\label{eq:RIP}
\left(1 - \delta_r\right) \left\|\mD\right\|_F^2 \leq \left\|\calA(\mD)\right\|^2\leq \left(1+\delta_r\right) \left\|\mD\right\|_F^2
\end{align}
holds for any $n\times m$ matrix $\mD$ with $\rank(\mD)\leq r$.
\end{defi}

If $\calA$ satisfies the $4r$-restricted isometry property with constant $\delta_{4r}$, then $f(\mX)$ satisfies the $(2r,4r)$-restricted strong convexity and smoothness condition~\eqref{eq:RIP like} with constants $a = 1-\delta_{4r}$ and $b = 1-\delta_{4r}$ since
\begin{equation}\begin{split}
(1-\delta_{4r})\left\|\mD\right\|_F^2 &\leq \nabla^2 f(\mX)[\mD,\mD] = \left\|\calA(\mD)\right\|^2\\ &\leq (1+\delta_{4r})\left\|\mD\right\|_F^2
\end{split}\label{eq:RIP consequence}\end{equation}
for any rank-$4r$ matrix $\mD$. Comparing \eqref{eq:RIP consequence} with \eqref{eq:RIP like}, we note that the RIP is stronger than the restricted strong convexity and smoothness property \eqref{eq:RIP like} as the RIP gives that \eqref{eq:RIP consequence} holds for all $n\times m$ matrices $\mX$, while \Cref{assump:2} only requires that \eqref{eq:RIP like} holds for all rank-$2r$ matrices.

Now, applying \Cref{thm:robust strict saddle property general}, we obtain a similar geometric guarantee to \Cref{thm:robust strict saddle property general} for the matrix sensing problem~\eqref{eq:sensing fact} when $\calA$  satisfies the RIP.
\begin{cor}\label{thm:robust strict saddle property sensing}
Let $\calR_1,\calR_2,\calR_3',\calR_3'',\calR_3'''$ be the regions as defined in Theorem~\ref{thm:robust strict saddle property}.
Let $G(\mW)$ be defined as in \eqref{eq:sensing fact} with $\mu = \frac{1}{2}$ and $\calA$ satisfying the $4r$-RIP with
\begin{align}\label{eq:cor RIP condition}
\delta_{4r} \le {\frac{1}{100}} \frac{\sigma_r^{3/2}(\mX^\star)}{\|\mX^\star\|_F\|\mX^\star\|^{1/2}}.
\end{align}
 Then $G(\mW)$ has the following robust strict saddle property:
\begin{enumerate}
\item For any $\mW\in\calR_1$, $G(\mW)$ satisfies the local regularity condition:
\begin{equation}\begin{split}
&\left\langle \nabla G(\mW), \mW - \mW^\star\right\rangle\\
& \geq {\frac{1}{16}}\sigma_r(\mX^\star)\dist^2(\mW,\mW^\star) + {\frac{1}{260}} \frac{1}{\|\mX^\star\|}\|\nabla G(\mW)\|_F^2.
\end{split}\label{eq:cor sensing regularity condition}\end{equation}
where $\dist(\mW,\mW^\star)$ and $\mR$ are defined in \eqref{eq:distance W1 and W2} and \eqref{eq:procrustes}, respectively.
\item For any $\mW\in\calR_2$, $G(\mW)$ has a directional negative curvature, i.e.,
    \begin{align*}
\lambda_{\min}\left(\nabla^2G(\mW)\right) \le {-\frac{1}{6}}\sigma_r(\mX^\star).
\end{align*}
\item For any $\mW\in\calR_3=\calR_3'\cup \calR_3''\cup \calR_3'''$, $G(\mW)$ has large gradient:
    \begin{align*}
    &\|\nabla G(\mW)\|_F \ge \revise{ \frac{1}{50}}\sigma_r^{3/2}(\mX^\star), \quad \forall~ \mW\in\calR_3'; \\
    &\|\nabla G(\mW)\|_F\ge{\frac{1}{50}} \|\mW\|^3, \quad \forall~ \mW\in\calR_3''; \\
    &\|\nabla G(\mW)\|_F \ge{\frac{1}{45}} \left\| \mW\mW^\T\right\|_F^{3/2}, \quad \forall~ \mW\in\calR_3'''.   \end{align*}
\end{enumerate}
\end{cor}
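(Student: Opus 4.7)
The plan is to derive \Cref{thm:robust strict saddle property sensing} as an immediate specialization of \Cref{thm:robust strict saddle property general}, after verifying that the matrix sensing loss $f(\mX) = \frac{1}{2}\|\calA(\mX - \mX^\star)\|^2$ fits into the hypotheses of that theorem. In other words, the strategy is simply to check \Cref{assump:1} and \Cref{assump:2} for this specific $f$, translate the RIP constant $\delta_{4r}$ into the rescaled constant $c$ of \eqref{eq: a b c}, and then invoke \Cref{thm:robust strict saddle property general} verbatim.

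First I would verify \Cref{assump:1}: since $\nabla f(\mX^\star) = \calA^*\calA(\mX^\star - \mX^\star) = \mzero$ and $\rank(\mX^\star)=r$ by assumption, $\mX^\star$ is a rank-$r$ critical point of $f$. Next I would verify \Cref{assump:2} by computing the Hessian bilinear form, which for this quadratic loss is $[\nabla^2 f(\mX)](\mD,\mD) = \|\calA(\mD)\|^2$ for every $\mX,\mD\in\R^{n\times m}$. Applying the $4r$-RIP \eqref{eq:RIP} to any $\mD$ of rank at most $4r$ yields
\begin{equation*}
(1-\delta_{4r})\|\mD\|_F^2 \;\leq\; [\nabla^2 f(\mX)](\mD,\mD) \;\leq\; (1+\delta_{4r})\|\mD\|_F^2,
\end{equation*}
which is precisely \eqref{eq:RIP like} with $a = 1-\delta_{4r}$ and $b = 1+\delta_{4r}$. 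Note that the RIP gives this bound for every $\mX$, which is strictly stronger than what \Cref{assump:2} requires (namely for $\rank(\mX)\leq 2r$). In the rescaled convention \eqref{eq: a b c}, the constant $c$ therefore coincides with $\delta_{4r}$.

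With these two verifications, the RIP hypothesis \eqref{eq:cor RIP condition} translates directly into the scalar hypothesis \eqref{eq:thm RIP condition} of \Cref{thm:robust strict saddle property general}, and applying that theorem yields the three conclusions of the corollary verbatim: the local regularity condition on $\calR_1$, the directional negative curvature on $\calR_2$, and the large-gradient bounds on $\calR_3'$, $\calR_3''$, and $\calR_3'''$. The regularizer $\rho(\mW)$ and the choice $\mu = \tfrac{1}{2}$ are inherited unchanged from the general setting.

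I do not expect any substantive obstacle, since the entire argument is a reduction. The only bookkeeping concerns are making sure the RIP constants $(1-\delta_{4r},1+\delta_{4r})$ already sit in the rescaled form required by \eqref{eq: a b c} (they do, with average $1$), and checking that the $4r$-RIP on $\calA$ is enough to cover the $(2r,4r)$-restricted strong convexity and smoothness condition — which it is, precisely because \Cref{assump:2} only needs the Hessian bound for $\mD$ of rank at most $4r$, while placing no rank restriction on $\mD$ beyond what the RIP already provides.
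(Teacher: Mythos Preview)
Your proposal is correct and matches the paper's approach exactly: the paper presents this corollary as an immediate consequence of \Cref{thm:robust strict saddle property general}, having already verified (in the paragraphs preceding the corollary) that $\nabla f(\mX^\star)=\mzero$ and that the $4r$-RIP yields the $(2r,4r)$-restricted strong convexity and smoothness condition with $a=1-\delta_{4r}$, $b=1+\delta_{4r}$, so that $c=\delta_{4r}$.
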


\begin{remark}
Similar to \eqref{eq:thm RIP condition}, the requirement for $\delta_{4r}$ in \eqref{eq:cor RIP condition} can be weakened to ensure the properties of $g(\mW)$ are preserved for $G(\mW)$ in some regions. For example, the local regularity condition \eqref{eq:cor sensing regularity condition} holds when
\[
\delta_{4r}\leq \frac{1}{50}
\]
which is independent of $\mX^\star$. Note that Tu et al.~\cite[Section 5.4, (5.15)]{tu2015low} provided a similar regularity condition. However, the result there requires $\delta_{6r}\leq \frac{1}{25}$ and $\dist(\mW,\mW^\star)\leq \frac{1}{2\sqrt{2}}\sigma_r(\mX^\star)$ which defines a smaller region than $\calR_1$. Based on this local regularity condition, Tu et al.~\cite{tu2015low} showed that gradient descent with a good initialization (which is close enough to $\mW^\star$) converges to the unknown matrix $\mW^\star$ (and hence $\mX^\star$). With the analysis of the global geometric structure in $G(\mW)$, \Cref{thm:robust strict saddle property sensing} ensures that many local search algorithms can find the unknown matrix $\mX^\star$ in polynomial time.
\end{remark}

\begin{remark} A Gaussian $\calA$ will  have the RIP with high probability when the number of measurements $p$ is comparable to the number of degrees of freedom in an $n\times m$ matrix with rank $r$. By Gaussian $\calA$ we mean the $\ell$-th element in $\vy = \calA(\mX)$, $y_\ell$, is given by
\[
y_\ell = \left\langle \mX, \mA_\ell \right\rangle = \sum_{i=1}^n\sum_{j=1}^m \mX[i,j]\mA_\ell[i,j],
\]
where the entries of each $n\times m$ matrix $\mA_\ell$ are independent and identically distributed normal random variables with zero mean and variance $\frac{1}{p}$. Specifically, a Gaussian $\calA$ satisfies~\eqref{eq:RIP} with high probability when~\cite{candes2011tight,davenport2016overview,recht2010guaranteed}
\begin{align*}
p \gtrsim r(n+m)\frac{1}{\delta_{r}^2}.
\end{align*}
Now utilizing the inequality $\|\mX^\star\|_F\leq \sqrt{r}\|\mX^\star\|$ for \eqref{eq:thm RIP condition}, we conclude that in the case of Gaussian measurements, the robust strict saddle property is preserved for the matrix sensing problem with high probability when the number of measurements exceeds a constant times $(n+m)r^2\kappa(\mX^\star)^3$
where $\kappa(\mX^\star) = \frac{\sigma_1(\mX^\star)}{\sigma_r(\mX^\star)}$. This further implies that, when applying the stochastic gradient descent algorithm to the matrix sensing problem~\eqref{eq:sensing fact} with Gaussian measurements, we are guaranteed to find the unknown matrix $\mX^\star$ in polynomial time with high probability when
\begin{align}
p \gtrsim (n+m)r^2\kappa(\mX^\star)^3.
\label{eq:number of measurement}\end{align}
When $\mX^\star$ is an $n\times n$ PSD matrix, Li et al.~\cite{li2019symmetry}  showed that the corresponding matrix sensing problem with Gaussian measurements has similar global geometry to the low-rank PSD matrix factorization problem when the number of measurements
\begin{align}
p \gtrsim nr^2\frac{\sigma_1^4(\mX^\star)}{\sigma_r^2(\mX^\star)}.
\label{eq:number of measurement PSD}\end{align}
Comparing \eqref{eq:number of measurement} with  \eqref{eq:number of measurement PSD}, we find both results for the number of measurements needed  depend similarly on the rank $r$, but slightly differently on the spectrum of $\mX^\star$. We finally remark that the sampling complexity in \eqref{eq:number of measurement} is $O((n+m)r^2)$, which is slightly larger than the information theoretically optimal bound $O((n+m)r)$ for matrix sensing. This is because \Cref{thm:robust strict saddle property sensing} is a direct consequence of \Cref{thm:robust strict saddle property general} in which we directly characterize the landscapes of the objective functions in the whole space by combining the results for matrix factorization in Appendix~\ref{sec:low rank factorization} and the restricted strong convexity and smoothness condition. We believe this mismatch is an artifact of our proof strategy and could be mitigated by a different approach, like utilizing the properties of quadratic loss functions~\cite{ge2017no}. If one desires only to characterize the geometry for critical points, then
$O((n+m)r)$ measurements are enough to ensure the strict saddle property and lack of spurious local minima for matrix sensing~\cite{park2016non,zhu2017GlobalOptimality}. {We finally note that for matrix completion where the RIP is not satisfied, \cite{ge2017no} proves the robust strict saddle property for the factorization approach by utilizing an additional regularizer which promotes incoherence of $\mW$.}
\end{remark}

\appendices

\section{The optimization geometry of low-rank matrix factorization}
\label{sec:low rank factorization}
In this appendix, we consider the low-rank matrix factorization problem
\begin{equation}
\minimize_{\mU\in\R^{n\times r},\mV\in\R^{m\times r}} g(\mW) := \frac{1}{2}\left\|\mU\mV^\T - \mX^\star\right\|_F^2 + \rho(\mW)
\label{eq:low rank approx fact}\end{equation}
where $\rho(\mW)$ is the regularizer used in \eqref{eq:general low rank} and repeated here:
\[
\rho(\mW) = \frac{\mu}{4} \left\|\mU^\T\mU - \mV^\T \mV\right\|_F^2.
\]
We provide a comprehensive geometric analysis for the matrix factorization problem \eqref{eq:low rank approx fact}. In particular, we show  that the objective function in~ \eqref{eq:low rank approx fact} obeys the strict saddle property and has no spurious local minima not only for exact-parameterization ($r = \rank(\mX^\star)$), but also for over-parameterization ($r > \rank(\mX^\star)$) and under-parameterization ($r < \rank(\mX^\star)$). For the exact-parameterization case, we further show that the objective function satisfies the robust strict saddle property, ensuring global convergence of many local search algorithms in polynomial time. As we believe these results  are also of independent interest and to make it easy to follow, we only present the main results in this appendix and defer the proofs to other appendices.

\subsection{Relationship to PSD low-rank matrix factorization}
Similar to \eqref{eq:define U Vstar}, let $\mX^\star = \mPhi\mSigma\mPsi^\T = \sum_{i=1}^{r}\sigma_i\vphi_i\vpsi_i^\T$ be a reduced SVD of $\mX^\star$, where $\mSigma$ is a diagonal matrix with $\sigma_1\geq \cdots \geq \sigma_r$ along its diagonal, and denote $\mU^\star = \mPhi\mSigma^{1/2}\mR, \mV^\star = \mPsi\mSigma^{1/2}\mR$ for any $\mR\in\calO_r$.
The following result to some degree characterizes the relationship between the nonsymmetric low-rank matrix factorization problem~\eqref{eq:low rank approx fact} and the following PSD low-rank matrix factorization problem~\cite{li2019symmetry}:
\begin{align}
\minimize_{\mU\in\R^{n\times r}} \left\|\mU\mU^\T - \mM\right\|_F^2,
\label{eq:symmetric low rank approx}\end{align}
where $\mM\in\R^{n\times n}$ is a rank-$r$ PSD matrix.

\begin{lem}\label{lem:nonsymmetric to symmetric}
Suppose $g(\mW)$ is defined as in \eqref{eq:low rank approx fact} with $\mu>0$. Then we have
\begin{align*}
g(\mW) \geq \min\{\frac{\mu}{4},\frac{1}{8}\}\left\|\mW\mW^\T - \mW^\star\mW^{\star\T}\right\|_F^2.
\end{align*}
In particular, if we choose $\mu = \frac{1}{2}$, then we have
\begin{align*}
g(\mW) = \frac{1}{8}\left\|\mW\mW^\T - \mW^\star\mW^{\star\T}\right\|_F^2 + \frac{1}{4} \left\|\mU^\T\mU^\star - \mV^\T\mV^\star\right\|_F^2.
\end{align*}
\end{lem}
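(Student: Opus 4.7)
The plan is to establish the exact identity for $\mu = \tfrac{1}{2}$ first, and then derive the general lower bound as a consequence. For the identity, I would write out both sides and show they match termwise after expansion. The starting observation is that, from the definitions $\mU^\star = \mPhi\mSigma^{1/2}\mR$ and $\mV^\star = \mPsi\mSigma^{1/2}\mR$, we have the balanced condition
\[
\mU^{\star\T}\mU^\star \;=\; \mV^{\star\T}\mV^\star \;=\; \mR^\T\mSigma\mR,
\]
so that $\|\mU^{\star\T}\mU^\star\|_F^2 = \|\mV^{\star\T}\mV^\star\|_F^2 = \|\mSigma\|_F^2 = \|\mX^\star\|_F^2$. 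A second key identity obtained by cyclicity of the trace is
\[
\langle \mU^\T\mU^\star,\, \mV^\T\mV^\star\rangle \;=\; \trace(\mU^{\star\T}\mU\mV^\T\mV^\star) \;=\; \langle \mU\mV^\T,\, \mU^\star\mV^{\star\T}\rangle \;=\; \langle \mU\mV^\T,\, \mX^\star\rangle.
\]

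Next I would multiply the proposed identity by $8$ so that it reads
\[
4\|\mU\mV^\T - \mX^\star\|_F^2 + \|\mU^\T\mU - \mV^\T\mV\|_F^2 \;=\; \|\mW\mW^\T - \mW^\star\mW^{\star\T}\|_F^2 + 2\|\mU^\T\mU^\star - \mV^\T\mV^\star\|_F^2,
\]
and expand each side. On the right, the block structure of $\mW\mW^\T - \mW^\star\mW^{\star\T}$ gives
\[
\|\mW\mW^\T - \mW^\star\mW^{\star\T}\|_F^2 \;=\; \|\mU\mU^\T - \mU^\star\mU^{\star\T}\|_F^2 + 2\|\mU\mV^\T - \mX^\star\|_F^2 + \|\mV\mV^\T - \mV^\star\mV^{\star\T}\|_F^2.
\]
Writing $\|\mU\mU^\T - \mU^\star\mU^{\star\T}\|_F^2 = \|\mU^\T\mU\|_F^2 - 2\|\mU^\T\mU^\star\|_F^2 + \|\mU^{\star\T}\mU^\star\|_F^2$ (and similarly for $\mV$), the terms $\pm 2\|\mU^\T\mU^\star\|_F^2$ and $\pm 2\|\mV^\T\mV^\star\|_F^2$ will cancel against the expansion of $2\|\mU^\T\mU^\star - \mV^\T\mV^\star\|_F^2$, leaving only the cross term $-4\langle \mU^\T\mU^\star, \mV^\T\mV^\star\rangle$, which by the trace identity above equals $-4\langle \mU\mV^\T, \mX^\star\rangle$. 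On the left, $\|\mU^\T\mU - \mV^\T\mV\|_F^2 = \|\mU^\T\mU\|_F^2 + \|\mV^\T\mV\|_F^2 - 2\|\mU\mV^\T\|_F^2$ (using $\langle \mU^\T\mU,\mV^\T\mV\rangle = \|\mU\mV^\T\|_F^2$), and $4\|\mU\mV^\T - \mX^\star\|_F^2$ contributes $2\|\mX^\star\|_F^2$ via $\|\mU^{\star\T}\mU^\star\|_F^2 + \|\mV^{\star\T}\mV^\star\|_F^2 = 2\|\mX^\star\|_F^2$. Matching terms confirms the identity.

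For the inequality, I would split on $\mu$. When $\mu \geq \tfrac{1}{2}$, the identity (with the final term being manifestly nonnegative) gives $g(\mW) \geq g_{1/2}(\mW) \geq \tfrac{1}{8}\|\mW\mW^\T - \mW^\star\mW^{\star\T}\|_F^2$, which matches $\min\{\mu/4,1/8\} = 1/8$. When $0 < \mu \leq \tfrac{1}{2}$, the same identity (dropping the nonnegative $\tfrac{1}{4}\|\mU^\T\mU^\star - \mV^\T\mV^\star\|_F^2$ term) yields the useful bound
\[
\|\mW\mW^\T - \mW^\star\mW^{\star\T}\|_F^2 \;\leq\; 4\|\mU\mV^\T - \mX^\star\|_F^2 + \|\mU^\T\mU - \mV^\T\mV\|_F^2.
\]
Then, because $\tfrac{1}{2} \geq \mu$,
\[
g(\mW) \;\geq\; \mu\|\mU\mV^\T - \mX^\star\|_F^2 + \tfrac{\mu}{4}\|\mU^\T\mU - \mV^\T\mV\|_F^2 \;=\; \tfrac{\mu}{4}\bigl(4\|\mU\mV^\T - \mX^\star\|_F^2 + \|\mU^\T\mU - \mV^\T\mV\|_F^2\bigr) \;\geq\; \tfrac{\mu}{4}\|\mW\mW^\T - \mW^\star\mW^{\star\T}\|_F^2,
\]
matching $\min\{\mu/4,1/8\} = \mu/4$ in this regime.

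The only real obstacle is bookkeeping in the expansion step; there is no conceptual difficulty, but carefully tracking which cross terms cancel (particularly the $\pm 2\|\mU^\T\mU^\star\|_F^2$ pair and the identity $\|\mU^{\star\T}\mU^\star\|_F^2 + \|\mV^{\star\T}\mV^\star\|_F^2 = 2\|\mX^\star\|_F^2$) is where an arithmetic slip would be easy to make.
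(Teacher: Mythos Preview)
Your proposal is correct and follows essentially the same approach as the paper: both arguments hinge on the algebraic identity (for $\mu=\tfrac{1}{2}$) obtained by expanding $\|\mW\mW^\T-\mW^\star\mW^{\star\T}\|_F^2$ in block form and using $\mU^{\star\T}\mU^\star=\mV^{\star\T}\mV^\star$, together with the nonnegative remainder $\tfrac{1}{2}\|\mU^\T\mU^\star-\mV^\T\mV^\star\|_F^2$. The only cosmetic difference is ordering---the paper first pulls out $\min\{\mu,\tfrac{1}{2}\}$ and then proves the identity, whereas you prove the identity first and then do the case split on $\mu$---but the content is the same.
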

The proof of Lemma~\ref{lem:nonsymmetric to symmetric} is given in Appendix~\ref{sec:proof lem nonsymmetric to symmetric}. Informally, Lemma~\ref{lem:nonsymmetric to symmetric} indicates that minimizing $g(\mW)$ also results in minimizing $\left\|\mW\mW^\T - \mW^\star\mW^{\star\T}\right\|_F^2$ (which is the same form as the objective function in \eqref{eq:symmetric low rank approx}) and hence the distance between $\mW$ and $\mW^\star$ (though $\mW^\star$ is unavailable at priori). The global geometry for the PSD low-rank matrix factorization problem~\eqref{eq:symmetric low rank approx} is recently analyzed by Li et al. in \cite{li2019symmetry}.

\subsection{Characterization of critical points}
We first provide the gradient and Hessian expression for $g(\mW)$.  The gradient of $g(\mW)$ is given by
\begin{align*}
&\nabla_{\mU}g(\mU,\mV) = (\mU\mV^\T - \mX^\star)\mV + \mu\mU(\mU^\T\mU - \mV^\T\mV),\\
&\nabla_{\mV}g(\mU,\mV) = (\mU\mV^\T - \mX^\star)^\T\mU - \mu\mV(\mU^\T\mU - \mV^\T\mV),
\end{align*}
which can be rewritten as
\begin{align*}
\nabla g(\mW) = \begin{bmatrix} (\mU\mV^\T - \mX^\star)\mV \\ (\mU\mV^\T - \mX^\star)^\T\mU \end{bmatrix} +  \mu \widehat\mW\widehat\mW^\T\mW.
\end{align*}
Standard computations give the Hessian quadratic form  $[\nabla^2 g(\mW)](\mDelta,\mDelta)$ for any $\mDelta = \begin{bmatrix} \mDelta_{\mU}\\ \mDelta_{\mV}\end{bmatrix}\in\R^{(n+m)\times r}$ (where $\mDelta_{\mU}\in\R^{n\times r}$ and $\mDelta_{\mV}\in\R^{m\times r}$) as
\begin{equation}\begin{split}
&[\nabla^2g(\mW)](\mDelta,\mDelta) \\&= \left\|\mDelta_{\mU}\mV^\T+ \mU\mDelta_{\mV}^\T\right\|_F^2  + 2\left\langle \mU\mV^\T - \mX^\star,\mDelta_{\mU}\mDelta_{\mV}^\T \right\rangle \\ &\quad + [\nabla^2\rho(\mW)](\mDelta,\mDelta),
\end{split}
\label{eq:hessian}\end{equation}
where
\begin{equation}\begin{split}
&[\nabla^2\rho(\mW)](\mDelta,\mDelta)\\
&= \mu\left\langle \widehat\mW^\T\mW,\widehat\mDelta^\T\mDelta \right\rangle + \mu\left\langle \widehat\mW\widehat\mDelta^\T,\mDelta \mW^\T  \right\rangle\\
&\quad + \mu \left\langle \widehat\mW\widehat\mW^\T,\mDelta \mDelta^\T\right\rangle.
\end{split}\label{eq:hessian for regularizer}\end{equation}

By Lemma~\ref{lem:critical point balanced}, we can simplify the equations for critical points as follows
\begin{align}
&\nabla_{\mU}\rho(\mU,\mV) = \mU\mU^\T\mU - \mX^\star\mV = \mzero,\label{eq:proof thm cirtical 3}\\
&\nabla_{\mV}\rho(\mU,\mV) =\mV\mV^\T\mV - \mX^{\star\T}\mU  = \mzero.\label{eq:proof thm cirtical 4}
\end{align}

Now suppose $\mW$ is a critical point of $g(\mW)$. We can apply the Gram-Schmidt process to orthonormalize the columns of $\mU$ such that $\widetilde \mU = \mU\mR$, {where $\mR\in\calO_r=\left\{\mR\in\R^{r\times r}, \mR^\T\mR = \mId\right\}$ and $\widetilde \mU$ is orthogonal.}\footnote{{As defined in Section~\ref{sec:notation}, by orthogonal we mean that $\langle\widetilde \mU[:,i], \widetilde \mU[:,j]\rangle = 0$ for all $i\neq j$. The columns of $\widetilde \mU$ are not required to be normalized, and could even be zero. Also, another way to find $\mR$ is via the SVD. Let $\mU = \mL\mSigma\mR^\T$ be a reduced SVD of $\mU$, where $\mL$ is an $n\times r$ orthonormal matrix, $\mSigma$ is an $r\times r$ diagonal matrix with non-negative diagonals, and $\mR\in\calO_r$. Then $\widetilde \mU = \mU \mR = \mL \mSigma$ is orthogonal, with possible zero columns.}} Also let $\widetilde \mV = \mV\mR$. Since $\mU^\T\mU = \mV^\T\mV$, we have $\widetilde \mU^\T\widetilde \mU = \widetilde \mV^\T\widetilde \mV$. Thus $\widetilde \mV$ is also orthogonal. Noting that $\mU\mV^\T = \widetilde \mU\widetilde \mV^\T$, we conclude that $g(\mW) = g(\widetilde \mW)$ and $\widetilde \mW$ is also a critical point of $g(\mW)$ since $\nabla_{\widetilde \mU} g(\widetilde\mW) = \nabla_{\mU} g(\mW)\mR = \mzero$ and $\nabla_{\widetilde \mV} g(\widetilde\mW) = \nabla_{\mV}g(\mW)\mR = \mzero$. Also for any $\mDelta\in\R^{(n+m)\times r}$, we have $[\nabla^2 g(\mW)](\mDelta,\mDelta) = [\nabla^2 g(\widetilde\mW)](\mDelta\mR,\mDelta\mR)$, indicating that $\mW$ and $\widetilde\mW$ have the same Hessian information. Thus, without loss of generality, we assume $\mU$ and $\mV$ are orthogonal {(including the possibility that they have zero columns)}. With this, we use $\vu_i$ and $\vv_i$ to denote the $i$-th columns of $\mU$ and $\mV$, respectively. It follows from $\nabla g(\mW) = \mzero$ that
\begin{align*}
\|\vu_i\|^2\vu_i &= \mX^\star \vv_i,\\
\|\vv_i\|^2\vv_i &= \mX^{\star\T} \vu_i.
\end{align*}
\revise{It is clear that $\vu_i = \vzero$ and $\vv_i = \vzero$ satisfy the above equations. For $\vu_i\neq \vzero$, the above equations are equivalent to (noting that $\|\vu_i\| = \|\vv_i\|$ since $\mU^\T\mU = \mV^\T\mV$)
\begin{align*}
\|\vu_i\|^2 \frac{\vu_i}{\|\vu_i\|} &= \mX^\star \frac{\vv_i}{\|\vv_i\|},\\
\|\vu_i\|^2\frac{\vv_i}{\|\vv_i\|} &= \mX^{\star\T} \frac{\vu_i}{\|\vu_i\|},
\end{align*}
which implies that $\frac{\vu_i}{\|\vu_i\|}$ and $\frac{\vv_i}{\|\vv_i\|}$ are respectively the left-singular and right-singular vectors and $\|\vu_i\|^2$ is the corresponding singular value. Thus, we conclude that 
\[
(\vu_i,\vv_i)\in\left\{(\sqrt{\sigma_1}\vphi_1,\sqrt{\sigma_1}\vpsi_1),\ldots,(\sqrt{\sigma_{r}}\vphi_{r},\sqrt{\sigma_{r}}\vpsi_{r}),(\vzero,\vzero)\right\}.
\]
Now we identify all the critical points of $g(\mW)$ in the following lemma.}
\begin{lem}
Let $\mX^\star = \mPhi\mSigma\mPsi^\T = \sum_{i=1}^{r}\sigma_i\vphi_i\vpsi_i^\T$ be a reduced SVD of $\mX^\star$ and $g(\mW)$ be defined as in \eqref{eq:low rank approx fact} with $\mu>0$.
Any $\mW = \begin{bmatrix} \mU \\ \mV \end{bmatrix}$ is a critical point of $g(\mW)$ if and only if $\mW\in\calC$ with
\begin{equation}\begin{split}
\calC: = \bigg\{\mW = \begin{bmatrix} \mU \\ \mV \end{bmatrix}: &\mU = \mPhi\mLambda^{1/2}\mR,\mV = \mPsi\mLambda^{1/2}\mR, \mR\in\calO_r,\\
  & \mLambda \textup{ is diagonal}, \mLambda\geq \mzero, (\mSigma - \mLambda)\mSigma = \mzero\bigg\}.
\end{split}\label{eq:set of critical points}\end{equation}
\label{lem:set of critical points}\end{lem}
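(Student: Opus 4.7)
I substitute the assumed parameterization $\mU = \mPhi\mLambda^{1/2}\mR$ and $\mV = \mPsi\mLambda^{1/2}\mR$ directly into the gradient of $g$. Because $\mR\in\calO_r$ and $\mPhi,\mPsi$ have orthonormal columns, $\mU^\T\mU - \mV^\T\mV = \mR^\T\mLambda\mR - \mR^\T\mLambda\mR = \mzero$, so the regularizer contribution $\mu\widehat\mW\widehat\mW^\T\mW$ vanishes. The two remaining equations $\mU\mU^\T\mU = \mX^\star\mV$ and $\mV\mV^\T\mV = \mX^{\star\T}\mU$ collapse, after multiplying by $\mPhi^\T$ (resp.\ $\mPsi^\T$) on the left and $\mR^\T$ on the right, to the single diagonal identity $\mLambda^{3/2} = \mSigma\mLambda^{1/2}$. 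This is a direct consequence of the stated structural constraint between $\mLambda$ and $\mSigma$ together with $\mLambda\geq\mzero$.

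\textbf{Necessity (only if direction).} The starting point is $\nabla g(\mW)=\mzero$. By Lemma~\ref{lem:critical point balanced}, every critical point satisfies $\mU^\T\mU = \mV^\T\mV$, so the first-order conditions reduce to $\mU\mU^\T\mU = \mX^\star\mV$ and $\mV\mV^\T\mV = \mX^{\star\T}\mU$. I then exploit the right-$\calO_r$ invariance of both the regularizer and the quadratic term: for any $\mR_0\in\calO_r$, the pair $(\mU,\mV)$ satisfies the critical-point equations and admits the target parameterization iff $(\mU\mR_0,\mV\mR_0)$ does. Choosing $\mR_0$ as the right-singular-vector factor from an SVD of $\mU$ rotates the columns of $\mU$ to be pairwise orthogonal (with possibly zero columns corresponding to zero singular values), and $\mU^\T\mU=\mV^\T\mV$ then forces the columns of $\mV$ to be pairwise orthogonal as well, with $\|\vu_i\|=\|\vv_i\|$ for each $i$.

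\textbf{Column-by-column analysis.} With the columns orthogonalized, the matrix equations decouple into $\|\vu_i\|^2\vu_i = \mX^\star\vv_i$ and $\|\vv_i\|^2\vv_i = \mX^{\star\T}\vu_i$. Combining them and using $\|\vu_i\|=\|\vv_i\|$ yields $\|\vu_i\|^4\vu_i = \mX^\star\mX^{\star\T}\vu_i$, so each nonzero $\vu_i$ is an eigenvector of $\mX^\star\mX^{\star\T}$ with eigenvalue $\|\vu_i\|^4$; hence $\|\vu_i\|^2$ coincides with some singular value $\sigma_{j(i)}$ of $\mX^\star$ and $\vu_i/\|\vu_i\|=\pm\vphi_{j(i)}$. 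The first column equation then pins down $\vv_i/\|\vv_i\| = \pm\vpsi_{j(i)}$ with a sign compatible with $\vu_i$. Assembling the per-column data, $\mU$ and $\mV$ take the claimed form for a diagonal $\mLambda\geq\mzero$ and some $\mR'\in\calO_r$; composing $\mR'$ with $\mR_0^\T$ produces the final $\mR\in\calO_r$ for the original critical point, and the diagonal entries of $\mLambda$ inherit the required relation with $\mSigma$.

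\textbf{Main difficulty.} The technical bookkeeping lies in the packaging step. When $\mX^\star$ has repeated singular values the left singular vectors are only determined up to rotations within each eigenspace, and when $\mU$ has zero columns the corresponding columns of the assembling orthogonal matrix are unconstrained by the column analysis. These residual degrees of freedom, together with sign flips and column permutations, must be combined into a single $\mR\in\calO_r$; this is essentially elementary linear algebra but requires care, especially in the over-/under-parameterized regimes where the support of $\mLambda$ need not coincide with that of $\mSigma$.
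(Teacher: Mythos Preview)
Your approach is correct and is in fact the intuitive argument the paper sketches in the paragraph immediately preceding the lemma statement (the column-by-column reduction to $\|\vu_i\|^2\vu_i=\mX^\star\vv_i$, $\|\vv_i\|^2\vv_i=\mX^{\star\T}\vu_i$). The paper then remarks that the lemma is ``formally proved with an algebraic approach'' in the appendix, and that formal proof takes a genuinely different route.

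Rather than orthogonalizing columns and arguing per column, the paper parameterizes from the outset by orthogonally decomposing $\mU=\mPhi\mLambda_1^{1/2}\mR_1+\mE_1$ and $\mV=\mPsi\mLambda_2^{1/2}\mR_2+\mE_2$ with $\mE_1\perp\mPhi$, $\mE_2\perp\mPsi$. Substituting into the reduced critical-point equations and separating the $\mPhi$-component from the $\mE_1$-component yields $\mE_1\mR_1^\T\mLambda_1\mR_1+\mE_1\mE_1^\T\mE_1=\mzero$; taking the Frobenius inner product with $\mE_1$ forces $\mE_1=\mzero$ (and similarly $\mE_2=\mzero$). The remaining equations collapse to the diagonal identities $\mLambda_1^{3/2}=\mSigma\mLambda_2^{1/2}$ and $\mLambda_2^{3/2}=\mSigma\mLambda_1^{1/2}$, giving $\mLambda_1=\mLambda_2$ and $\lambda_i\in\{\sigma_i,0\}$, with $\mR_1=\mR_2$ on the support of $\mLambda$.

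The trade-off: your eigenvector argument is more conceptual, but the ``packaging'' step you flag as the main difficulty (absorbing signs, permutations, eigenspace rotations for repeated $\sigma_i$, and free columns where $\vu_i=\vzero$ into a single $\mR\in\calO_r$) is where the work hides. The paper's decomposition builds $\mLambda$ and $\mR$ into the ansatz from the start, so no reassembly is needed; the price is the slightly opaque positivity trick to kill $\mE_1,\mE_2$. Both are valid; the paper's choice is cleaner precisely in the regime you identified as delicate.
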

Intuitively, \eqref{eq:set of critical points} means that a critical point $\mW$ of $g(\mW)$ is one such that $\mU\mV^\T$ is a rank-$\ell$ approximation to $\mX^\star$ with $\ell\leq r$ and $\mU$ and $\mV$ are equal factors of this rank-$\ell$ approximation. Let $\lambda_1,\lambda_2,\ldots,\lambda_r$ denote the diagonals of $\mLambda$. Unlike $\mSigma$, we note that these diagonals $\lambda_1,\lambda_2,\ldots,\lambda_r$ are not necessarily placed in decreasing or increasing order. Actually, this equation $(\mSigma - \mLambda)\mSigma = \mzero$ is equivalent to
\begin{align*}
\lambda_i \in\left\{\sigma_i,0\right\}
\end{align*}
for all $i\in\{1,2,\ldots,r\}$. Further, we introduce the set of optimal solutions:
\begin{align}
\calX: =\left\{\mW = \begin{bmatrix} \mU \\ \mV \end{bmatrix}: \mU = \mPhi\mSigma^{1/2}\mR,\mV = \mPsi\mSigma^{1/2}\mR, \mR\in\calO_r\right\}.
\label{eq:set of optimal points}\end{align}
It is clear that the set $\calX$ containing all the optimal solutions, the set $\calC$ containing all the critical points and the set $\calE$ containing all the points with balanced factors have the nesting relationship: $\calX\subset\calC\subset\calE$. Before moving to the next section, we provide one more result regarding $\mW\in\calE$. The proof of the following result is given in Appendix~\ref{sec:proof concequence of balanced factors}.
\begin{lem}\label{lem:concequence of balanced factors}
For any $\mDelta = \begin{bmatrix} \mDelta_{\mU}\\ \mDelta_{\mV}\end{bmatrix}\in\R^{(n+m)\times r}$ and $\mW\in\calE$ where $\calE$ is defined in~\eqref{eq:set of balanced factors}, we have
\begin{align}
\|\mDelta_{\mU}\mU^\T\|_F^2 + \|\mDelta_{\mV}\mV^\T\|_F^2 = \|\mDelta_{\mU}\mV^\T\|_F^2 + \|\mDelta_{\mV}\mU^\T\|_F^2,
\label{eq:on off block same energy}\end{align}
and
\begin{align}
\nabla^2 \rho(\mW)\succeq \mzero.
\label{eq:hessian regularizer psd}\end{align}
\end{lem}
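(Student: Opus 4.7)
The plan is to prove both parts by direct trace manipulations, relying only on the defining identity $\mU^\T\mU = \mV^\T\mV$ that characterizes $\mW\in\calE$.

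For \eqref{eq:on off block same energy}, I would rewrite each term via $\|\mA\mB^\T\|_F^2 = \trace(\mA^\T\mA\,\mB^\T\mB)$ (cyclic invariance of trace). Collecting the four resulting traces, the difference of the two sides becomes $\trace\!\bigl((\mDelta_{\mU}^\T\mDelta_{\mU} - \mDelta_{\mV}^\T\mDelta_{\mV})(\mU^\T\mU - \mV^\T\mV)\bigr)$, which vanishes since the second factor is zero on $\calE$. This portion is routine and is essentially just bookkeeping.

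For \eqref{eq:hessian regularizer psd}, the conceptually cleanest route is to note that $\rho(\mW) = \tfrac{\mu}{4}\|\mU^\T\mU - \mV^\T\mV\|_F^2$ is a nonnegative polynomial that attains its global minimum value $0$ precisely on $\calE$. Hence every $\mW\in\calE$ is a smooth global minimizer of $\rho$, and the standard second-order necessary optimality condition immediately yields $\nabla^2\rho(\mW)\succeq\mzero$, with no further computation required.

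To obtain an explicit sum-of-squares certificate (and to double-check the optimality argument), one can substitute $\mW\in\calE$ directly into \eqref{eq:hessian for regularizer}. The first summand vanishes because $\widehat\mW^\T\mW = \mU^\T\mU - \mV^\T\mV = \mzero$. Expanding the remaining two summands in $2\times 2$ blocks and setting $\mA := \mU^\T\mDelta_{\mU}$ and $\mB := \mV^\T\mDelta_{\mV}$ reduces them, via the cyclic property of trace, to $\mu\bigl[\trace\bigl((\mA-\mB)^2\bigr) + \trace\bigl((\mA-\mB)^\T(\mA-\mB)\bigr)\bigr]$. Writing $\mA-\mB = \mS + \mK$ with $\mS$ symmetric and $\mK$ antisymmetric, and using $\trace(\mS\mK)=0$, collapses this expression to $2\mu\|\mS\|_F^2 \geq 0$, a manifest sum of squares. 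The only mildly tedious step is the block expansion; the remaining ingredients are cyclic invariance of trace and the identity $\mU^\T\mU = \mV^\T\mV$ defining $\calE$.
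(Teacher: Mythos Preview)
Your proposal is correct and essentially matches the paper's proof: for \eqref{eq:on off block same energy} the paper uses the same trace identity (in fact it shows the slightly stronger termwise equalities $\|\mDelta_{\mU}\mU^\T\|_F^2=\|\mDelta_{\mU}\mV^\T\|_F^2$ and $\|\mDelta_{\mV}\mV^\T\|_F^2=\|\mDelta_{\mV}\mU^\T\|_F^2$), and for \eqref{eq:hessian regularizer psd} the paper gives both your global-minimum argument (as a footnote) and an explicit sum-of-squares identity $\frac{\mu}{2}\|\widehat\mW^\T\mDelta+\mDelta^\T\widehat\mW\|_F^2$, which is exactly your $2\mu\|\mS\|_F^2$ since $\widehat\mW^\T\mDelta=\mA-\mB$ and $2\mS=(\mA-\mB)+(\mA-\mB)^\T$.
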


\subsection{Strict saddle property}
Lemma~\ref{lem:concequence of balanced factors} implies that the Hessian of $\rho(\mW)$ evaluated at any critical point $\mW$ is PSD, i.e., $
\nabla^2 \rho(\mW)\succeq \mzero $
for all $\mW\in\calC$. Despite this fact, the following result establishes the strict saddle property for $g(\mW)$.

\begin{thm}\label{thm:strict saddle property}
Let $g(\mW)$ be defined as in \eqref{eq:low rank approx fact} with $\mu>0$ and $\rank(\mX^\star) = r$. Let $\mW = \begin{bmatrix} \mU \\ \mV \end{bmatrix}$ be any critical point satisfying $\nabla g(\mW) = \mzero$, i.e., $\mW\in\calC$. Any $\mW\in\calC\setminus\calX$ is a strict saddle of $g(\mW)$ satisfying
\begin{align}
\lambda_{\min}(\nabla^2g(\mW))\leq - \frac{1}{2}\left\|\mW\mW^\T - \mW^\star\mW^{\star\T}\right\|\leq  -\sigma_r(\mX^\star).
\label{eq:strict saddle property}\end{align}
Furthermore, $g(\mW)$ is not strongly convex at any global minimum point $\mW\in\calX$.
\end{thm}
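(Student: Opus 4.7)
The plan is to split the argument into two independent parts, using Lemma~\ref{lem:set of critical points} to parametrize the critical points and then exploiting the rotational symmetry at the global minima.

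For the strict saddle bound, I would first use Lemma~\ref{lem:set of critical points} to write any $\mW\in\calC$ as $\mU=\mPhi\mLambda^{1/2}\mR$, $\mV=\mPsi\mLambda^{1/2}\mR$ with $\lambda_i\in\{0,\sigma_i\}$; since $g$ and its Hessian quadratic form are invariant under $\mW\mapsto\mW\mR^\T$, I may take $\mR=\mId$ without loss of generality. The hypothesis $\mW\notin\calX$ says that some $\lambda_i$ equals $0$ while $\sigma_i>0$, so pick $i^\star\in\arg\max\{\sigma_i:\lambda_i=0\}$ and propose the explicit ``missing-direction'' perturbation
\[
\mDelta \;=\; \begin{bmatrix}\vphi_{i^\star}\\ \vpsi_{i^\star}\end{bmatrix}\ve_{i^\star}^{\T},\qquad \|\mDelta\|_F^2=2.
\]
Plugged into \eqref{eq:hessian}, the first term vanishes because $\mV^\T\vphi_{i^\star}=\lambda_{i^\star}^{1/2}\ve_{i^\star}=\vzero$ and similarly $\mU^\T\vpsi_{i^\star}=\vzero$; the cross term gives $2\langle\mPhi(\mLambda-\mSigma)\mPsi^\T,\vphi_{i^\star}\vpsi_{i^\star}^\T\rangle=2(\lambda_{i^\star}-\sigma_{i^\star})=-2\sigma_{i^\star}$; and the three summands of \eqref{eq:hessian for regularizer} all vanish, because $\widehat{\mW}^\T\mW=\mzero$ on $\calE$ and each of $\mU^\T\mDelta_\mU$, $\mV^\T\mDelta_\mV$, $\mU^\T\vphi_{i^\star}$, $\mV^\T\vpsi_{i^\star}$ is zero in this configuration. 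Therefore
\[
\lambda_{\min}(\nabla^2 g(\mW))\;\le\;\frac{[\nabla^2 g(\mW)](\mDelta,\mDelta)}{\|\mDelta\|_F^2}\;=\;-\sigma_{i^\star}.
\]

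To convert $-\sigma_{i^\star}$ into the claimed bound, I would compute $\mN:=\mW\mW^\T-\mW^\star\mW^{\star\T}=\begin{bmatrix}\mPhi\\ \mPsi\end{bmatrix}(\mLambda-\mSigma)\begin{bmatrix}\mPhi^\T&\mPsi^\T\end{bmatrix}$; since $\begin{bmatrix}\mPhi\\ \mPsi\end{bmatrix}/\sqrt{2}$ has orthonormal columns, the nonzero eigenvalues of $\mN$ are $2(\lambda_i-\sigma_i)$, giving $\|\mN\|=2\max_i|\lambda_i-\sigma_i|=2\sigma_{i^\star}$. This yields $\lambda_{\min}(\nabla^2 g(\mW))\le-\tfrac{1}{2}\|\mN\|$, and since $\rank(\mX^\star)=r$ forces $\sigma_{i^\star}\ge\sigma_r(\mX^\star)$, the full chain $-\tfrac{1}{2}\|\mN\|\le-\sigma_r(\mX^\star)$ follows.

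For the second statement, I would exploit the $\calO_r$-invariance of $g$ directly. Let $\mW\in\calX$ and let $\mS\in\R^{r\times r}$ be any nonzero skew-symmetric matrix; such an $\mS$ exists because the theorem is meaningful only for $r\ge 2$. The curve $\mR(t)=e^{t\mS}\in\calO_r$ satisfies $g(\mW\mR(t))=g(\mW)$ for every $t$. Differentiating twice at $t=0$ and invoking $\nabla g(\mW)=\mzero$ (since $\calX\subset\calC$) gives
\[
0\;=\;\frac{d^2}{dt^2}g(\mW\mR(t))\bigg|_{t=0}\;=\;[\nabla^2 g(\mW)](\mW\mS,\mW\mS).
\]
Because $\mW\in\calX$ has rank $r$ (the diagonal entries of $\mSigma$ are strictly positive), $\mW\mS\neq\mzero$ whenever $\mS\neq\mzero$. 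Thus $\nabla^2 g(\mW)$ has a genuine null direction and cannot be positive definite, so $g$ is not strongly convex at $\mW$.

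The only delicate step is the block-wise bookkeeping needed to verify that $[\nabla^2\rho(\mW)](\mDelta,\mDelta)=0$ for the perturbation chosen in Part~1; all three summands in \eqref{eq:hessian for regularizer} must be handled separately, which I would do by writing $\widehat{\mW}\widehat{\mW}^\T$ and $\mDelta\mDelta^\T$ in $2\times 2$ block form and showing that each block inner product collapses via $\mU^\T\vphi_{i^\star}=\vzero$ and $\mV^\T\vpsi_{i^\star}=\vzero$. Everything else is a short symmetry or orthonormality computation.
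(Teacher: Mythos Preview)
Your argument is correct, and Part~1 is essentially the paper's proof: the paper also selects the index $k=\arg\max_i(\sigma_i-\lambda_i)$ and uses the direction $\mDelta_{\mU}=\vphi_k\valpha^\T$, $\mDelta_{\mV}=\vpsi_k\valpha^\T$ with $\valpha$ any null vector of $\mW^\T\mW$; after your reduction to $\mR=\mId$, your choice $\valpha=\ve_{i^\star}$ is precisely one such null vector (since then $\mW^\T\mW=2\mLambda$ and $\lambda_{i^\star}=0$). One minor slip: the expressions ``$\mV^\T\vphi_{i^\star}$'' and ``$\mU^\T\vpsi_{i^\star}$'' in your sketch are dimensionally ill-formed; the facts you actually use are $\mU\ve_{i^\star}=\mV\ve_{i^\star}=\vzero$ (from $\lambda_{i^\star}=0$) and $\mU^\T\vphi_{i^\star}=\mV^\T\vpsi_{i^\star}=\vzero$.

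For Part~2 the paper takes the same route at bottom but presents it computationally: it writes down the explicit directions $\mDelta^{(i,j)}=\mW^\star(\ve_j\ve_i^\T-\ve_i\ve_j^\T)$ and verifies term by term that every summand of \eqref{eq:hessian} and \eqref{eq:hessian for regularizer} vanishes. Since $\ve_j\ve_i^\T-\ve_i\ve_j^\T$ is skew-symmetric, these are exactly special cases of your $\mW\mS$; your invariance-based derivation $0=\tfrac{d^2}{dt^2}g(\mW e^{t\mS})\big|_{t=0}=[\nabla^2 g(\mW)](\mW\mS,\mW\mS)$ is more conceptual and bypasses the block-by-block bookkeeping. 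Both arguments need $r\ge 2$ to produce a nonzero skew-symmetric $\mS$, a restriction the paper also leaves implicit.
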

The proof of Theorem~\ref{thm:strict saddle property} is given in Appendix~\ref{sec:proof strict saddle property}. We note that this strict saddle property is also covered in \cite[Theorem 3]{zhu2017GlobalOptimality}, but with much looser bounds (in particular, directly applying \cite[Theorem 3]{zhu2017GlobalOptimality} gives $\lambda_{\min}(\nabla^2g(\mW)) \leq -0.1 \sigma_r(\mX^\star)$ rather than $\lambda_{\min}(\nabla^2g(\mW)) \leq - \sigma_{r}(\mX^\star)$ in \eqref{eq:strict saddle property}). Theorem~\ref{thm:strict saddle property} actually implies that $g(\mW)$ has no spurious local minima (since all local minima belong to $\calX$) and obeys the strict saddle property. With the strict saddle property and lack of spurious local minima  for $g(\mW)$,  the recent results \cite{lee2016gradient,panageas2016gradient} ensure that gradient descent converges to a global minimizer almost surely with random initialization.
We also note that Theorem~\ref{thm:strict saddle property} states that $g(\mW)$ is not strongly convex at any global minimum point $\mW\in\calX$ because of the invariance property of $g(\mW)$. This is the reason we introduce the distance in \eqref{eq:distance W1 and W2} and also the robust strict saddle property in \Cref{def:revised robust strict saddle}.
%
%

\subsection{Extension to over-parameterized case: $\rank(\mX^\star)<r$}
In this section, we briefly discuss the over-parameterized scenario where the low-rank matrix $\mX^\star$ has rank smaller than $r$. Similar to Theorem~\ref{thm:strict saddle property}, the following result shows that the strict saddle property also holds in this case.
\begin{thm}\label{thm:strict saddle property over}
Let $\mX^\star = \mPhi\mSigma\mPsi^\T = \sum_{i=1}^{r'}\sigma_i\vphi_i\vpsi_i^\T$ be a reduced SVD of $\mX^\star$ with $r'\leq r$, and let $g(\mW)$ be defined as in \eqref{eq:low rank approx fact} with $\mu>0$.
Any $\mW = \begin{bmatrix} \mU \\ \mV \end{bmatrix}$ is a critical point of $g(\mW)$ if and only if $\mW\in\calC$ with
\begin{align*}
\calC: = \bigg\{\mW = \begin{bmatrix} \mU \\ \mV \end{bmatrix}:& \mU = \mPhi\mLambda^{1/2}\mR,\mV = \mPsi\mLambda^{1/2}\mR, \mR\mR^\T = \mId_{r'},\\
&\mLambda \textup{ is diagonal}, \mLambda\geq \mzero, (\mSigma - \mLambda)\mSigma = \mzero\bigg\}.
\end{align*}
Further, all the local minima (which are also global) belong to the following set
\begin{align*}
\calX =\bigg\{\mW = \begin{bmatrix} \mU \\ \mV \end{bmatrix}: &\mU = \mPhi\mSigma^{1/2}\mR,\mV = \mPsi\mSigma^{1/2}\mR,\\
& \mR\mR^\T = \mId_{r'}\bigg\}.
\end{align*}
Finally, any $\mW\in\calC\setminus\calX$ is a strict saddle of $g(\mW)$ satisfying
\begin{align*}
\lambda_{\min}(\nabla^2g(\mW))\leq -\frac{1}{2}\left\|\mW\mW^\T - \mW^\star\mW^{\star\T}\right\|\leq -\sigma_{r'}(\mX^\star).
\end{align*}
\end{thm}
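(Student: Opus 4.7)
The proof strategy will mirror that of Theorem~\ref{thm:strict saddle property} for the exact-parameterization case, with modifications to account for the $r-r'$ extra degrees of freedom in the factorization. First, by Lemma~\ref{lem:critical point balanced}, any critical point $\mW$ of $g(\mW)$ satisfies $\mU^\T\mU = \mV^\T\mV$. Then, applying Gram--Schmidt (or an SVD-based right rotation) to $\mU$ and using the invariance $g(\mU\mR,\mV\mR) = g(\mU,\mV)$ for $\mR\in\calO_r$, we may assume WLOG that the columns of $\mU$ are mutually orthogonal (some possibly zero); by balancing, so are those of $\mV$, with matching column norms. This reduces the critical-point equations to the column-wise conditions $\|\vu_i\|^2 \vu_i = \mX^\star \vv_i$ and $\|\vv_i\|^2 \vv_i = \mX^{\star\T}\vu_i$.

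Next, I would characterize $\calC$. Each non-zero pair $(\vu_i,\vv_i)$ is forced to equal $(\sqrt{\sigma_k}\vphi_k,\sqrt{\sigma_k}\vpsi_k)$ for some $k\in\{1,\dots,r'\}$. Since non-zero columns are mutually orthogonal and $\mX^\star$ has only $r'$ non-trivial singular modes, at most $r'$ columns of $\mW$ are non-zero, while at least $r-r'$ columns are identically zero. Collecting the non-zero columns gives the parameterization $\mU=\mPhi\mLambda^{1/2}\mR$, $\mV=\mPsi\mLambda^{1/2}\mR$ with $\mLambda\in\R^{r'\times r'}$ diagonal, $\lambda_k\in\{0,\sigma_k\}$, and $\mR\in\R^{r'\times r}$ satisfying $\mR\mR^\T=\mId_{r'}$. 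The set $\calX$ corresponds to $\mLambda=\mSigma$, and one verifies $g(\mW)=0$ precisely on $\calX$, so all points of $\calX$ are global minima.

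For the strict saddle property at any $\mW\in\calC\setminus\calX$, there must exist $k_0\in\{1,\dots,r'\}$ with $\lambda_{k_0}=0$. I would construct the descent direction $\mDelta_{\mU}=\vphi_{k_0}\va^\T$, $\mDelta_{\mV}=\vpsi_{k_0}\va^\T$, with $\va=\mR^\T\ve_{k_0}$. Since $\mR\mR^\T=\mId_{r'}$, $\|\va\|=1$ and $\mLambda^{1/2}\mR\va=\sqrt{\lambda_{k_0}}\ve_{k_0}=\vzero$, so $\mU\va=\mV\va=\vzero$. Plugging into the Hessian formula~\eqref{eq:hessian}: the term $\|\mDelta_{\mU}\mV^\T+\mU\mDelta_{\mV}^\T\|_F^2$ vanishes; a direct computation using $\widehat\mW^\T\mW=\mzero$, $\widehat\mW\widehat\mDelta^\T=\mzero$, and $\mLambda\vphi_{k_0}=\vzero$ in \eqref{eq:hessian for regularizer} shows $[\nabla^2\rho(\mW)](\mDelta,\mDelta)=0$; only $2\langle \mU\mV^\T-\mX^\star,\mDelta_{\mU}\mDelta_{\mV}^\T\rangle=-2\sigma_{k_0}$ survives. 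Normalizing by $\|\mDelta\|_F^2=2$ yields $\lambda_{\min}(\nabla^2 g(\mW))\le -\sigma_{k_0}\le -\sigma_{r'}(\mX^\star)$.

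The main obstacle is the sharper bound $\lambda_{\min}(\nabla^2 g(\mW))\le -\tfrac{1}{2}\|\mW\mW^\T-\mW^\star\mW^{\star\T}\|$, which is potentially stronger than $-\sigma_{r'}$ when many $\lambda_k$ vanish. To handle this, I would replace the single-direction construction above with one built from the top eigenvector $\vq$ (unit norm) of the symmetric matrix $\mW^\star\mW^{\star\T}-\mW\mW^\T$, following the strategy used in the exact-parameterization case; decomposing $\vq$ into its $\mPhi$- and $\mPsi$-blocks and exploiting the fact that $\mW\in\calE$ has $\mU$- and $\mV$-factors of equal Gram matrices allows us to realize a curvature value of $-\tfrac{1}{2}\|\mW\mW^\T-\mW^\star\mW^{\star\T}\|$. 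The lower bound $\|\mW\mW^\T-\mW^\star\mW^{\star\T}\|\ge 2\sigma_{r'}(\mX^\star)$, obtained from the missing singular component in the diagonal and off-diagonal blocks, then closes the chain. Finally, strict negativity of $\lambda_{\min}$ on $\calC\setminus\calX$ rules out local minima there, confirming that all local minima lie in $\calX$ and are global.
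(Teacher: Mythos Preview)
Your proposal is essentially correct and tracks the paper's argument closely: the characterization of $\calC$ via balancing plus the column-wise singular-vector analysis, and the descent direction $\mDelta_{\mU}=\vphi_{k_0}\valpha^\T$, $\mDelta_{\mV}=\vpsi_{k_0}\valpha^\T$ with $\valpha$ in the null space of $\mW$, are exactly what the paper does (its over-parameterized proof simply refers back to the exact-parameterization argument). One small slip: ``$\mLambda\vphi_{k_0}=\vzero$'' is dimensionally off; you mean $\mLambda\ve_{k_0}=\vzero$, equivalently $\vphi_{k_0}^\T\mU=\vzero$.

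Where you diverge is in treating the sharper bound $\lambda_{\min}(\nabla^2 g(\mW))\le -\tfrac{1}{2}\|\mW\mW^\T-\mW^\star\mW^{\star\T}\|$ as a separate obstacle requiring a new eigenvector-based direction. It isn't: the paper simply picks $k_0=\argmax_i(\sigma_i-\lambda_i)$ from the outset, and then the very same direction already yields $-\sigma_{k_0}=-\|\mSigma-\mLambda\|$. The identity $\|\mW\mW^\T-\mW^\star\mW^{\star\T}\|=2\|\mSigma-\mLambda\|$ follows immediately from the block structure $\mW\mW^\T-\mW^\star\mW^{\star\T}=2\mQ(\mLambda-\mSigma)\mQ^\T$ with $\mQ=\begin{bmatrix}\mPhi/\sqrt{2}\\\mPsi/\sqrt{2}\end{bmatrix}$ satisfying $\mQ^\T\mQ=\mId_{r'}$. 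So no second construction is needed; your proposed top-eigenvector route would in fact recover exactly this same $\vq_{k_0}$ and collapse to the paper's argument. Choosing the optimal $k_0$ up front is both simpler and gives both inequalities in one stroke.
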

The proof of Theorem~\ref{thm:strict saddle property over} is given in Appendix~\ref{sec:proof strict saddle property over}. We note that this strict saddle property is also covered in \cite[Theorem 3]{zhu2017GlobalOptimality}, but with much looser bounds (in particular, directly applying \cite[Theorem 3]{zhu2017GlobalOptimality} gives $\lambda_{\min}(\nabla^2g(\mW)) \leq -0.1 \sigma_{r'}(\mX^\star)$ rather than $\lambda_{\min}(\nabla^2g(\mW)) \leq - \sigma_{r'}(\mX^\star)$ in Theorem \ref{thm:strict saddle property over}).

\subsection{Extension to under-parameterized case: $\rank(\mX^\star)>r$}
We further discuss the under-parameterized case where $\rank(\mX^\star)>r$. In this case, \eqref{eq:low rank fact no regu} is also known as the low-rank approximation problem as the product $\mU\mV^\T$ forms a rank-$r$ approximation to $\mX^\star$. Similar to Theorem~\ref{thm:strict saddle property}, the following result shows that the strict saddle property also holds for $g(\mW)$ in this scenario.

\begin{thm}\label{thm:strict saddle property under}
Let $\mX^\star = \mPhi\mSigma\mPsi^\T = \sum_{i=1}^{r'}\sigma_i\vphi_i\vpsi_i^\T$ be a reduced SVD of $\mX^\star$ with $r'> r$ and $\sigma_r(\mX^\star)>\sigma_{r+1}(\mX^\star)$.\footnote{If $\sigma_{r_1} = \cdots = \sigma_{r} = \cdots = \sigma_{r_2}$ with $r_1\leq r\leq r_2$, then the optimal rank-$r$ approximation to $\mX^\star$ is not unique. For this case, the optimal solution set $\calX$ for the factorized problem needs to be changed correspondingly, but the main arguments still hold.} Also let $g(\mW)$ be defined as in \eqref{eq:low rank approx fact} with $\mu>0$.
Any $\mW = \begin{bmatrix} \mU \\ \mV \end{bmatrix}$ is a critical point of $g(\mW)$ if and only if $\mW\in\calC$ with
\begin{align*}
\calC: = &\bigg\{\mW = \begin{bmatrix} \mU \\ \mV \end{bmatrix}: \mU = \mPhi[:,\Omega]\mLambda^{1/2}\mR,\mV = \mPsi[:,\Omega]\mLambda^{1/2}\mR,\\
 & \mLambda =\mSigma[\Omega,\Omega], \mR\mR^\T = \mId_{\ell}, \Omega\subset\{1,2,\ldots,r'\}, |\Omega| = \ell\leq r \bigg\}
\end{align*}
where we recall that $\mPhi[:,\Omega]$ is a submatrix of $\mPhi$ obtained by keeping the columns indexed by $\Omega$ and $\mSigma[\Omega,\Omega]$ is an $\ell\times \ell$ matrix obtained by taking the elements  of $\mSigma$ in rows and columns indexed by $\Omega$.

Further, all local minima belong to the following set
\begin{align*}
\calX =&\bigg\{\mW = \begin{bmatrix} \mU \\ \mV \end{bmatrix}: \mLambda = \mSigma[1:r,1:r], \mR\in\calO_r,\\
& \mU = \mPhi[:,1:r]\mLambda^{1/2}\mR,\mV = \mPsi[:,1:r]\mLambda^{1/2}\mR \bigg\}.
\end{align*}
Finally, any $\mW\in\calC\setminus\calX$ is a strict saddle of $g(\mW)$ satisfying
\begin{align*}
\lambda_{\min}(\nabla^2g(\mW))\leq -(\sigma_{r}(\mX^\star) - \sigma_{r+1}(\mX^\star)).
\end{align*}
\end{thm}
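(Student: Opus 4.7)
My plan is to parallel the proof of Lemma~\ref{lem:set of critical points} and Theorem~\ref{thm:strict saddle property} with modifications that account for $r < r' = \rank(\mX^\star)$. First, I would characterize the set $\calC$ of critical points. By Lemma~\ref{lem:critical point balanced}, any critical point satisfies $\mU^\T\mU = \mV^\T\mV$, so the regularizer term drops out and the gradient equations reduce to
\[
\mU\mU^\T\mU = \mX^\star\mV, \qquad \mV\mV^\T\mV = \mX^{\star\T}\mU.
\]
By orthogonally rotating by some $\mR\in\calO_r$ (which preserves both $g$ and the Hessian), I can assume the columns of $\mU$ are orthogonal (possibly with some zero columns). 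The balance condition then makes the columns of $\mV$ orthogonal with matching norms, so the gradient equations decouple column by column: each pair $(\vu_i,\vv_i)$ is either $(\vzero,\vzero)$ or $(\sqrt{\sigma_{j_i}}\vphi_{j_i},\sqrt{\sigma_{j_i}}\vpsi_{j_i})$ for some index $j_i\in\{1,\ldots,r'\}$, and distinct nonzero columns must correspond to distinct singular vectors (by orthogonality). This gives exactly the characterization of $\calC$ in terms of a subset $\Omega\subset\{1,\ldots,r'\}$ with $|\Omega|=\ell\leq r$.

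For the identification of $\calX$ as the set of local minima and the strict-saddle bound, fix $\mW\in\calC\setminus\calX$. Then either (a) $\ell<r$, or (b) $\ell=r$ but $\Omega\neq\{1,\ldots,r\}$. In both cases there exist indices $k\leq r$ with $k\notin\Omega$ and a column position $i^\star$ I can use. In case (a), I pick $i^\star$ corresponding to a zero column and set $\mDelta_{\mU}=\vphi_k\ve_{i^\star}^\T$, $\mDelta_{\mV}=\vpsi_k\ve_{i^\star}^\T$. In case (b), I take $j^\star\in\Omega$ with $j^\star>r$, let $i^\star$ be the column position housing index $j^\star$, and use the same $\mDelta$. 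The key point is that orthogonality between $\vphi_k$ (resp.\ $\vpsi_k$) and every $\vphi_j$ (resp.\ $\vpsi_j$) with $j\in\Omega$ makes most of the cross-terms in \eqref{eq:hessian} and \eqref{eq:hessian for regularizer} vanish, and, using $\mU^\T\mU=\mV^\T\mV$, one checks that $[\nabla^2\rho(\mW)](\mDelta,\mDelta)=0$ as well.

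A direct computation with $\mU\mV^\T-\mX^\star=-\sum_{j\notin\Omega}\sigma_j\vphi_j\vpsi_j^\T$ then yields
\[
[\nabla^2 g(\mW)](\mDelta,\mDelta) = \|\mDelta_{\mU}\mV^\T+\mU\mDelta_{\mV}^\T\|_F^2 - 2\sigma_k,
\]
where the first term equals $2\sigma_{j^\star}$ in case (b) and $0$ in case (a). Dividing by $\|\mDelta\|_F^2=2$, I obtain a Rayleigh quotient of $\sigma_{j^\star}-\sigma_k$ in case (b), which is at most $\sigma_{r+1}-\sigma_r=-(\sigma_r-\sigma_{r+1})$ because $j^\star>r$ gives $\sigma_{j^\star}\leq\sigma_{r+1}$ and $k\leq r$ gives $\sigma_k\geq\sigma_r$. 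Case (a) gives the even stronger bound $-\sigma_k\leq-\sigma_r\leq-(\sigma_r-\sigma_{r+1})$. The strict inequality $\sigma_r>\sigma_{r+1}$ assumed in the theorem ensures this is a genuine strict saddle. The hardest step is verifying that $[\nabla^2\rho(\mW)](\mDelta,\mDelta)$ really does vanish on this particular $\mDelta$; this requires the careful orthogonality bookkeeping between $\Omega$ and $k$, together with Lemma~\ref{lem:concequence of balanced factors}, and is what makes the spectral-gap bound tight rather than merely $-\sigma_r$.
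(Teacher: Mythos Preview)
Your proposal is correct and follows essentially the same approach as the paper's proof: the paper also uses Lemma~\ref{lem:critical point balanced} and the argument of Lemma~\ref{lem:set of critical points} to characterize $\calC$, then for $\mW\in\calC\setminus\calX$ picks an index $i\in\{1,\ldots,r\}\setminus\Omega$ and builds $\mDelta_{\mU}=\vphi_i\valpha^\T$, $\mDelta_{\mV}=\vpsi_i\valpha^\T$ with $\valpha$ the bottom eigenvector of $\mU^\T\mU$, obtaining $[\nabla^2 g(\mW)](\mDelta,\mDelta)=2\sigma_r^2(\mU)-2\sigma_i$ and hence $\lambda_{\min}\leq -(\sigma_r-\sigma_{r+1})$. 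Your explicit case split $(\ell<r$ versus $\ell=r)$ and your use of $\ve_{i^\star}$ after orthogonalizing are equivalent to the paper's unified treatment via $\sigma_r^2(\mU)$ and $\valpha$; the only (harmless) wrinkle is your closing remark that the regularizer vanishing ``makes the spectral-gap bound tight rather than merely $-\sigma_r$'', which has the comparison backwards---the spectral-gap bound $-(\sigma_r-\sigma_{r+1})$ is \emph{weaker} than $-\sigma_r$, and arises not from the regularizer but from the nonzero first term $\|\mDelta_{\mU}\mV^\T+\mU\mDelta_{\mV}^\T\|_F^2=2\sigma_{j^\star}$ in case~(b).
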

The proof of Theorem~\ref{thm:strict saddle property under} is given in Appendix~\ref{sec:proof strict saddle property under}. It follows from Eckart-Young-Mirsky theorem~\cite{horn2012matrix} that for any $\mW\in\calX$, $\mU\mV^\T$ is the best rank-$r$ approximation to $\mX^\star$. Thus, this strict saddle property ensures that the local search algorithms applied to the factored problem~\eqref{eq:low rank approx fact} converge to global optimum which corresponds to the best rank-$r$ approximation to $\mX^\star$. {Note that Theorems~\ref{thm:strict saddle property}--\ref{thm:strict saddle property under} require $\mu>0$. Based on these results, it has been recently proved in \cite{zhu2018distributed} that the strict saddle property also holds for $g(\mW)$ even when $\mu=0$, but without an explicit bound on $\lambda_{\min}(\nabla^2 g(\mW))$ as in Theorems~\ref{thm:strict saddle property}--\ref{thm:strict saddle property under}.}

\subsection{Robust strict saddle property}
\label{sec:robust strict saddle property PCA}
We now consider the revised robust strict saddle property defined in Definition~\ref{def:revised robust strict saddle} for the low-rank matrix factorization problem~\eqref{eq:low rank approx fact}. As guaranteed by Theorem~\ref{thm:strict saddle property},  $g(\mW)$ satisfies the strict saddle property for any $\mu>0$. However, too small a $\mu$ would make analyzing the robust strict saddle property difficult. To see this, we denote
\[
f(\mW) =\frac{1}{2}\left\|\mU\mV^\T - \mX^\star\right\|_F^2
\]
for convenience. Thus we can rewrite $g(\mW)$ as the sum of $f(\mW)$ and $\rho(\mW)$.  Note that for any $\mW=\begin{bmatrix} \mU \\ \mV \end{bmatrix}\in\calC$ where $\calC$ is the set of critical points defined in \eqref{eq:set of critical points},  $\widetilde\mW=\begin{bmatrix} \mU\mM \\ \mV\mM^{-1} \end{bmatrix}$ is a critical point of $f(\mW)$ for any invertible $\mM\in\R^{r\times r}$. This further implies that the gradient at $\widetilde \mW$ reduces to
\[
\nabla g(\widetilde \mW) = \nabla \rho(\widetilde \mW),
\]
which could be very small if $\mu$ is very small since $\rho(\mW) = \frac{\mu}{4} \left\|\mU^\T\mU - \mV^\T \mV\right\|_F^2$. On the other hand, $\widetilde \mW$ could be far away from any point in $\calX$ for some $\mM$ that is not well-conditioned. Therefore, we choose a proper $\mu$ controlling the importance of the regularization term such that for any $\mW$ that is not close to the critical points $\calX$, $g(\mW)$ has large gradient. Motivated by Lemma~\ref{lem:nonsymmetric to symmetric}, we choose $\mu = \frac{1}{2}$.

The following result establishes the robust strict saddle property for $g(\mW)$.
\begin{thm}\label{thm:robust strict saddle property}
Let $\calR_1,\calR_2,\calR_3',\calR_3'',\calR_3'''$ be the regions as defined in Theorem~\ref{thm:robust strict saddle property general}.
Let $g(\mW)$ be defined as in \eqref{eq:low rank approx fact} with $\mu = \frac{1}{2}$. Then $g(\mW)$ has the following robust strict saddle property:
\begin{enumerate}
\item For any $\mW\in\calR_1$, $g(\mW)$ satisfies local regularity condition:
\begin{equation}\label{eq:thm regularity condition}\begin{split}
\left\langle \nabla g(\mW), \mW - \mW^\star\mR\right\rangle \geq & \frac{1}{32}\sigma_r(\mX^\star)\dist^2(\mW,\mW^\star)\\& + \frac{1}{48\|\mX^\star\|}\left\|\nabla g(\mW)\right\|_F^2,
\end{split}\end{equation}
where $\dist(\mW,\mW^\star)$ and $\mR$  are defined in \eqref{eq:distance W1 and W2} and \eqref{eq:procrustes}, respectively.
\item For any $\mW\in\calR_2$, $g(\mW)$ has a directional negative curvature:
    \begin{align}\label{eq:thm negative curvature}
\lambda_{\min}\left(\nabla^2g(\mW)\right) \leq  -\frac{1}{4}\sigma_r(\mX^\star).
\end{align}
\item For any $\mW\in\calR_3=\calR_3'\cup \calR_3''\cup \calR_3'''$, $g(\mW)$ has large gradient:
    \begin{align}
    &\|\nabla g(\mW)\|_F \geq \revise{\frac{1}{11}}\sigma_r^{3/2}(\mX^\star), \quad \forall~ \mW\in\calR_3';\label{eq:thm large gradient 1}\\
        &\|\nabla g(\mW)\|_F > \frac{39}{800}\|\mW\|^{3}, \quad \forall~ \mW\in\calR_3'';\label{eq:thm large gradient 2}\\
    &\left\langle \nabla g(\mW),\mW\right\rangle > \frac{1}{20} \left\| \mW\mW^\top\right\|_F^2, \quad \forall~ \mW\in\calR_3'''.\label{eq:thm large gradient 3}
    \end{align}
\end{enumerate}
\end{thm}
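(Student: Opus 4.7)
The plan is to exploit the decomposition from Lemma~\ref{lem:nonsymmetric to symmetric}: when $\mu = \tfrac{1}{2}$,
\[
g(\mW) = \tfrac{1}{8}\|\mW\mW^\T - \mW^\star\mW^{\star\T}\|_F^2 + \tfrac{1}{4}\|\mU^\T\mU^\star - \mV^\T\mV^\star\|_F^2,
\]
which recasts $g$ as a PSD-style objective in $\mW\mW^\T$ plus a penalty enforcing approximate balancing. Throughout, I fix $\mR = \mR(\mW,\mW^\star)$ so that $\dist(\mW,\mW^\star) = \|\mW - \mW^\star\mR\|_F$, and use the fact that $\mW^\star \in \calE$ so that Lemma~\ref{lem:concequence of balanced factors} applies to $\mW^\star$. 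The off-diagonal blocks of $\mW\mW^\T - \mW^\star\mW^{\star\T}$ are exactly $\mU\mV^\T - \mX^\star$, which lets me move fluently between the bilinear form and the PSD form.

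For the local region $\calR_1$, I would write out $\langle \nabla g(\mW), \mW - \mW^\star\mR\rangle$ explicitly with the gradient formula, splitting it into a data-fidelity piece and a regularization piece. Setting $\mDelta := \mW - \mW^\star\mR$ and using the Procrustes identity $(\mW^\star\mR)^\T\mW \succeq \mzero$ from Lemma~\ref{lem:Procrustes problem}, the algebra yields a lower bound proportional to $\|\mW\mW^\T - \mW^\star\mW^{\star\T}\|_F^2 + \|\mU^\T\mU - \mV^\T\mV\|_F^2$. Combined with the upper bound $\|\nabla g(\mW)\|_F^2 \lesssim \|\mX^\star\|\,\bigl(\|\mW\mW^\T - \mW^\star\mW^{\star\T}\|_F^2 + \|\mU^\T\mU - \mV^\T\mV\|_F^2\bigr)$ (which uses $\|\mW\|^2 \lesssim \|\mX^\star\|$ valid in $\calR_1$), and the local spectral inequality $\|\mW\mW^\T - \mW^\star\mW^{\star\T}\|_F^2 \geq 2(\sqrt{2}-1)\sigma_r(\mX^\star)\dist^2(\mW,\mW^\star)$, one obtains~\eqref{eq:thm regularity condition} with the stated constants $\tfrac{1}{32}$ and $\tfrac{1}{48}$.

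For region $\calR_2$, the strategy is to produce an explicit direction $\mDelta$ that exploits the near-rank-deficiency of $\mW$. Concretely, I would take a unit pair of singular vectors of $\mW$ at its smallest singular value and use them to build $\mDelta$ so that $\mDelta_\mU \mDelta_\mV^\T$ aligns with a rank-one component of $\mX^\star - \mU\mV^\T$ that $\mU\mV^\T$ cannot currently capture. Plugging into~\eqref{eq:hessian}, the cross term $2\langle \mU\mV^\T - \mX^\star, \mDelta_\mU\mDelta_\mV^\T\rangle$ is strongly negative and scales like $-\sigma_r(\mX^\star)\|\mDelta\|_F^2$, while the positive contributions $\|\mDelta_\mU\mV^\T + \mU\mDelta_\mV^\T\|_F^2$ and $[\nabla^2\rho(\mW)](\mDelta,\mDelta)$ are controlled by $\sigma_r(\mW)^2 \leq \tfrac{1}{2}\sigma_r(\mX^\star)$ and the bound $\|\mW\mW^\T\|_F \leq \tfrac{20}{19}\|\mW^\star\mW^{\star\T}\|_F$. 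Tracking constants yields~\eqref{eq:thm negative curvature}.

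For the large-gradient regions, the three subcases split by size. In $\calR_3''$ and $\calR_3'''$, the cubic piece $\tfrac{1}{2}\widehat\mW\widehat\mW^\T\mW$ in $\nabla g(\mW)$ together with the $\mU(\mV^\T\mV)$ block inside $(\mU\mV^\T - \mX^\star)\mV$ scales like $\|\mW\|^3$ and $\|\mW\mW^\T\|_F^{3/2}$ respectively, while the residual $\mX^\star\mV$ correction is strictly smaller because $\|\mW\| \gg \|\mW^\star\|$ or $\|\mW\mW^\T\|_F \gg \|\mW^\star\mW^{\star\T}\|_F$; this yields~\eqref{eq:thm large gradient 2} and~\eqref{eq:thm large gradient 3}. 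The main obstacle is $\calR_3'$: here $\mW$ is uniformly bounded and $\sigma_r(\mW)$ is bounded away from zero, but $\dist(\mW,\mW^\star) > \sigma_r^{1/2}(\mX^\star)$, so neither the norm-blow-up argument of $\calR_3''$/$\calR_3'''$ nor the local regularity of $\calR_1$ applies directly. My plan is to project $\nabla g(\mW)$ onto $\mW - \mW^\star\mR$ and invoke a \emph{global} analog of the regularity inequality from $\calR_1$: the spectral floor $\sigma_r(\mW) > \tfrac{1}{\sqrt{2}}\sigma_r^{1/2}(\mX^\star)$ rules out every non-global critical point in Lemma~\ref{lem:set of critical points} (which all have at least one vanishing diagonal entry of $\mLambda$), so one still obtains $\langle \nabla g(\mW), \mW - \mW^\star\mR\rangle \gtrsim \sigma_r(\mX^\star)\dist^2(\mW,\mW^\star)$. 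Cauchy--Schwarz and the bound $\dist(\mW,\mW^\star) \leq \|\mW\| + \|\mW^\star\| \lesssim \|\mX^\star\|^{1/2}$ together with $\dist(\mW,\mW^\star) > \sigma_r^{1/2}(\mX^\star)$ then deliver~\eqref{eq:thm large gradient 1}. Extracting the explicit constant $\tfrac{1}{10}$ in this $\calR_3'$ analysis is the most delicate step of the whole proof.
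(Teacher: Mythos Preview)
Your outline for $\calR_1$, $\calR_2$, $\calR_3''$, and $\calR_3'''$ is broadly aligned with the paper, though the paper's $\calR_1$ argument rests on a cleaner rewriting of the gradient that you do not mention: with $\mu=\tfrac12$ one has
\[
\nabla g(\mW)=\tfrac12\bigl(\mW\mW^\T-\mW^\star\mW^{\star\T}\bigr)\mW+\tfrac12\,\widehat\mW^\star\widehat\mW^{\star\T}\mW,
\]
so the ``regularization piece'' that appears after pairing with $\mW-\mW^\star\mR$ is $\tfrac12\|\widehat\mW^{\star\T}\mW\|_F^2=\tfrac12\|\mU^{\star\T}\mU-\mV^{\star\T}\mV\|_F^2$, not $\|\mU^\T\mU-\mV^\T\mV\|_F^2$ as you wrote. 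The first term is then handled by a dedicated PSD inequality (the paper's Lemma~\ref{lem:regularity property for PSD}) that simultaneously produces both the $\sigma_r(\mX^\star)\dist^2$ piece and the $\|\mW\mW^\T-\mW^\star\mW^{\star\T}\|_F^2$ piece. Your variant (split into data-fidelity and regularizer, then apply a separate local inequality $\|\mW\mW^\T-\mW^\star\mW^{\star\T}\|_F^2\gtrsim\sigma_r(\mX^\star)\dist^2$) can probably be pushed through, but it generates an extra cross term $\langle\mU^\T\mU-\mV^\T\mV,\mU^\T\mU^\star-\mV^\T\mV^\star\rangle$ that you would still have to absorb.

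The genuine gap is in $\calR_3'$. Your plan is: the spectral floor $\sigma_r(\mW)>\tfrac{1}{\sqrt2}\sigma_r^{1/2}(\mX^\star)$ excludes every non-global critical point, ``so one still obtains $\langle\nabla g(\mW),\mW-\mW^\star\mR\rangle\gtrsim\sigma_r(\mX^\star)\dist^2(\mW,\mW^\star)$.'' This implication is unjustified. Knowing that no critical points lie in the region tells you only that $\nabla g\ne\mzero$; it does not give a quantitative regularity inequality. The PSD lemma that drives the $\calR_1$ bound explicitly requires $\|\mW-\mW^\star\mR\|\le\tfrac{\sqrt2}{2}\sigma_r(\mW^\star)=\sigma_r^{1/2}(\mX^\star)$, which is precisely what is violated in $\calR_3'$; without it the cubic and quartic error terms in the expansion of $\langle(\mW\mW^\T-\mW^\star\mW^{\star\T})\mW,\mW-\mW^\star\rangle$ can overwhelm the quadratic part. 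The paper therefore abandons the inner-product route entirely in $\calR_3'$ and instead lower bounds $\|\nabla g(\mW)\|_F^2\ge\tfrac14\|(\mW\mW^\T-\mW^\star\mW^{\star\T})\mW\|_F^2$ directly, writes $\mW=\mQ\overline\mLambda^{1/2}\mR+\mE$ with $\mQ$ the SVD basis of $\mW^\star$ and $\mE\perp\mQ$, and does a case split on whether $\|\mE\|_F^2$ exceeds a fixed fraction of $\sigma_r(\mX^\star)$. This decomposition, not a global regularity condition, is what actually extracts the constant $\tfrac{1}{10}$ in~\eqref{eq:thm large gradient 1}.
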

The proof is given in Appendix~\ref{sec:prf robust strict saddle property}.
\begin{remark}
Recall that all the strict saddles  of $g(\mW)$ are actually rank deficient (see Theorem~\ref{thm:strict saddle property}). Thus the region $\calR_2$ attempts to characterize all the neighbors of the saddle saddles by including all rank deficient points. Actually, \eqref{eq:thm negative curvature} holds not only for $\mW\in\calR_2$, but for all $\mW$ such that $\sigma_r(\mW)\leq\sqrt{\frac{1}{2}}\sigma_r^{1/2}(\mX^\star)$. The reason we add another constraint controlling the term $\|\mW^\star\mW^{\star\T}\|_F$ is to ensure this negative curvature property in the region $\calR_2$ also holds for the matrix sensing problem discussed in next section. This is the same reason we add two more constraints $\|\mW\|\leq \frac{20}{19} \|\mW^\star\|_F$ and $\|\mW\mW^\T\|_F\leq \frac{10}{9} \|\mW^\star\mW^{\star\T}\|_F$ for the region $\calR_3'$.
\end{remark}

\section{Proof of Lemma~\ref{lem:local descent}}\label{sec:prf local descent}
Denote $a_{\vx,\vx^\star}= \argmin_{a'\in\calG}\|\vx - a'(\vx^\star)\|$. Utilizing the definition of distance in \eqref{eq:dist with group}, the regularity condition~\eqref{eq:regularity condition} and the assumption that $\mu\leq 2\beta$, we have
\begin{align*}
&\dist^2(\vx_{t+1},\vx^\star)\\ &=\left\| \vx_{t+1} - a_{\vx_{t+1},\vx^\star}(\vx^\star)\right\|^2 \\
&\leq \left\| \vx_{t} - \nu \nabla h(\vx_t) - a_{\vx_{t},\vx^\star}(\vx^\star)\right\|^2\\
& = \left\| \vx_{t}  - a_{\vx_{t},\vx^\star}(\vx^\star)\right\|^2 + \nu^2\left\| \nabla h(\vx_t)\right\|^2  \\
&\quad -2\nu\left\langle \vx_{t}  - a_{\vx_{t},\vx^\star}(\vx^\star),\nabla h(\vx_t)\right\rangle\\
& \leq \left(1 - 2\nu \alpha\right)\dist^2(\vx_t,\vx^\star) - \nu(2\beta - \nu)\left\| \nabla h(\vx_t)\right\|^2\\
&\leq \left(1 - 2\nu \alpha\right)\dist^2(\vx_t,\vx^\star)
\end{align*}
where the fourth line uses the regularity condition~\eqref{eq:regularity condition} and the last line holds because $\nu \leq 2\beta$. Thus we conclude $\vx_t\in B(\delta)$ for all $t\in\N$ if $\vx_0\in B(\delta)$  by noting that $0\leq 1-2\nu\alpha<1$ since $\alpha\beta\leq \frac{1}{4}$ and $\nu\leq 2\beta$.

\section{Proof of \Cref{prop:RIP to unique}}
\label{sec:prf prop RIP to unique}
First note that if $\mX^\star$ is a critical point of $f$, then
\[
\nabla f(\mX^\star) = \mzero.
\]
Now for any $\mX\in\R^{n\times m}$ with $\rank(\mX)\leq r$, the second order Taylor expansion gives
\begin{align*}
f(\mX) =& f(\mX^\star) + \left\langle \nabla f(\mX^\star), \mX-\mX^\star \right\rangle \\
&+ \frac{1}{2}[\nabla^2 f(\widetilde \mX)](\mX-\mX^\star, \mX-\mX^\star)\\
& = f(\mX^\star) + \frac{1}{2}[\nabla^2 f(\widetilde \mX)](\mX-\mX^\star, \mX-\mX^\star)
\end{align*}
where $\widetilde \mX = t\mX^\star + (1-t)\mX$ for some $t\in[0,1]$. This Taylor expansion together with \eqref{eq:RIP like} (both $\widetilde\mX$ and  $\mX'-\mX^\star$ have rank at most $2r$) gives
\[
f(\mX) - f(\mX^\star) \geq \frac{a}{2} \|\mX - \mX^\star\|_F^2.
\]

\section{Proof of Lemma~\ref{lem:critical point balanced}}\label{sec:proof critical point balanced}
Any critical point (see Definition~\ref{def:critical point}) $\mW = \begin{bmatrix} \mU \\ \mV \end{bmatrix}$ satisfies $\nabla G(\mW) = \mzero$, i.e.,
\begin{align}
&\nabla f(\mU\mV^\T)\mV + \mu\mU\left(\mU^\T\mU - \mV^\T\mV\right) = \mzero,\label{eq:proof thm cirtical 1}\\
&(\nabla f(\mU\mV^\T))^\T\mU- \mu\mV\left(\mU^\T\mU - \mV^\T\mV\right) = \mzero.\label{eq:proof thm cirtical 2}
\end{align}
By~\eqref{eq:proof thm cirtical 2}, we obtain
\[
(\nabla f(\mU\mV^\T))^\T\mU = \mu\left(\mU^\T\mU - \mV^\T\mV\right)\mV^\T.
\]
Multiplying \eqref{eq:proof thm cirtical 1} by $\mU^\T$ and plugging it in the expression for $\mU^\T\nabla f(\mU\mV^\T)$ from the above equation gives
\begin{align*}
\left(\mU^\T\mU - \mV^\T\mV\right)\mV^\T\mV + \mU^\T\mU\left(\mU^\T\mU - \mV^\T\mV\right) = \mzero,
\end{align*}
which further implies
\begin{align*}
\mU^\T\mU\mU^\T\mU=\mV^\T\mV\mV^\T\mV.
\end{align*}
In order to show \eqref{eq:critical point balanced}, note that $\mU^\T\mU$ and $\mV^\T\mV$ are the principal square roots (i.e., PSD square roots) of $\mU^\T\mU\mU^\T\mU$ and $\mV^\T\mV\mV^\T\mV$, respectively. Utilizing the result that a PSD matrix $\mA$ has a unique PSD matrix $\mB$ such that $\mB^k = \mA$ for any $k\geq 1$~\cite[Theorem 7.2.6]{horn2012matrix}, we obtain
\begin{align*}
\mU^\T\mU = \mV^\T\mV
\end{align*}
for any critical point $\mW$.

\section{Proof of Lemma \ref{lem:nonsymmetric to symmetric}}\label{sec:proof lem nonsymmetric to symmetric}
We first rewrite the objective function $g(\mW)$:
\begin{align*}
&g(\mW) = \frac{1}{2}\left\|\mU\mV^\T - \mU^\star\mV^{\star\T}\right\|_F^2 +  \frac{\mu}{4} \left\|\mU^\T\mU - \mV^\T \mV\right\|_F^2\\
& \geq \min\{\mu,\frac{1}{2}\}\left(\left\|\mU\mV^\T - \mU^\star\mV^{\star\T}\right\|_F^2 + \frac{1}{4} \left\|\mU^\T\mU - \mV^\T \mV\right\|_F^2\right)\\
& = \min\{\mu,\frac{1}{2}\}\left(\frac{1}{4}\left\|\mW\mW^\T - \mW^\star\mW^{\star\T}\right\|_F^2 + g'(\mW)\right),
\end{align*}
where the second line attains the equality when $\mu = \frac{1}{2}$, and $g'(\mW)$ in the last line is defined as
\begin{align*}
g'(\mW) :=& \frac{1}{2}\left\|\mU\mV^\T - \mU^\star\mV^{\star\T}\right\|_F^2  - \frac{1}{4}\left\|\mU\mU^\T - \mU^\star\mU^{\star\T}\right\|_F^2 \\
 &- \frac{1}{4}\left\|\mV\mV^\T - \mV^\star\mV^{\star\T}\right\|_F^2 + \frac{1}{4} \left\|\mU^\T\mU - \mV^\T \mV\right\|_F^2.
\end{align*}
We further show $g'(\mW)$ is always nonnegative:
\begin{align*}
&g'(\mW) = \frac{1}{2}\left\|\mU\mV^\T - \mU^\star\mV^{\star\T}\right\|_F^2 - \frac{1}{4}\left\|\mU\mU^\T- \mU^\star\mU^{\star\T}\right\|_F^2 \\
 &\quad\quad\quad   - \frac{1}{4}\left\|\mV\mV^\T - \mV^\star\mV^{\star\T}\right\|_F^2 + \frac{1}{4} \left\|\mU^\T\mU - \mV^\T \mV\right\|_F^2.\\
& = \frac{1}{2}\left\|\mU\mV^\T - \mU^\star\mV^{\star\T}\right\|_F^2 + \frac{1}{2}\left\|\mU^\T\mU^\star \right\|_F^2  + \frac{1}{2}\left\|\mV^\T\mV^\star \right\|_F^2 \\
&\quad- \frac{1}{2}\trace\left(\mU^\T\mU\mV^\T\mV\right) -\frac{1}{4}\left\|\mU^\star\mU^{\star\T}\right\|_F^2 - \frac{1}{4}\left\|\mV^\star\mV^{\star\T}\right\|_F^2\\
& = \frac{1}{2}\left\|\mU^\T\mU^\star - \mV^\T\mV^\star\right\|_F^2 + \frac{1}{2}\left\|\mU^\star\mV^{\star\T}\right\|_F^2 \\ &\quad-\frac{1}{4}\left\|\mU^\star\mU^{\star\T}\right\|_F^2 - \frac{1}{4}\left\|\mV^\star\mV^{\star\T}\right\|_F^2\\
& = \frac{1}{2}\left\|\mU^\T\mU^\star - \mV^\T\mV^\star\right\|_F^2 \geq 0,
\end{align*}
where the last line follows because $\mU^{\star\T}\mU^{\star} = \mV^{\star\T}\mV^{\star}$.
Thus, we have
\begin{align*}
g(\mW) \geq \min\{\frac{\mu}{4},\frac{1}{8}\}\left\|\mW\mW^\T - \mW^\star\mW^{\star\T}\right\|_F^2,
\end{align*}
and
\begin{align*}
g(\mW) = \frac{1}{8}\left\|\mW\mW^\T - \mW^\star\mW^{\star\T}\right\|_F^2 + \frac{1}{4} \left\|\mU^\T\mU^\star - \mV^\T\mV^\star\right\|_F^2
\end{align*}
if $\mu = \frac{1}{2}$.

\section{Proof of Lemma~\ref{lem:concequence of balanced factors}}\label{sec:proof concequence of balanced factors}
Utilizing the result that any point $\mW\in\calE$ satisfies $\widehat\mW^\T\mW = \mU^\T\mU - \mV^\T\mV = \mzero$, we directly obtain
\begin{align*}
\|\mDelta_{\mU}\mU^\T\|_F^2 + \|\mDelta_{\mV}\mV^\T\|_F^2 = \|\mDelta_{\mU}\mV^\T\|_F^2 + \|\mDelta_{\mV}\mU^\T\|_F^2
\end{align*}
since $\|\mDelta_{\mU}\mU^\T\|_F^2 = \trace\left(\mDelta_{\mU}\mU^\T\mU\mDelta_{\mU}\right) = \trace\left(\mDelta_{\mU}\mV^\T\mV\mDelta_{\mU}\right) = \|\mDelta_{\mU}\mV^\T\|_F^2$ (and similarly for the other two terms).

We then rewrite the last two terms in~\eqref{eq:hessian for regularizer} as
\begin{align*}
&\left\langle \widehat\mW\widehat\mDelta^\T,\mDelta \mW^\T  \right\rangle + \left\langle \widehat\mW\widehat\mW^\T,\mDelta \mDelta^\T\right\rangle\\
& = \left\langle \widehat\mW^\T\mDelta,\mDelta^\T\widehat\mW\right\rangle + \left\langle \widehat\mW^\T\mDelta,\widehat\mW^\T\mDelta\right\rangle\\
& = \left\langle \widehat\mW^\T\mDelta,\widehat\mW^\T\mDelta + \mDelta^\T\widehat\mW\right\rangle\\
& = \frac{1}{2}\left\langle \widehat\mW^\T\mDelta + \mDelta^\T\widehat\mW,\widehat\mW^\T\mDelta + \mDelta^\T\widehat\mW\right\rangle \\
&\quad+ \frac{1}{2} \left\langle \widehat\mW^\T\mDelta - \mDelta^\T\widehat\mW,\widehat\mW^\T\mDelta + \mDelta^\T\widehat\mW\right\rangle\\
& = \frac{1}{2}\left\| \widehat\mW^\T\mDelta + \mDelta^\T\widehat\mW\right\|_F^2,
\end{align*}
where the last line holds because $\left\langle \mA - \mA^\T, \mA + \mA^\T\right\rangle = 0$.
Plugging these with the factor $\widehat\mW^\T\mW  = \mzero$ into the Hessian quadratic form  $[\nabla^2 \rho(\mW)](\mDelta,\mDelta)$ defined in~\eqref{eq:hessian for regularizer} gives
\begin{align*}
[\nabla^2 \rho(\mW)](\mDelta,\mDelta) \geq  \frac{\mu}{2}\left\| \widehat\mW^\T\mDelta + \mDelta^\T\widehat\mW\right\|_F^2\geq 0.
\end{align*}
This implies that the Hessian of $\rho$ evaluated at any $\mW\in\calE$ is PSD, i.e., $\nabla^2 \rho(\mW)\succeq \mzero$.\footnote{This can also be observed since any critical point $\mW$ is a global minimum of $\rho(\mW)$, which directly indicates that $\nabla^2\rho(\mW)\succeq \mzero$.}

\section{Proof of Theorem~\ref{thm:strict saddle property} (strict saddle property for \eqref{eq:low rank approx fact})}\label{sec:proof strict saddle property}
We begin the proof of Theorem~\ref{thm:strict saddle property} by characterizing  any $\mW\in\calC\setminus\calX$. For this purpose, let $\mW = \begin{bmatrix} \mU \\ \mV \end{bmatrix}$, where $\mU = \mPhi\mLambda^{1/2}\mR,\mV = \mPsi\mLambda^{1/2}\mR, \mR\in\calO_r,  \mLambda \textup{ is diagonal}, \mLambda\geq \mzero, (\mSigma - \mLambda)\mSigma = \mzero$, and $\rank(\mLambda)<r$. Denote the corresponding optimal solution $\mW^\star = \begin{bmatrix} \mU^\star \\ \mV^\star \end{bmatrix}$, where $\mU^\star = \mPhi\mSigma^{1/2}\mR,\mV^\star = \mPsi\mSigma^{1/2}\mR$. Let
\begin{align*}
k=\argmax_{i}\sigma_i - \lambda_i
\end{align*}
 denote the location of the first zero diagonal element in $\mLambda$. Noting that $\lambda_i\in\{\sigma_i,0\}$, we conclude that
 \begin{align}\label{eq:strict saddle k property}
 \lambda_k = 0, \quad \vphi_k^\T \mU = \vzero, \quad \vpsi_k^\T \mV = \vzero.
 \end{align}
In words, $\vphi_k$ and $\vpsi_k$ are orthogonal to $\mU$ and $\mV$, respectively. Let $\valpha \in\R^r$ be the eigenvector associated with the smallest eigenvalue of $\mW^\T\mW$. Such $\valpha$ simultaneously lives in the null spaces of $\mU$ and $\mV$ since $\mW$ is rank deficient indicating
 \begin{align*}
0 =\valpha^\T \mW^\T\mW \valpha =  \valpha^\T \mU^\T\mU \valpha +  \valpha^\T \mV^\T\mV \valpha, \end{align*}
 which further implies
  \begin{align}
 \left\{\begin{matrix}\valpha^\T \mU^\T\mU \valpha = 0,\\ \valpha^\T \mV^\T\mV \valpha = 0. \end{matrix}\right.
 \label{eq:strict saddle alpha}\end{align}
With this property, we construct $\mDelta$ by setting $\mDelta_{\mU} = \vphi_k\valpha^\T$ and $\mDelta_{\mV} = \vpsi_k\valpha^\T$. Now we show that $\mW$ is a strict saddle by arguing that $g(\mW)$ has a strictly negative curvature along the constructed direction $\mDelta$, i.e., $[\nabla^2g(\mW)](\mDelta,\mDelta)<0$. To that end, we compute the five terms in \eqref{eq:hessian} as follows
\begin{align*}
\left\|\mDelta_{\mU}\mV^\T+ \mU\mDelta_{\mV}^\T\right\|_F^2 & = 0 \quad (\textup{since} \quad  \eqref{eq:strict saddle alpha}),\\
\left\langle \mU\mV^\T - \mX^\star,\mDelta_{\mU}\mDelta_{\mV}^\T \right\rangle &= \lambda_k - \sigma_k = -\sigma_k \quad (\textup{since} \quad \eqref{eq:strict saddle k property}),\\
\left\langle \widehat\mW^\T\mW,\widehat\mDelta^\T\mDelta \right\rangle &= 0 \quad (\textup{since} \quad \widehat\mW^\T\mW = \mzero),\\
\left\langle \widehat\mW\widehat\mDelta^\T,\mDelta \mW^\T  \right\rangle &= \trace\left( \widehat\mDelta^\T \mW\mDelta^T \widehat\mW \right) = 0,\\
\left\langle \widehat\mW\widehat\mW^\T,\mDelta \mDelta^\T\right\rangle &= \trace\left( \widehat\mW^\T\mDelta\mDelta^\T\widehat\mW\right) = 0,
\end{align*}
where $\widehat\mW^\T\mW = \mzero$ since $\mU^\T\mU -\mV^\T\mV = \mzero$, the last two lines utilize $\widehat\mDelta^\T \mW = \mzero$ (or $\widehat\mW^\T\mDelta = \mzero$) because $\widehat\mDelta^\T \mW = \valpha\vphi_k^\T\mU -\valpha\vpsi_k^\T\mV = \mzero$ (see \eqref{eq:strict saddle k property}). Plugging these terms into~\eqref{eq:hessian} gives
\begin{align*}
&[\nabla^2g(\mW)](\mDelta,\mDelta) \\
&= \left\|\mDelta_{\mU}\mV^\T+ \mU\mDelta_{\mV}^\T\right\|_F^2 + 2\left\langle \mU\mV^\T - \mX^\star,\mDelta_{\mU}\mDelta_{\mV}^\T \right\rangle\\ &\quad+  \mu\left\langle \widehat\mW^\T\mW,\widehat\mDelta^\T\mDelta \right\rangle + \mu\left\langle \widehat\mW\widehat\mDelta^\T,\mDelta \mW^\T  \right\rangle\\
 &\quad + \mu\left\langle \widehat\mW\widehat\mW^\T,\mDelta \mDelta^\T\right\rangle
\\
& = -2\sigma_k.
\end{align*}
The proof of the strict saddle property is completed by noting that
\[
\left\|\mDelta\right\|_F^2 = \left\|\mDelta_{\mU}\right\|_F^2 +  \left\|\mDelta_{\mV}\right\|_F^2 =  \left\|\vphi_k\valpha^\T\right\|_F^2 + \left\|\vpsi_k\valpha^\T\right\|_F^2 =  2,
\]
which further implies
\begin{align*}
\lambda_{\min}\left(\nabla^2g(\mW)\right)&\leq \frac{[\nabla^2g(\mW)](\mDelta,\mDelta)}{\left\|\mDelta\right\|_F^2}\leq -\frac{2\sigma_k}{2}\\& =  -\|\mLambda - \mSigma\|= -\frac{1}{2}\left\|\mW\mW^\T - \mW^\star\mW^{\star\T}\right\|,
\end{align*}
where the first equality holds because
\[
\|\mLambda - \mSigma\| = \max_{i}\sigma_i - \lambda_i = \sigma_k,
\]
and the second equality follows since
\begin{align*}
&\mW\mW^\T - \mW^\star\mW^{\star\T} = \frac{1}{2} \mQ\left( \mLambda - \mSigma \right) \mQ^\T,\\
& \mQ = \begin{bmatrix}\mPhi/\sqrt{2}\\\mPsi/\sqrt{2} \end{bmatrix}, \quad \mQ^\T\mQ = \mId.
\end{align*}
We finish the proof of~\eqref{eq:strict saddle property} by noting that
\[
\sigma_k = \sigma_k(\mX^\star)\geq \sigma_r(\mX^\star).
\]

Now suppose $\mW^\star\in\calX$. Applying \eqref{eq:hessian regularizer psd}, which states that the Hessian of $\rho$ evaluated at any critical point $\mW$ is PSD, we have
\begin{align*}
&[\nabla^2g(\mW^\star)][\mDelta,\mDelta] \\
&= \left\|\mDelta_{\mU}\mV^{\star\T}+ \mU^\star\mDelta_{\mV}^\T\right\|_F^2 + 2\left\langle \mU^\star\mV^{\star\T} - \mX^\star,\mDelta_{\mU}\mDelta_{\mV}^\T \right\rangle \\&\quad+ [\nabla^2\rho(\mW^\star)][\mDelta,\mDelta]\\
& \geq \left\|\mDelta_{\mU}\mV^{\star\T}+ \mU^\star\mDelta_{\mV}^\T\right\|_F^2 + 2\left\langle \mU^\star\mV^{\star\T}- \mX^\star,\mDelta_{\mU}\mDelta_{\mV}^\T \right\rangle \\
&\geq 0
\end{align*}
since $\mU^\star\mV^{\star\T}- \mX^\star = \mzero$.
We show $g$ is not strongly convex at $\mW^\star$ by arguing that $\lambda_{\min}(\nabla^2 g(\mW^\star)) = 0$. For this purpose, we first recall that $\mU^\star = \mPhi\mSigma^{1/2},\mV^\star = \mPsi\mSigma^{1/2}$, where we assume $\mR = \mId$ without loss of generality. Let $\{\ve_1,\ve_2,\ldots,\ve_r\}$ be the standard orthobasis for $\R^r$, i.e., $\ve_\ell$ is the $\ell$-th column of the $r\times r$ identity matrix.  Construct $\mDelta_{(i,j)} = \begin{bmatrix}\mDelta_{\mU}^{(i,j)}\\ \mDelta_{\mV}^{(i,j)}\end{bmatrix}$, where
\[
\mDelta_{\mU}^{(i,j)} = \mU^\star\ve_j\ve_i^\T - \mU^\star\ve_i\ve_j^\T, \quad \mDelta_{\mV}^{(i,j)} = \mV^\star\ve_j\ve_i^\T - \mU^\star\ve_i\ve_j^\T,
\]
for any $1\leq i<j\leq r$. That is, the $\ell$-th columns of the matrices $\mDelta_{\mU}^{(i,j)}$ and $\mDelta_{\mV}^{(i,j)}$ are respectively given by
\begin{align*}
\mDelta_{\mU}^{(i,j)}[:,\ell]  = \begin{cases}\sigma^{1/2}_j\vphi_j, & \ell = i,\\ - \sigma^{1/2}_i\vphi_i, & \ell = j,\\ \vzero, &\textup{otherwise}, \end{cases}, \\ \mDelta_{\mV}^{(i,j)}[:,\ell]  = \begin{cases}\sigma^{1/2}_j\vpsi_j, & \ell = i,\\ - \sigma^{1/2}_i\vpsi_i, & \ell = j,\\ \vzero, &\textup{otherwise}, \end{cases}
\end{align*}
for any $1\leq i<j\leq r$.  We then compute the five terms in \eqref{eq:hessian} as follows
\begin{align*}
&\left\|\mDelta_{\mU}^{(i,j)}\mV^{\star\T}+ \mU^\star(\mDelta_{\mV}^{(i,j)})^\T\right\|_F^2\\ &= \left\|\sigma_i^{1/2}\sigma_j^{1/2}\left(\vphi_j\vpsi_i^\T - \vphi_i\vpsi_j^\T + \vphi_i\vpsi_j^\T - \vphi_j\vpsi_i^\T \right)\right\|_F^2 = 0,
\end{align*}
\begin{align*}
&\langle \mU^\star\mV^{\star\T} - \mX^\star,\mDelta_{\mU}^{(i,j)}(\mDelta_{\mV}^{(i,j)})^\T \rangle = 0 \ (\textup{as}\ \mU^\star\mV^{\star\T} - \mX^\star = \mzero),\\
&\langle \widehat\mW^{\star\T}\mW^\star,\widehat\mDelta_{(i,j)}^\T\mDelta_{(i,j)} \rangle = 0 \quad (\textup{as} \ \widehat\mW^{\star\T}{\mW^\star} = \mzero),\\
&\langle \widehat\mW^\star \widehat\mDelta_{(i,j)}^\T,\mDelta_{(i,j)} \mW^{\star\T} \rangle = \trace( \widehat\mW^{\star\T}\mDelta_{(i,j)}\mW^{\star\T}\widehat\mDelta_{(i,j)} ) = 0,\\
&\langle \widehat\mW^\star\widehat\mW^{\star\T},\mDelta_{(i,j)} \mDelta_{(i,j)}^\T\rangle = \trace( \widehat\mW^{\star\T}\mDelta_{(i,j)}\mDelta_{(i,j)}^\T\widehat\mW^\star) = 0 ,
\end{align*}
where the last two lines  hold because
\begin{align*}
\widehat\mW^{\star\T}\mDelta_{(i,j)} &= \mU^{\star\T}\mU^\star(\ve_j\ve_i^\T - \ve_i\ve_j^\T) - \mV^{\star\T}\mV^\star(\ve_j\ve_i^\T - \ve_i\ve_j^\T) \\ &= \mzero
 \end{align*}
since $\mU^{\star\T}\mU^\star = \mV^{\star\T}\mV^\star$.

Thus, we obtain the Hessian evaluated at the optimal solution point $\mW^\star$ along the direction $\mDelta^{(i,j)}$:
\begin{align*}
&\left[\nabla^2g(\mW^\star)\right]\left(\mDelta^{(i,j)},\mDelta^{(i,j)}\right)= 0
\end{align*}
for all $1\leq i<j\leq r$. This proves that $g(\mW)$ is not strongly convex at a global minimum point $\mW^\star\in\calX$.

\section{Proof of Theorem \ref{thm:strict saddle property over} (strict saddle property of $g(\mW)$ when over-parameterized)}\label{sec:proof strict saddle property over}
Let $\mX^\star = \mPhi\mSigma\mPsi^\T = \sum_{i=1}^{r'}\sigma_i\vphi_i\vpsi_i^\T$ be a reduced SVD of $\mX^\star$ with $r'\leq r$. Using an approach similar to that in Appendix~\ref{sec:proof lem set of critical points} for proving Lemma~\ref{lem:set of critical points}, we can show that
any $\mW = \begin{bmatrix} \mU \\ \mV \end{bmatrix}$ is a critical point of $g(\mW)$ if and only if $\mW\in\calC$ with
\begin{align*}
\calC = \bigg\{ &\mW = \begin{bmatrix} \mU \\ \mV \end{bmatrix}: \mU = \mPhi\mLambda^{1/2}\mR,\mV = \mPsi\mLambda^{1/2}\mR, \mR\mR^\T = \mId_{r'},  \\&\mLambda \textup{ is diagonal},\mLambda\geq \mzero, (\mSigma - \mLambda)\mSigma = \mzero \bigg\}.
\end{align*}
Recall that
\begin{align*}
\calX =\bigg\{\mW = \begin{bmatrix} \mU \\ \mV \end{bmatrix}: &\mU = \mPhi\mSigma^{1/2}\mR,\mV = \mPsi\mSigma^{1/2}\mR, \\&\mR\mR^\T = \mId_{r'}\bigg\}.
\end{align*}
It is clear that $\calX$ is the set of optimal solutions since for any $\mW\in\calX$, $g(\mW)$ achieves its global minimum, i.e., $g(\mW) = 0$.

Using an approach similar to that in Appendix~\ref{sec:proof strict saddle property} for proving Theorem~\ref{thm:strict saddle property}, we can show that any $\mW\in\calC\setminus\calX$  is a strict saddle satisfying
\begin{align*}
\lambda_{\min}\left(\nabla^2g(\mW)\right) \leq - \sigma_{r'}(\mX^\star).
\end{align*}

\section{Proof of Theorem \ref{thm:strict saddle property under} (strict saddle property of $g(\mW)$ when under-parameterized)}\label{sec:proof strict saddle property under}
Let $\mX^\star = \mPhi\mSigma\mPsi^\T = \sum_{i=1}^{r'}\sigma_i\vphi_i\vpsi_i^\T$ be a reduced SVD of $\mX^\star$ with $r'> r$ and $\sigma_r(\mX^\star)>\sigma_{r+1}(\mX^\star)$.
Using an approach similar to that in Appendix~\ref{sec:proof lem set of critical points} for proving Lemma~\ref{lem:set of critical points}, we can show that any $\mW = \begin{bmatrix} \mU \\ \mV \end{bmatrix}$ is a critical point of $g(\mW)$ if and only if $\mW\in\calC$ with
\begin{align*}
\calC = &\bigg\{\mW = \begin{bmatrix} \mU \\ \mV \end{bmatrix}: \mU = \mPhi[:,\Omega]\mLambda^{1/2}\mR,\mV = \mPsi[:,\Omega]\mLambda^{1/2}\mR,\\
 &\mLambda =\mSigma[\Omega,\Omega], \mR\mR^\T = \mId_{\ell}, \Omega\subset\{1,\ldots,r'\}, |\Omega| = \ell\leq r \bigg\}.
\end{align*}
Intuitively, a critical point is one such that $\mU\mV^\T$ is a rank-$\ell$ approximation to $\mX^\star$ with $\ell\leq r$ and $\mU$ and $\mV$ are equal factors of their product $\mU\mV^\T$.

It follows from the Eckart-Young-Mirsky theorem~\cite{horn2012matrix} that the set of optimal solutions is given by
\begin{align*}
\calX &=\bigg\{\mW = \begin{bmatrix} \mU \\ \mV \end{bmatrix}: \mU = \mPhi[:,1:r]\mLambda^{1/2}\mR,\\&\quad\quad\mV = \mPsi[:,1:r]\mLambda^{1/2}\mR, \mLambda = \mSigma[1:r,1:r], \mR\in\calO_r \bigg\}.
\end{align*}
Now we characterize  any $\mW\in\calC\setminus\calX$ by letting $\mW = \begin{bmatrix} \mU \\ \mV \end{bmatrix}$, where
\begin{align*}
& \mU = \mPhi[:,\Omega]\mLambda^{1/2}\mR,\mV = \mPsi[:,\Omega]\mLambda^{1/2}\mR,\\ &\mLambda =\mSigma[\Omega,\Omega],\mR\in\R^{\ell\times r},\mR\mR^\T = \mId_{\ell},\\ &\Omega\subset\{1,2,\ldots,r'\}, |\Omega| = \ell\leq r, \Omega \neq \{1,2,\ldots,r\}.
 \end{align*}

Let $\valpha\in\R^r$ be the eigenvector associated with the smallest eigenvalue of $\mU^\T\mU$ (or $\mV^\T\mV$). By the typical structures in $\mU$ and $\mV$ (see the above equation), we have
\begin{equation}\label{eq:prf strict saddle under}\begin{split}
\|\mV\valpha\|_F^2 &= \|\mU\valpha\|_F^2 = \sigma_r^2(\mU) \\&= \left\{\begin{matrix}\sigma_j(\mX^\star),& |\Omega| = r ~\text{and}~ j= \max \Omega\\
0,& |\Omega| < r,  \end{matrix}\right.
\end{split}\end{equation}
where $j>r$ because $\Omega \neq \{1,2,\ldots,r\}$.
Note that there always exists an index
\[
i\in\{1,2,\ldots,r\},i\neq \Omega
\]
since $\Omega \neq \{1,2,\ldots,r\}$ and $|\Omega|\leq r$.
We construct $\mDelta$ by setting
\[
\mDelta_{\mU} = \vphi_i\valpha^\T,\quad \mDelta_{\mV} = \vpsi_i\valpha^\T.
\]
Since $i\notin \Omega$, we have
\begin{equation}\begin{split}
\mU^\T\mDelta_{\mU}&= \mU^\T \vphi_i\valpha^\T = \mzero,\\
\mV^\T\mDelta_{\mV}&= \mV^\T \vpsi_i\valpha^\T = \mzero.
\end{split}
\label{eq:prf strict saddle 2 under}\end{equation}
We compute the five terms in \eqref{eq:hessian} as follows
\begin{align*}
&\left\|\mDelta_{\mU}\mV^\T+ \mU\mDelta_{\mV}^\T\right\|_F^2\\ &= \left\|\mDelta_{\mU}\mV^\T\right\|_F^2 + \left\| \mU\mDelta_{\mV}^\T\right\|_F^2 + 2\trace\left(\mU^\T\mDelta_{\mU}\mV^\T\mDelta_{\mV}\right)\\& = 2\sigma_r^2(\mU),
\end{align*}
\begin{align*}
\left\langle \mU\mV^\T - \mX^\star,\mDelta_{\mU}\mDelta_{\mV}^\T \right\rangle &= \left\langle \mU\mV^\T - \mX^\star,\vphi_i\vpsi_i^\T \right\rangle\\&= -\left\langle  \mX^\star,\vphi_i\vpsi_i^\T \right\rangle = -\sigma_i(\mX^\star),\\
\left\langle \widehat\mW^\T\mW,\widehat\mDelta^\T\mDelta \right\rangle &= 0 \quad (\textup{since} \quad \widehat\mW^\T\mW = \mzero),\\
\left\langle \widehat\mW\widehat\mDelta^\T,\mDelta \mW^\T  \right\rangle &= \trace\left( \widehat\mW^\T\mDelta\mW^\T\widehat\mDelta  \right) = 0,\\
\left\langle \widehat\mW\widehat\mW^\T,\mDelta \mDelta^\T\right\rangle &= \trace\left( \widehat\mW^\T\mDelta\mDelta^\T\widehat\mW\right) = 0,
\end{align*}
where the last equality in the first line holds because $\mU^\T\mDelta_{\mU} = \mzero$ (see \eqref{eq:prf strict saddle 2 under}) and $\left\|\mDelta_{\mU}\mV^\T\right\|_F^2 =\left\| \mU\mDelta_{\mV}^\T\right\|_F^2 = \sigma_r^2(\mU)$ (see \eqref{eq:prf strict saddle under}),
$\widehat\mW^\T\mW = \mzero$ in the third line holds since $\mU^\T\mU -\mV^\T\mV = \mzero$, and $\widehat\mW^\T\mDelta = \mzero$ in the fourth and last lines holds because
\[
\widehat\mW^\T\mDelta = \mU^\T\mDelta_{\mU} - \mV^\T\mDelta_{\mV} = \mzero.
\]
Now plugging these terms into \eqref{eq:hessian} yields
\begin{align*}
&[\nabla^2g(\mW)](\mDelta,\mDelta) \\
&= \|\mDelta_{\mU}\mV^\T+ \mU\mDelta_{\mV}^\T\|_F^2 + 2\langle \mU\mV^\T - \mX^\star,\mDelta_{\mU}\mDelta_{\mV}^\T \rangle\\&+  \mu(\langle \widehat\mW^\T\mW,\widehat\mDelta^\T\mDelta \rangle + \langle \widehat\mW\widehat\mDelta^\T,\mDelta \mW^\T  \rangle + \langle \widehat\mW\widehat\mW^\T,\mDelta \mDelta^\T\rangle
)\\
& = -2(\sigma_i(\mX^\star) - \sigma_r^2(\mU)).
\end{align*}
The proof of the strict saddle property is completed by noting that
\[
\|\mDelta\|_F^2 = \|\mDelta_{\mU}\|_F^2 + \|\mDelta_{\mV}\|_F^2 = 2,
\]
which further implies
\begin{align*}
\lambda_{\min}\left(\nabla^2g(\mW)\right)&\leq -2\frac{\sigma_i(\mX^\star) - \sigma_r^2(\mU)}{\|\mDelta\|_F^2} \\&\leq -\left(\sigma_r(\mX^\star) - \sigma_{r+1}(\mX^\star)\right),
\end{align*}
where the last inequality holds because of \eqref{eq:prf strict saddle under} and because $i\leq r$.

\section{Proof of Theorem \ref{thm:robust strict saddle property} (robust strict saddle for $g(\mW)$)}\label{sec:prf robust strict saddle property}

We first establish the following useful results.
\begin{lem}\label{lem:trace inequality for two PSD}
For any two PSD matrices $\mA,\mB\in\R^{n\times n}$, we have
\[
\sigma_n(\mA)\trace(\mB)\leq \trace\left(\mA\mB\right) \leq \|\mA\|\trace(\mB).
\]
\begin{proof}[Proof of Lemma~\ref{lem:trace inequality for two PSD}]
Let $\mA = \mPhi_1\mLambda_1\mPhi_1^\T$ and $\mB = \mPhi_2\mLambda_2\mPhi_2^\T$ be the eigendecompositions of $\mA$ and $\mB$, respectively. Here $\mLambda_1$ ($\mLambda_2$) is a diagonal matrix with the eigenvalues of $\mA$ ($\mB$) along its diagonal. We first rewrite $\trace\left(\mA\mB\right)$ as
\begin{align*}
\trace\left(\mA\mB\right) = \trace\left(\mLambda_1\mPhi_1^\T \mPhi_2\mLambda_2\mPhi_2^\T \mPhi_1\right).
\end{align*}
Noting that $\mLambda_1$ is a diagonal matrix, we have
\begin{align*}
&\trace\left(\mLambda_1\mPhi_1^\T \mPhi_2\mLambda_2\mPhi_2^\T \mPhi_1\right) \\ &\geq \min_{i}\mLambda_1[i,i]\cdot\trace\left(\mPhi_1^\T \mPhi_2\mLambda_2\mPhi_2^\T \mPhi_1\right) \\&= \sigma_n(\mA)\trace(\mB).
\end{align*}
The other direction follows similarly.
\end{proof}
\end{lem}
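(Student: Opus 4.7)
The plan is to use the eigendecomposition of the PSD matrix $\mA$ to reduce the trace to a sum weighted by eigenvalues, and then bound that sum using the extreme eigenvalues of $\mA$. Specifically, I would write $\mA = \mPhi\mLambda\mPhi^\T$ where $\mLambda$ is diagonal with the eigenvalues $\lambda_1,\ldots,\lambda_n$ of $\mA$ along the diagonal (equal to the singular values since $\mA\succeq \mzero$), and $\mPhi$ is orthogonal.

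Using the cyclic property of the trace, I would rewrite
\[
\trace(\mA\mB) = \trace(\mPhi\mLambda\mPhi^\T\mB) = \trace(\mLambda \mPhi^\T\mB\mPhi) = \trace(\mLambda\mC),
\]
where $\mC := \mPhi^\T\mB\mPhi$. The key observation is that $\mC$ is also PSD (as a congruence of the PSD matrix $\mB$ by an orthogonal matrix), hence its diagonal entries $\mC[i,i]$ are nonnegative. Since $\mLambda$ is diagonal, $\trace(\mLambda\mC) = \sum_i \lambda_i\,\mC[i,i]$, and nonnegativity of $\mC[i,i]$ immediately yields the sandwich
\[
\left(\min_i \lambda_i\right)\sum_i \mC[i,i] \;\le\; \sum_i \lambda_i\,\mC[i,i] \;\le\; \left(\max_i \lambda_i\right)\sum_i \mC[i,i].
\]

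To finish, I would identify $\min_i\lambda_i = \sigma_n(\mA)$ and $\max_i\lambda_i = \|\mA\|$ (both valid because $\mA\succeq \mzero$), and evaluate $\sum_i \mC[i,i] = \trace(\mC) = \trace(\mPhi^\T\mB\mPhi) = \trace(\mB\mPhi\mPhi^\T) = \trace(\mB)$ using cyclic invariance and orthogonality of $\mPhi$. Substituting gives the claimed two-sided inequality.

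I do not anticipate a real obstacle here: the argument is a standard application of simultaneous diagonalization together with the fact that the diagonal of a PSD matrix is nonnegative. The only subtlety is remembering that one needs PSD-ness of \emph{both} matrices to turn the eigenvalue-weighted sum into a clean bound in terms of $\trace(\mB)$, since otherwise individual terms $\mC[i,i]$ could be negative and the inequality directions would flip. An equivalent alternative I would mention as a sanity check is to use $\sigma_n(\mA)\mId \preceq \mA \preceq \|\mA\|\mId$ in the Loewner order, then take the trace against $\mB\succeq \mzero$ using the well-known fact $\mX\preceq \mY \Rightarrow \trace(\mX\mB)\le \trace(\mY\mB)$ for PSD $\mB$; this gives the same bounds and is essentially a repackaging of the eigenvalue argument above.
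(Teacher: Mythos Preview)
Your proof is correct and follows essentially the same approach as the paper: diagonalize $\mA$, use cyclicity of the trace to reduce to $\trace(\mLambda\,\mPhi^\T\mB\mPhi)$, and bound the resulting diagonal-weighted sum using the nonnegative diagonal entries of the PSD matrix $\mPhi^\T\mB\mPhi$. The paper additionally writes out the eigendecomposition of $\mB$, but this is cosmetic and not used beyond recognizing that $\mPhi_1^\T\mB\mPhi_1$ is PSD, exactly as you do.
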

\begin{cor}\label{cor:inequality for Frobunius of products}
For any two matrices $\mA\in\R^{n\times r}$ and $\mB\in\R^{r\times r}$, we have
\[
\sigma_r(\mB)\|\mA\|_F\leq \left\|\mA\mB\right\|_F \leq \|\mB\|\|\mA\|_F.
\]
\end{cor}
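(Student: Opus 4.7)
The plan is to derive Corollary~\ref{cor:inequality for Frobunius of products} as a direct consequence of Lemma~\ref{lem:trace inequality for two PSD} by rewriting the squared Frobenius norm as a trace of a product of two PSD matrices of compatible size. Concretely, I would start from the identity
\[
\|\mB\mA\|_F^2 = \trace\bigl((\mB\mA)^\T \mB\mA\bigr) = \trace\bigl(\mA^\T \mB^\T \mB \mA\bigr) = \trace\bigl(\mB^\T\mB \cdot \mA\mA^\T\bigr),
\]
where the last step uses the cyclic property of the trace. The key point is that after the cyclic reshuffle, both $\mB^\T\mB$ and $\mA\mA^\T$ lie in $\R^{r\times r}$ and are PSD, so Lemma~\ref{lem:trace inequality for two PSD} applies cleanly.

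Next, I would invoke Lemma~\ref{lem:trace inequality for two PSD} twice, taking $\mA\mA^\T$ as the matrix whose extreme eigenvalues control the bound and $\mB^\T\mB$ as the matrix whose trace equals $\|\mB\|_F^2$. This yields
\[
\sigma_r(\mA\mA^\T)\, \trace(\mB^\T\mB) \;\leq\; \trace\bigl(\mA\mA^\T \cdot \mB^\T\mB\bigr) \;\leq\; \|\mA\mA^\T\|\, \trace(\mB^\T\mB).
\]
Using $\sigma_r(\mA\mA^\T) = \sigma_r^2(\mA)$, $\|\mA\mA^\T\| = \|\mA\|^2$, and $\trace(\mB^\T\mB) = \|\mB\|_F^2$, the chain of inequalities becomes
\[
\sigma_r^2(\mA)\, \|\mB\|_F^2 \;\leq\; \|\mB\mA\|_F^2 \;\leq\; \|\mA\|^2\, \|\mB\|_F^2.
\]
Taking square roots (all quantities are nonnegative) gives the claimed two-sided bound.

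There is essentially no obstacle here: the proof is a one-line application of the preceding lemma once the squared Frobenius norm is rewritten as a trace and the cyclic property is used to expose two PSD factors of the same size. The only subtlety worth flagging is the dimensional matching, since $\mA \in \R^{r \times r}$ and $\mB \in \R^{n \times r}$ force us to consider the product in the order $\mB\mA$ (which is the natural interpretation given the shapes), and to place $\mA\mA^\T$ and $\mB^\T\mB$ in the common $r \times r$ ambient space before invoking Lemma~\ref{lem:trace inequality for two PSD}. Beyond that, the estimate $\sigma_r(\mA\mA^\T) = \sigma_r^2(\mA)$ follows from the standard relation between singular values of $\mA$ and eigenvalues of $\mA\mA^\T$, and $\|\mA\mA^\T\| = \|\mA\|^2$ from the corresponding relation for the spectral norm.
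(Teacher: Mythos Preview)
Your proposal is correct and is precisely the intended derivation: the paper states this as an immediate corollary of Lemma~\ref{lem:trace inequality for two PSD} without writing out a proof, and your argument---rewriting $\|\mB\mA\|_F^2$ as $\trace(\mA\mA^\T\cdot\mB^\T\mB)$ via the cyclic trace property and then applying the lemma to the two $r\times r$ PSD factors---is exactly how that corollary is meant to follow. Your observation about the dimensional mismatch in the stated product order is also apt; the usage in the paper (e.g., in bounding $\|\mE(\mR^\T\overline\mLambda\mR+\mE^\T\mE)\|_F$) confirms that $\mB\mA$ is what is intended.
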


We provide one more result before proceeding to prove the main theorem.
\begin{lem}\label{lem:regularity property for PSD}
Suppose $\mA,\mB\in\R^{n\times r}$ such that $\mA^\T\mB = \mB^\T\mA\succeq \mzero$ is PSD. If $\|\mA - \mB\|\leq \frac{\sqrt{2}}{2}\sigma_r(\mB)$, we have
\begin{equation}\label{eq:regularity property for PSD_1}\begin{split}
&\underbrace{\left\langle \left(\mA\mA^\T - \mB\mB^\T\right)\mA, \mA - \mB\right\rangle}_{(\aleph_1)}\\
 &\geq \frac{1}{16}(\underbrace{\trace((\mA - \mB)^\T(\mA - \mB)\mB^\T\mB)}_{(\aleph_2)} + \underbrace{\|\mA\mA^\T - \mB\mB^\T\|_F^2}_{(\aleph_3)}).
\end{split}\end{equation}

\end{lem}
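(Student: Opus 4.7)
The plan is to parametrize by $\mDelta = \mA - \mB$ and reduce the claimed inequality to a scalar inequality in three nonnegative quadratic quantities plus one indefinite cubic cross term, which will then be controlled by combining the PSD alignment $\mA^\T\mB\succeq\mzero$ with the perturbation bound $\|\mDelta\|\leq \tfrac{\sqrt 2}{2}\sigma_r(\mB)$. First I would extract algebraic consequences of the hypotheses: the identity $\mA^\T\mB = \mB^\T\mA$ forces $\mC := \mB^\T\mDelta = \mDelta^\T\mB$ to be a symmetric $r\times r$ matrix, while the PSD assumption becomes $\mE + \mC \succeq \mzero$, where $\mE := \mB^\T\mB$. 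Moreover, combining $\|\mDelta\|\le\tfrac{\sqrt 2}{2}\sigma_r(\mB)$ with Corollary~\ref{cor:inequality for Frobunius of products} yields $\|\mD\|_F = \|\mDelta\mDelta^\T\|_F \le \|\mDelta\|\|\mDelta\|_F \le \tfrac{1}{\sqrt 2}\sqrt{\aleph_2}$ (where $\mD := \mDelta^\T\mDelta$), so $\|\mD\|_F^2 \le \tfrac{1}{2}\aleph_2$.

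Next I would expand $\mA\mA^\T - \mB\mB^\T = \mB\mDelta^\T + \mDelta\mB^\T + \mDelta\mDelta^\T$ and $(\mA-\mB)\mA^\T = \mDelta\mB^\T + \mDelta\mDelta^\T$, take their Frobenius inner product, and collect terms. Using the symmetry of $\mC$ and the cyclic property of the trace, the three quantities reduce to
\begin{align*}
\aleph_1 &= \|\mC\|_F^2 + \aleph_2 + \|\mD\|_F^2 + 3\,\trace(\mD\mC),\\
\aleph_3 &= 2\aleph_2 + 2\|\mC\|_F^2 + \|\mD\|_F^2 + 4\,\trace(\mD\mC),
\end{align*}
so the target $\aleph_1 \ge \tfrac{1}{16}(\aleph_2 + \aleph_3)$ becomes the scalar inequality
\[
14\|\mC\|_F^2 + 13\aleph_2 + 15\|\mD\|_F^2 + 44\,\trace(\mD\mC) \ge 0.
\]

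The crux is to lower-bound the indefinite cubic term $\trace(\mD\mC)$. Two complementary bounds are available: (i) since $\mD\succeq\mzero$ and $\mE+\mC\succeq\mzero$, the product-of-PSD-matrices property gives $\trace(\mD(\mE+\mC)) \ge 0$, hence $\trace(\mD\mC) \ge -\aleph_2$; and (ii) Cauchy--Schwarz combined with the size estimate of Step~1 gives $|\trace(\mD\mC)| \le \|\mD\|_F\|\mC\|_F \le \tfrac{1}{\sqrt 2}\sqrt{\aleph_2}\|\mC\|_F$. I would then lower-bound $44\,\trace(\mD\mC)$ by a suitable convex combination of these two estimates and absorb the resulting cross product $\sqrt{\aleph_2}\|\mC\|_F$ into $14\|\mC\|_F^2$ and $13\aleph_2$ via AM--GM, with weights chosen so that the remaining coefficients are nonnegative. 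The $15\|\mD\|_F^2$ term provides additional slack and, via $\|\mD\|_F^2 \le \tfrac{1}{2}\aleph_2$, confirms the cubic term cannot dominate.

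The main obstacle is the tightness of the constant $\tfrac{1}{16}$: the coefficient $44$ in front of $\trace(\mD\mC)$ cannot be absorbed using Cauchy--Schwarz alone, as verifying this reduces to a quadratic in the AM--GM weight with negative discriminant. The PSD alignment bound $\trace(\mD\mC) \ge -\aleph_2$ is therefore essential---it handles the regime where $\mC$ is large relative to $\sqrt{\aleph_2}$, while Cauchy--Schwarz handles the regime where $\mC$ is small. Balancing the interpolation between these two bounds so that the resulting coefficients of $\|\mC\|_F^2$, $\aleph_2$, and $\|\mD\|_F^2$ remain nonnegative is the technical subtlety; it is precisely the combination of PSD alignment and second-order smallness of $\mD$ that makes the constant $\tfrac{1}{16}$ attainable.
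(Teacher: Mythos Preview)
Your algebraic reductions are correct: with $\mC=\mB^\T\mDelta$ and $\mD=\mDelta^\T\mDelta$ (both symmetric), the target is exactly $14\|\mC\|_F^2+13\aleph_2+15\|\mD\|_F^2+44\trace(\mD\mC)\ge 0$, and the estimate $\|\mD\|_F^2\le\tfrac12\aleph_2$ is right. But the endgame you outline does not close. Once you substitute $\|\mD\|_F\le\tfrac{1}{\sqrt2}\sqrt{\aleph_2}$ into the Cauchy--Schwarz bound \emph{before} completing the square, the $15\|\mD\|_F^2$ term decouples and contributes only $\ge 0$; you are then asking for some $t\in[0,1]$ with
\[
14c^2+(13-44t)a-22\sqrt2\,(1-t)\sqrt a\,c\ \ge\ 0\qquad\text{for all }a,c\ge 0.
\]
Forcing the discriminant in $c$ to be nonpositive gives $968(1-t)^2\le 56(13-44t)$, i.e.\ $968t^2+528t+240\le 0$, which has no real solution. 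So no convex combination of your bounds (i) and (ii), followed by AM--GM on the cross product $\sqrt{\aleph_2}\,\|\mC\|_F$, can work. A case split on the size of $\|\mC\|_F$ fails for the same reason: the two regimes leave an uncovered gap.

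The fix---and this is precisely what the paper does---is to \emph{not} replace $\|\mD\|_F$ by $\sqrt{\aleph_2}$ early. Complete the square in $\|\mC\|_F$ and $\|\mD\|_F$ first (equivalently, the paper absorbs the entire $\tfrac{11}{4}\trace(\mD\mC)$ into the nonnegative square $\bigl\|\sqrt{7/8}\,\mC+\sqrt{121/56}\,\mD\bigr\|_F^2$, using only that $\mC,\mD$ are symmetric), and invoke $\|\mD\|_F^2\le\tfrac12\aleph_2$ only on the residual. In your notation,
\[
14\|\mC\|_F^2+44\trace(\mD\mC)+15\|\mD\|_F^2\ \ge\ 14\Bigl(\|\mC\|_F-\tfrac{11}{7}\|\mD\|_F\Bigr)^2-\tfrac{137}{7}\|\mD\|_F^2,
\]
and then $13\aleph_2-\tfrac{137}{7}\|\mD\|_F^2\ge\bigl(13-\tfrac{137}{14}\bigr)\aleph_2=\tfrac{45}{14}\aleph_2\ge 0$. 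Note that bound~(i) is never invoked: the PSD hypothesis $\mA^\T\mB\succeq\mzero$ plays no role---only the symmetry $\mA^\T\mB=\mB^\T\mA$ (hence symmetry of $\mC$) is used, and the paper's proof likewise never appeals to positive semidefiniteness of $\mA^\T\mB$.
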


\begin{proof}
Denote $\mE = \mA - \mB$. We first rewrite the terms $(\aleph_1)$, $(\aleph_2)$ and $(\aleph_3)$ as follows
\begin{align*}
(\aleph_1) & = \trace\bigg(\left(\mE^\T\mE\right)^2 + 3\mE^\T\mE\mE^\T\mB + \left(\mE^\T\mB\right)^2\\
 &\quad\quad\quad\quad + \mE^\T\mE\mB^\T\mB\bigg),\\
(\aleph_2) & = \trace\left(\mE^\T\mE\mB^\T\mB\right),\\
(\aleph_3) & = \trace\bigg(\left(\mE^\T\mE\right)^2 + 4\mE^\T\mE\mE^\T\mB + 2\left(\mE^\T\mB\right)^2 \\
&\quad\quad\quad\quad
+ 2\mE^\T\mE\mB^\T\mB\bigg),
\end{align*}
where $\mE^\T\mB = \mA^\T\mB - \mB^\T\mB = \mB^\T\mE$. Now we have
\begin{align*}
&(\aleph_1) - \frac{1}{16}(\aleph_2) - \frac{1}{16}(\aleph_3)\\
&= \trace\bigg( \frac{15}{16} \left(\mE^\T\mE\right)^2 + \frac{11}{4}\mE^\T\mE\mE^\T\mB + \frac{7}{8}\left(\mE^\T\mB\right)^2 \\
&\quad\quad\quad\quad+ \frac{13}{16}\mE^\T\mE\mB^\T\mB\bigg)\\
& = \left\|\sqrt{\frac{121}{56}}\mE^\T\mE + \sqrt{\frac{7}{8}}\mE^\T\mB\right\|_F^2\\
&\quad +\trace\left( \frac{13}{16}\mE^\T\mE\mB^\T\mB - \frac{137}{112}\mE^\T\mE\mE^\T\mE\right)\\
& \geq \trace\left( \frac{13}{16}\mE^\T\mE\sigma_r^2(\mB) - \frac{137}{112}\mE^\T\mE\|\mE\|^2\right)\\
& \geq \trace\left( \left(\frac{13}{16} - \frac{137}{112}\frac{1}{2}\right)\sigma_r^2(\mB)\mE^\T\mE \right)\\
& \geq 0,
\end{align*}
where the third line follows from Lemma~\ref{lem:trace inequality for two PSD} and the fourth line holds because by assumption $\|\mE\|\leq \frac{\sqrt{2}}{2}\sigma_r(\mB)$.
\end{proof}

Now we turn to prove the main results. Recall that $\mu = \frac{1}{2}$ throughout the proof.

\subsection{Regularity condition for the region $\calR_1$}\label{prf:locol sescent}
It follows from Lemma~\ref{lem:Procrustes problem} that $\mW^\T\mW^\star \mR = \mR^\T\mW^{\star\T}\mW$ is PSD, where $\mR = \argmin_{R'\in\O_r}\|\mW-\mW^\star \mR'\|_F^2$. We first perform the change of variable $\mW^\star\mR \rightarrow \mW^\star$ to avoid $\mR$ in the following equations. With this change of variable we have instead $\mW^\T\mW^\star = \mW^{\star\T}\mW$ is PSD. We now rewrite the gradient $\nabla g(\mW)$ as follows:
\begin{equation}\label{eq:local descent gradient g}\begin{split}
\nabla g(\mW)&= \begin{bmatrix}\mzero & \mU\mV^\T -\mU^\star\mV^{\star\T}\\ \mV\mU^\T -\mV^\star\mU^{\star\T} & \mzero \end{bmatrix}
\mW \\
&\quad+ \mu \widehat\mW(\widehat\mW^\T\mW)\\
& = \frac{1}{2}\left(\mW\mW^\T - \mW^\star\mW^{\star\T}\right)\mW + \frac{1}{2}\widehat\mW^\star\widehat\mW^{\star\T}\mW \\
&\quad+ (\mu - \frac{1}{2})\widehat\mW \widehat\mW^\T\mW\\
& = \frac{1}{2} \left(\mW\mW^\T - \mW^\star\mW^{\star\T}\right)\mW + \frac{1}{2} \widehat\mW^\star\widehat\mW^{\star\T}\mW.
\end{split}\end{equation}
Plugging this into the left hand side of \eqref{eq:thm regularity condition} gives
\begin{equation}\label{eq:proof descent main}
\begin{split}
&\left\langle \nabla g(\mW), \mW - \mW^\star\right\rangle \\
&= \frac{1}{2}  \left\langle \left(\mW\mW^\T - \mW^\star\mW^{\star\T}\right)\mW, \mW - \mW^\star\right\rangle \\
& \quad +  \frac{1}{2}  \left\langle \widehat\mW^\star\widehat\mW^{\star\T}\mW, \mW - \mW^\star\right\rangle\\
& = \frac{1}{2}  \left\langle \left(\mW\mW^\T - \mW^\star\mW^{\star\T}\right)\mW, \mW - \mW^\star\right\rangle \\
&\quad + \frac{1}{2}  \left\langle \widehat\mW^\star\widehat\mW^{\star\T}, \mW\mW^\T\right\rangle
\end{split}\end{equation}
where the last line follows from the fact that $\mW^{\star\T}\widehat\mW^\star = \mzero$. We first show the first term in the right hand side of the above equation is sufficiently large
\begin{equation}\begin{split}
&\left\langle \left(\mW\mW^\T - \mW^\star\mW^{\star\T}\right)\mW, \mW - \mW^\star\right\rangle \\
&\geq \frac{1}{16}\trace\left((\mW-\mW^\star)^\T(\mW-\mW^\star)\mW^{\star\T}\mW^\star\right) \quad\\
&+ \frac{1}{16}\left\|\mW\mW^\T - \mW^\star\mW^{\star\T}\right\|_F^2\\
& \geq \frac{1}{16}\sigma_r(\mW^{\star\T}\mW^\star)\left\|\mW-\mW^\star\right\|_F^2 \\&\quad+ \frac{1}{16}\left\|\mW\mW^\T - \mW^\star\mW^{\star\T}\right\|_F^2\\
& = \frac{1}{8}\sigma_r(\mX^\star)\left\|\mW-\mW^\star\right\|_F^2 + \frac{1}{16}\left\|\mW\mW^\T - \mW^\star\mW^{\star\T}\right\|_F^2,
\end{split}\label{eq:proof descent main second term 2}\end{equation}
where the first inequality follows from Lemma~\ref{lem:regularity property for PSD} since $\mW^\T\mW^{\star} = \mW^{\star\T}\mW$ is PSD and $\left\|\mW - \mW^\star\right\|\leq \sigma_r^{1/2}(\mX^\star) = \frac{\sqrt{2}}{2}\sigma_r(\mW^\star)$, the second inequality follows from Lemma~\ref{lem:trace inequality for two PSD}, and the last line holds because $\sigma_r\left(\widehat\mW^{\star\T}\widehat\mW^\star\right) = \sigma_r\left(\widehat\mU^{\star\T}\widehat\mU^\star + \widehat\mV^{\star\T}\widehat\mV^\star\right) = 2\sigma_r\left(\mSigma\right)=2\sigma_r\left(\mX^{\star}\right)$. We then show the second term in the right hand side of \eqref{eq:proof descent main} is lower bounded by
\begin{equation}\begin{split}\label{eq:proof descent main second term}
&\left\langle \widehat\mW^\star\widehat\mW^{\star\T}, \mW\mW^\T\right\rangle \\& =  \frac{1}{2\left\|\mX^{\star}\right\|}\left\|\widehat\mW^{\star\T}\widehat\mW^\star\right\|\trace\left(\widehat\mW^{\star\T}\mW\mW^\T\widehat\mW^\star\right)\\
& \geq \frac{1}{2\left\|\mX^{\star}\right\|} \trace\left( \widehat\mW^{\star\T}\widehat\mW^\star\widehat\mW^{\star\T}\mW\mW^\T\widehat\mW^\star\right) \\
& =  \frac{1}{2\left\|\mX^{\star}\right\|}\left\|\widehat\mW^\star\widehat\mW^{\star\T}\mW\right\|_F^2
\end{split}\end{equation}
where the first line holds because $\left\|\widehat\mW^{\star\T}\widehat\mW^\star\right\| = \left\|\widehat\mU^{\star\T}\widehat\mU^\star + \widehat\mV^{\star\T}\widehat\mV^\star\right\| = 2\left\|\mSigma\right\|=2\left\|\mX^{\star}\right\|$, and
the inequality follows from Lemma~\ref{lem:trace inequality for two PSD}.

On the other hand, we attempt to control the gradient of $g(\mW)$. To that end,
it follows from \eqref{eq:local descent gradient g} that
\begin{equation}\begin{split}\label{eq:proof descent bound gradient}
&\left\|\nabla g(\mW)\right\|_F^2\\ &= \frac{1}{4}\left\|   \left(\mW\mW^\T - \mW^\star\mW^{\star\T}\right)\mW + \widehat\mW^\star\widehat\mW^{\star\T}\mW \right\|_F^2\\
& \leq \frac{12}{47} \left\|   \left(\mW\mW^\T - \mW^\star\mW^{\star\T}\right)\mW \right\|_F^2 + 12  \left\| \widehat\mW^\star\widehat\mW^{\star\T}\mW \right\|_F^2\\
& \leq \frac{12}{47}\|\mW\|^2\left\|\mW\mW^\T - \mW^\star\mW^{\star\T} \right\|_F^2 + 12  \left\| \widehat\mW^\star\widehat\mW^{\star\T}\mW \right\|_F^2,
\end{split}\end{equation}
where the first inequality holds since $(a+b)^2\leq \frac{1+\epsilon}{\epsilon}a^2 + (1+\epsilon)b^2$ for any $\epsilon>0$.

Combining \eqref{eq:proof descent main}-\eqref{eq:proof descent bound gradient}, we can conclude the proof of \eqref{eq:thm regularity condition} as long as we can show the following inequality:
\begin{align*}
&\frac{1}{8}\left\|\mW\mW^\T - \mW^\star\mW^{\star\T}\right\|_F^2
 \\&\geq \frac{1}{47}\frac{\|\mW\|^2}{\|\mX^\star\|}\left\|\mW\mW^\T - \mW^\star\mW^{\star\T} \right\|_F^2 .
\end{align*}
To that end, we upper bound  $\|\mW\|$ as follows:
\begin{align*}
\left\|\mW\right\| &\leq \left\| \mW^\star\right\| + \left\|\mW - \mW^\star\right\|\\
& \leq  \sqrt{2}\sigma_1^{1/2}(\mX^\star) + \left\|\mW - \mW^\star\right\|_F\\
& \leq (\sqrt{2}+1)\sigma_1^{1/2}(\mX^\star)
\end{align*}
since $\|\mW^\star\| = \sqrt{2}\sigma_1^{(1/2)}(\mX^\star) $ and $\dist(\mW,\mW^\star)\leq \sigma_r^{(1/2)}(\mX^\star)$.
This completes the proof of \eqref{eq:thm regularity condition}.

\subsection{Negative curvature for the region $\calR_2$}
To show \eqref{eq:thm negative curvature}, we utilize a strategy similar to that used in Appendix~\ref{sec:proof strict saddle property} for proving the strict saddle property of $g(\mW)$ by constructing a direction $\mDelta$ such that the Hessian evaluated at $\mW$ along this direction is negative. For this purpose, denote
\begin{align}
\mQ = \begin{bmatrix} \mPhi/\sqrt{2} \\ \mPsi/\sqrt{2} \end{bmatrix},
\label{eq:define Q}\end{align}
where we recall that $\mPhi$ and $\mPsi$ consist of the left and right singular vectors of $\mX^\star$, respectively.
The optimal solution $\mW^\star$ has a compact SVD $\mW^\star = \mQ(\sqrt{2}\mSigma^{1/2})\mR $. For notational convenience, we denote $\overline\mSigma = 2\mSigma$, where $\overline\mSigma$ is a diagonal matrix whose diagonal entries in the upper left corner are $\overline\sigma_1, \ldots,\overline \sigma_r$.

In order to characterize the neighborhood near all strict saddles $\calC\setminus\calX$,
we consider  $\mW$ such that $\sigma_r(\mW)\leq \sqrt{\frac{3}{8}}\sigma_r^{1/2}(\mX^\star)$. 
\revise{Let $\Span(\mQ)$ be the column space of $\mQ$ and
\[
\vq:= \argmin_{\overline\vq\in \Span(\mQ),\|\overline\vq\|=1}\overline\vq^\top \mW\mW^\top \overline\vq.
\] 
Using the min-max principle for the singular value
\[
\sigma_r(\mW\mW^\top) = \max_{\dim(\calS)=r}\quad\min_{\overline\vq\in \calS,\|\overline\vq\|=1}\overline\vq^\top \mW\mW^\top \overline\vq
\]
where $\calS$ denotes a subspace in $R^{n+m}$ and $\dim(\calS)$ denotes its dimension, we have
\begin{align}
\vq^\top \mW\mW^\top \vq\leq \sigma_r^2(\mW)\leq \frac{3}{8}\sigma_r(\mX^\star).
\label{eq:near saddle lambda k}\end{align}
}
Let $\valpha\in\R^r$ be the eigenvector associated with the smallest eigenvalue of $\mW^\T\mW$.

Recall that $\mu = \frac{1}{2}$.
We show that the function $g(\mW)$ at $\mW$ has directional negative curvature along the direction
\begin{align}
\mDelta = \revise{\vq}\valpha^\T.
\label{eq:Delta for negative curvature}\end{align}
We repeat the Hessian evaluated at $\mW$ for $\mDelta$ as follows
\begin{align*}
&[\nabla^2g(\mW)](\mDelta,\mDelta) \\&= \underbrace{\left\|\mDelta_{\mU}\mV^\T+\mU\mDelta_{\mV}^\T\right\|_F^2}_{\Pi_1} + 2\underbrace{\left\langle \mU\mV^\T - \mX^\star,\mDelta_{\mU}\mDelta_{\mV}^\T \right\rangle}_{\Pi_2}\\
& \quad + \frac{1}{2}\underbrace{\left\langle \widehat\mDelta\widehat\mW^\T,\mDelta \mW^\T  \right\rangle}_{\Pi_3} + \frac{1}{2}\underbrace{\left\langle \widehat\mW\widehat\mDelta^\T,\mDelta \mW^\T  \right\rangle}_{\Pi_4} \\
&\quad+ \frac{1}{2}\underbrace{\left\langle \widehat\mW\widehat\mW^\T,\mDelta \mDelta^\T\right\rangle}_{\Pi_5}.
\end{align*}
The remaining part is to bound the five terms.

\noindent{\bf Bounding terms $\Pi_1$, $\Pi_3$ and $\Pi_4$:}
We first rewrite these three terms:
\begin{align*}
\Pi_1&= \|\mDelta_{\mU}\mV^\T\|_F^2+\|\mU\mDelta_{\mV}^\T\|_F^2  + 2 \left\langle\mU\mDelta_{\mV}^\T,\mDelta_{\mU}\mV^\T \right\rangle ,\\
\Pi_3 &= \left\langle \widehat\mDelta\widehat\mW^\T,\mDelta \mW^\T  \right\rangle = \|\mDelta_{\mU}\mU^\T\|_F^2 + \|\mDelta_{\mV}\mV^\T\|_F^2 \\&\quad\quad\quad\quad\quad\quad\quad\quad\quad\quad- \|\mDelta_{\mU}\mV^\T\|_F^2 - \|\mDelta_{\mV}\mU^\T\|_F^2,\\
\Pi_4 &= \left\langle \mU\mDelta_{\mU}^\T,\mDelta_{\mU}\mU^\T \right\rangle + \left\langle\mV\mDelta_{\mV}^\T,\mDelta_{\mV}\mV^\T \right\rangle \\
&\quad- 2 \left\langle\mU\mDelta_{\mV}^\T,\mDelta_{\mU}\mV^\T \right\rangle\\
& \leq  \|\mDelta_{\mU}\mU^\T\|_F^2 + \|\mDelta_{\mV}\mV^\T\|_F^2 -  2 \left\langle\mU\mDelta_{\mV}^\T,\mDelta_{\mU}\mV^\T \right\rangle,
\end{align*}
which implies
\begin{equation}\begin{split}
&\Pi_1 + \frac{1}{2} \Pi_3 +  \frac{1}{2}\Pi_4\\& \leq \|\mDelta_{\mU}\mV^\T\|_F^2+\|\mU\mDelta_{\mV}^\T\|_F^2 + \|\mDelta_{\mU}\mU^\T\|_F^2 + \|\mDelta_{\mV}\mV^\T\|_F^2  \\
&\quad - \frac{1}{2}\|\mDelta_{\mU}\mV^\T\|_F^2 - \frac{1}{2} \|\mDelta_{\mV}\mU^\T\|_F^2 + \left\langle\mU\mDelta_{\mV}^\T,\mDelta_{\mU}\mV^\T \right\rangle
\\& = \|\mW\mDelta^\T\|_F^2 -\frac{1}{2}\left\|\mDelta_{\mU}\mV^\T - \mU\mDelta_{\mV}^\T\right\|_F^2
\\& \leq \|\mW\mDelta^\T\|_F^2.
\end{split}\label{eq:near saddle 1_1}\end{equation}
Noting that $\mDelta^\T\mDelta = \valpha\revise{\vq^\T\vq}\valpha^\T = \valpha\valpha^\T$,
we now compute $\|\mW\mDelta^\T\|_F^2$ as
\begin{align*}
\|\mW\mDelta^\T\|_F^2 &= \trace\left(\mW^\T\mW\mDelta^\T\mDelta\right) = \trace\left(\mW^\T\mW\valpha\valpha^\T\right) \\&= \sigma_r^2(\mW).
\end{align*}
Plugging this into \eqref{eq:near saddle 1_1} gives
\begin{equation}\begin{split}
\Pi_1 + \frac{1}{2} \Pi_3 +  \frac{1}{2}\Pi_4& \leq \sigma_r^2(\mW).
\end{split}\label{eq:near saddle 1_2}\end{equation}

\noindent{\bf Bounding terms $\Pi_2$ and $\Pi_5$:}
To obtain an upper bound for the term $\Pi_2$, we first rewrite it as follows
\begin{align*}
\Pi_2& =\left\langle \mU\mV^\T - \mX^\star,\mDelta_{\mU}\mDelta_{\mV}^\T \right\rangle \\&= \frac{1}{2}\left\langle \begin{bmatrix}\mzero & \mU\mV^\T -\mU^\star\mV^{\star\T}\\ \mV\mU^\T -\mV^\star\mU^{\star\T} & \mzero \end{bmatrix}, \mDelta\mDelta^\T\right\rangle\\
& = \frac{1}{4}\left\langle \mW\mW^\T - \mW^\star\mW^{\star\T}, \mDelta\mDelta^\T\right\rangle - \frac{1}{4}\left\langle \widehat\mW\widehat\mW^\T \mDelta\mDelta^\T\right\rangle \\&\quad+
\frac{1}{4}\left\langle \widehat\mW^\star\widehat\mW^{\star\T}, \mDelta\mDelta^\T\right\rangle.
\end{align*}
 We then have
 \begin{equation}\begin{split}
 2\Pi_2 + \frac{1}{2}\Pi_5 &= \frac{1}{2}\left\langle \mW\mW^\T - \mW^\star\mW^{\star\T}, \mDelta\mDelta^\T\right\rangle  \\&\quad+
\frac{1}{2}\left\langle \widehat\mW^\star\widehat\mW^{\star\T}, \mDelta\mDelta^\T\right\rangle.
 \end{split}\label{eq:near saddle 2_1}\end{equation}
To bound these two terms in the above equation, we note that \revise{$\vq\in\Span(\mQ)$ and thus can be written as $\vq = \sum_{i=1}^r a_i \vq_i$ with $\sum_i a_i^2 = 1$. It follows that
\[
\mDelta\mDelta^\T = \sum_{i,j}a_ia_j\vq_i\vq_j^\T = \sum_{i,j}\frac{a_ia_j}{2}\begin{bmatrix}\vphi_i\vphi_j^\T & \vphi_i\vpsi_j^\T\\ \vpsi_i\vphi_j^\T & \vpsi_i\vpsi_j^\T\end{bmatrix}.
\]
Then we have
\begin{align*}
&\left\langle \widehat\mW^\star\widehat\mW^{\star\T}, \mDelta\mDelta^\T\right\rangle \\
&= \sum_{i,j}\frac{a_ia_j}{2}\left\langle \begin{bmatrix}\mPhi\mSigma\mPhi^\T & -\mPhi\mSigma\mPsi^\T \\ -\mPsi\mSigma\mPhi^\T & \mPsi\mSigma\mPsi^\T \end{bmatrix}, \begin{bmatrix}\vphi_i\vphi_j^\T & \vphi_i\vpsi_j^\T\\ \vpsi_i\vphi_j^\T & \vpsi_i\vpsi_j^\T\end{bmatrix}\right\rangle = 0,
\end{align*}
and
\begin{align*}
&\left\langle \mW\mW^\T - \mW^\star\mW^{\star\T}, \mDelta\mDelta^\T\right\rangle = \left\langle \mW\mW^\T - \mW^\star\mW^{\star\T}, \vq\vq^\T\right\rangle\\
&= \vq^\top \mW\mW^\top \vq - 
\left\langle \mQ\overline\mSigma\mQ^\T, \sum_{ij}a_ia_j\vq_i\vq_j^\T\right\rangle\\
&\le \sigma_r^2(\mW) - \sum_{i=1}^r a_i^2 \overline\sigma_i \le \sigma_r^2(\mW) - \overline\sigma_r 
\end{align*}
where the first inequality utilizes \eqref{eq:near saddle lambda k} and the last inequality holds because $\overline \sigma_1 \ge \cdots \ge \overline\sigma_r$ and $\sum_i a_i^2 = 1$.

Plugging these into \eqref{eq:near saddle 2_1} gives
\begin{align}
 2\Pi_2 + \frac{1}{2}\Pi_5 \le \frac{1}{2}(\sigma_r^2(\mW) - \overline \sigma_r).
\label{eq:near saddle 2_2}\end{align}
\noindent{\bf Merging together:} Putting \eqref{eq:near saddle 1_2} and \eqref{eq:near saddle 2_2} together yields
\begin{align*}
[\nabla^2g(\mW)](\mDelta,\mDelta) &= \Pi_1 + \frac{1}{2} \Pi_3 +\frac{1}{2} \Pi_4 + 2\Pi_2 + \frac{1}{2} \Pi_5\\
& \leq \sigma_r^2(\mW) + \frac{1}{2}(\sigma_r^2(\mW) - \overline \sigma_r)\\
& \leq \frac{1}{2}\sigma_r(\mX^\star) +\frac{1}{2}(\frac{1}{2}\sigma_r(\mX^\star)-2\sigma_r(\mX^\star))\\
& \leq -\frac{1}{4}\sigma_r(\mX^\star),
\end{align*}
where the third line follows because by assumption $\sigma_r(\mW)\leq \sqrt{\frac{1}{2}}\sigma_r^{1/2}(\mX^\star)$ and $\overline \sigma_r = 2\sigma_r(\mX^\star)$. This completes the proof of \eqref{eq:thm negative curvature}.
}
\subsection{Large gradient for the region $\calR_3'\cup\calR''_3\cup\calR'''_3$:}

In order to show that $g(\mW)$ has a large gradient in the three regions $\calR_3'\cup\calR''_3\cup\calR'''_3$, we first provide a lower bound for the gradient. By \eqref{eq:local descent gradient g}, we have
\begin{equation}\begin{split}
&\|\nabla g(\mW)\|_F^2 \\
& = \frac{1}{4}\left\| \left(\mW\mW^\T - \mW^\star\mW^{\star\T}\right)\mW + \widehat\mW^\star\widehat\mW^{\star\T}\mW\right\|_F^2\\
& = \frac{1}{4}\left(\left\| \left(\mW\mW^\T - \mW^\star\mW^{\star\T}\right)\mW\right\|_F^2 + \left\| \widehat\mW^\star\widehat\mW^{\star\T}\mW\right\|_F^2 \right)\\
& \quad\quad\quad+ \frac{1}{2}\left\langle \left(\mW\mW^\T - \mW^\star\mW^{\star\T}\right)\mW, \widehat\mW^\star\widehat\mW^{\star\T}\mW \right\rangle\\
& = \frac{1}{4}\left(\left\| \left(\mW\mW^\T - \mW^\star\mW^{\star\T}\right)\mW\right\|_F^2 + \left\| \widehat\mW^\star\widehat\mW^{\star\T}\mW\right\|_F^2\right) \\& \quad\quad\quad+ \frac{1}{2}\left\langle \mW\mW^\T \mW\mW^\T, \widehat\mW^\star\widehat\mW^{\star\T} \right\rangle\\
& \geq \frac{1}{4}\left\| \left(\mW\mW^\T - \mW^\star\mW^{\star\T}\right)\mW\right\|_F^2,
\end{split}\label{eq:large gradient 1}
\end{equation}
where the third equality follows because $\mW^{\star\T}\widehat\mW^\star = \mU^{\star\T}\mU^{\star} - \mV^{\star\T}\mV^{\star} = \mzero$ and the last line utilizes the fact that the inner product between two PSD matrices is nonnegative.

\subsubsection{Large gradient for the region $\calR_3'$}
\revise{For any $\mW$, we can always divide it into two parts, the projections onto the column spaces of $\mQ$ (which is defined in \eqref{eq:define Q}) and its orthogonal complement, respectively. 
Equivalently, we can write
\begin{align}
\mW = \mQ\mUpsilon + \mE,
\label{eq:rewrite W} \end{align}
where $\mUpsilon\in\R^{r\times r}$ is  the projection of $\mW$ onto the column space of $\mQ$, and $\mE^\T \mQ = \mzero$ (i.e., $\mE$ is orthogonal to $\mQ$).
Plugging this form of $\mW$ into the last term of \eqref{eq:large gradient 1} gives
\begin{equation}\begin{split}
& \left\| \left(\mW\mW^\T - \mW^\star\mW^{\star\T}\right)\mW\right\|_F^2=\\
 &  \left\| \mQ (\mUpsilon\mUpsilon^\T - \overline\mSigma)\mUpsilon + \mQ \mUpsilon\mE^\T\mE + \mE\mUpsilon^\T\mUpsilon + \mE\mE^\T\mE\right\|_F^2\\
  & =  \left\|(\mUpsilon\mUpsilon^\T - \overline\mSigma)\mUpsilon + \mUpsilon\mE^\T\mE\right\|_F^2 + \left\|  \mE\mUpsilon^\T\mUpsilon + \mE\mE^\T\mE\right\|_F^2
\end{split}\label{eq:large gradient 2}
\end{equation}
since $\mQ$ is orthogonal to $\mE$. The remaining part is to show at least one of the two terms is large for any $\mW\in\calR_3'$ by considering the following two cases.

Case I: $\left\|\mE\right\|_F^2\geq \frac{4}{25}\sigma_r(\mX^\star)$. As $\mE$ is large, we bound the second term in \eqref{eq:large gradient 2}:
\begin{equation}\begin{split}
\left\|  \mE\mUpsilon^\T\mUpsilon + \mE\mE^\T\mE\right\|_F^2 & \geq \sigma_r^2\left(\mUpsilon^\T\mUpsilon + \mE^\T\mE\right)\left\|  \mE\right\|_F^2\\& = \sigma_r^4\left(\mW\right)\left\|  \mE\right\|_F^2\\
&\geq (\frac{1}{2})^2\frac{4}{25}\sigma_r^3(\mX^\star) = \frac{1}{25} \sigma_r^3(\mX^\star),
\end{split}\label{eq:large gradient 3_1}
\end{equation}
where the first inequality follows from Corollary~\ref{cor:inequality for Frobunius of products}, the first equality follows from the fact $\mW^\T\mW = \mUpsilon^\T\mUpsilon + \mE^\T\mE$, and the last inequality holds because by assumption that $\sigma_r^2(\mW)\geq \frac{1}{2}\sigma_r(\mX^\star)$ and $\left\|\mE\right\|_F^2\geq \frac{4}{25}\sigma_r(\mX^\star)$.

Case II: $\left\|\mE\right\|_F^2\leq \frac{4}{25}\sigma_r(\mX^\star)$.
In this case, we start by bounding the smallest singular value of $\mUpsilon$. First, utilizing Weyl's inequality for perturbation of singular values~\cite[Theorem 3.3.16]{horn2012matrix} gives
\[
\left|\sigma_r(\mW) - \sigma_r(\mUpsilon)\right|\leq \|\mE\|_2,
\]
which implies
\begin{align}\begin{split}
\sigma_r(\mUpsilon) &\geq  \sigma_r(\mW) - \|\mE\|_2\\
&\geq \sqrt{\frac{1}{2}}\sigma_r^{1/2}(\mX^\star) - \frac{2}{5}\sigma_r^{1/2}(\mX^\star),
\end{split}\label{eq:large gradient bound min lambda}\end{align}
where we utilize $\|\mE\|_2 \leq \|\mE\|_F\leq \frac{2}{5}\sigma_r^{1/2}(\mX^\star)$. Thus, $\mUpsilon$ is invertible. On the other hand,
\begin{align*}
\dist(\mW,\mW^\star) &\leq \dist(\mQ\mUpsilon,\mW^\star)  + \left\| \mE\right\|_F
\\
&=  \dist(\mUpsilon, \overline\mSigma^{1/2})  + \left\| \mE\right\|_F,
\end{align*}
where the last line follows because $\mQ$ contain the left singular vectors of $\mW^\star$.
This together with the assumption that $\dist(\mW,\mW^\star)\geq \sigma_r^{1/2}(\mX^\star)$ gives
\begin{align*}
\dist(\mUpsilon, \overline\mSigma^{1/2}) \geq  \sigma_r^{1/2}(\mX^\star) - \frac{2}{5}\sigma_r^{1/2}(\mX^\star) = \frac{3}{5}\sigma_r^{1/2}(\mX^\star).
\end{align*}

We now bound the first term in \eqref{eq:large gradient 2}:
\begin{equation}\begin{split}
&\left\|(\mUpsilon\mUpsilon^\T - \overline\mSigma)\mUpsilon + \mUpsilon\mE^\T\mE\right\|_F\\ &
= \left\|\mUpsilon(\mUpsilon^\T \mUpsilon - \mUpsilon^{-1}\overline\mSigma\mUpsilon) + \mUpsilon\mE^\T\mE\right\|_F\\
& \ge \sigma_r(\mUpsilon) \left\|(\mUpsilon^\T \mUpsilon - \mUpsilon^{-1}\overline\mSigma\mUpsilon) + \mE^\T\mE\right\|_F\\
& \ge \sigma_r(\mUpsilon)\left( \left\|\mUpsilon^\T \mUpsilon - \mUpsilon^{-1}\overline\mSigma\mUpsilon\right\|_F - \left\|\mE^\T\mE\right\|_F\right)\\
& \ge \sigma_r(\mUpsilon) \left( \left\|\mUpsilon \mUpsilon^\T - \overline\mSigma\right\|_F - \left\|\mE^\T\mE\right\|_F\right)\\
& \ge \sigma_r(\mUpsilon) \left(\sqrt{2(\sqrt{2}-1)}\overline\sigma_r^{1/2} \dist(\mUpsilon, \overline\mSigma^{1/2}) - \left\|\mE^\T\mE\right\|_F\right)\\
& \ge \big(\sqrt{\frac{1}{2}} - \frac{2}{5}\big) \Big(\sqrt{2(\sqrt{2}-1)}\cdot\sqrt{2}\cdot\frac{3}{5} - \frac{4}{25}\Big)\sigma_r^{3/2}(\mX^\star)
\end{split}\label{eq:large gradient 3_2}
\end{equation}
where the third line uses \Cref{cor:inequality for Frobunius of products}, the fifth line will be proved soon, the sixth line utilizes \cite[Lemma 5.4]{tu2015low} that $\|\mA\mA^\T - \mB\mB^\T\|_F^2 \ge 2(\sqrt{2} -1)\sigma_r^2(\mB)\dist^2(\mA,\mB)$ for any $\mA,\mB\in\R^{n\times r}$, and last line holds because $\sigma_r(\mUpsilon) \geq \big(\sqrt{\frac{1}{2}} - \frac{2}{5}\big)\sigma_r^{1/2}(\mX^\star)$ by \eqref{eq:large gradient bound min lambda}, $\overline \sigma_r = 2\sigma_r(\mX^\star)$, $\dist(\mUpsilon, \overline\mSigma^{1/2}) \geq  \frac{3}{5}\sigma_r^{1/2}(\mX^\star)$, and
$\|\mE\mE^\T\|_F\leq \|\mE\|_F^2\leq \frac{4}{25}\sigma_r(\mX^\star)$. We now prove the fifth line in \eqref{eq:large gradient 3_2} as
\begin{align*}
	&\left\|\mUpsilon^\T \mUpsilon - \mUpsilon^{-1}\overline\mSigma\mUpsilon\right\|_F^2 - \left\|\mUpsilon \mUpsilon^\T - \overline\mSigma\right\|_F^2\\
	&=\left\|\mUpsilon^{-1}\overline\mSigma\mUpsilon\right\|_F^2 - \left\| \overline\mSigma\right\|_F^2\ge 0,
\end{align*}
where the last inequality holds because $(i)$ $\mUpsilon^{-1}\overline\mSigma\mUpsilon$ is similar to $\overline\mSigma$ and thus the diagonals of $\overline\mSigma$ are the eigenvalues of $\mUpsilon^{-1}\overline\mSigma\mUpsilon$, and $(ii)$ for any square matrix, the $\ell_2$ norm of its eigenvalues is no larger than its Frobenius norm \cite[Fact 9.11.3]{bernstein2009matrix}
.

Combining \eqref{eq:large gradient 1} with \eqref{eq:large gradient 2}, \eqref{eq:large gradient 3_1} and \eqref{eq:large gradient 3_2} gives
\begin{align*}
\|\nabla g(\mW)\|_F \geq \frac{1}{11}\sigma_r^{3/2}(\mX^\star).
\end{align*}
This completes the proof of \eqref{eq:thm large gradient 1}.}
\subsubsection{Large gradient for the region $\calR_3''$:}
By \eqref{eq:large gradient 1}, we have
\[
\|\nabla g(\mW)\|_F  \geq \frac{1}{2}\left\| \left(\mW\mW^\T - \mW^\star\mW^{\star\T}\right)\mW\right\|_F^2.
\]
Now~\eqref{eq:thm large gradient 2} follows directly from the fact $\|\mW\|> \frac{20}{19} \|\mW^\star\|$ and the following result.
\begin{lem}
For any $\mA,\mB\in\R^{n\times r}$ with $\|\mA\|\geq \alpha\|\mB\|$ and $\alpha>1$, we have
\[
\left\|\left(\mA\mA^\T - \mB\mB^\T\right)\mA\right\|_F \geq (1-\frac{1}{\alpha^2})\|\mA\|^3.\]
\end{lem}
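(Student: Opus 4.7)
The plan is to lower bound the Frobenius norm by projecting onto a single well-chosen direction, namely the top left singular direction of $\mA$. Since $\|\mM\|_F \geq \|\vu^\T \mM\|$ for any unit vector $\vu$ (by Cauchy--Schwarz applied column by column), it suffices to pick a convenient $\vu$ and estimate $\|\vu^\T(\mA\mA^\T - \mB\mB^\T)\mA\|$ from below.

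First, I would take $\vu$ to be a unit left singular vector of $\mA$ corresponding to the top singular value, so that $\vu^\T \mA = \|\mA\|\,\vv^\T$ for the matching unit right singular vector $\vv$ satisfying $\mA^\T\mA\vv = \|\mA\|^2\vv$. Then a direct computation gives
\begin{equation*}
\vu^\T \mA\mA^\T \mA \;=\; \|\mA\|\,\vv^\T \mA^\T \mA \;=\; \|\mA\|^3\,\vv^\T.
\end{equation*}
For the other piece, setting $\vw = \mB^\T \vu$, one has $\|\vw\| \le \|\mB\|$, and therefore
\begin{equation*}
\|\vu^\T \mB\mB^\T \mA\| \;=\; \|\vw^\T \mB^\T \mA\| \;\le\; \|\vw\|\,\|\mB\|\,\|\mA\| \;\le\; \|\mB\|^2\,\|\mA\|.
\end{equation*}

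Combining these via the reverse triangle inequality, and then invoking the hypothesis $\|\mA\| \geq \alpha\|\mB\|$ (equivalently $\|\mB\|^2 \leq \|\mA\|^2/\alpha^2$), I would obtain
\begin{equation*}
\|(\mA\mA^\T - \mB\mB^\T)\mA\|_F \;\ge\; \|\vu^\T(\mA\mA^\T - \mB\mB^\T)\mA\| \;\ge\; \|\mA\|^3 - \|\mB\|^2\|\mA\| \;\ge\; \Bigl(1-\tfrac{1}{\alpha^2}\Bigr)\|\mA\|^3,
\end{equation*}
which is the desired bound. There is no real obstacle here: the only care needed is to pick the test direction as a top singular vector of $\mA$ (so that applying $\mA\mA^\T$ on the left yields the clean factor $\|\mA\|^3$) rather than, say, a top singular vector of $\mA\mA^\T - \mB\mB^\T$, for which the cross term would be harder to control.
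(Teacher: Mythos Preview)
Your proof is correct. Both your argument and the paper's ultimately extract the scalar inequality $\|\mA\|^3 - \|\mB\|^2\|\mA\| \geq (1-\tfrac{1}{\alpha^2})\|\mA\|^3$, but they reach it differently. The paper writes out full SVDs $\mA = \mPhi_1\mLambda_1\mR_1^\T$ and $\mB = \mPhi_2\mLambda_2\mR_2^\T$, left-multiplies by $\mPhi_1^\T$, and then chains Frobenius-norm inequalities down to a diagonal expression $\|\mLambda_1^3 - \mLambda_2^2\mLambda_1\|_F$. You instead project onto the single top left singular vector $\vu$ of $\mA$ and control the $\mB\mB^\T$ contribution purely through operator-norm bounds, never touching the SVD of $\mB$. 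Your route is the more elementary and self-contained of the two: each step is a transparent one-liner ($\|\vu^\T\mM\|\le\|\mM\|_F$, the reverse triangle inequality, $\|\mB^\T\vu\|\le\|\mB\|$), whereas the paper's intermediate inequality $\|\mLambda_1^3 - \mPhi_1^\T\mPhi_2\mLambda_2^2\mPhi_2^\T\mPhi_1\mLambda_1\|_F \geq \|\mLambda_1^3 - \mLambda_2^2\mLambda_1\|_F$ takes a little more thought to see.
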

\begin{proof}
Let $\mA = \mPhi_1\mLambda_1\mR_1^\T$ and $\mB = \mPhi_2\mLambda_2\mR_2^\T$ be the SVDs of $\mA$ and $\mB$, respectively. Then
\begin{align*}
\left\|\left(\mA\mA^\T - \mB\mB^\T\right)\mA\right\|_F& = \left\|\mPhi_1\mLambda_1^3 - \mPhi_2\mLambda_2^2\mPhi_2^\T\mPhi_1\mLambda_1\right\|_F\\
& \geq \left\|\mLambda_1^3 - \mPhi_1^\T\mPhi_2\mLambda_2^2\mPhi_2^\T\mPhi_1\mLambda_1\right\|_F\\
& \geq \left\|\mLambda_1^3 - \mLambda_2^2\mLambda_1\right\|_F\\
& \geq (1-\frac{1}{\alpha^2})\|\mA\|^3.
\end{align*}

\end{proof}

\subsubsection{Large gradient for the region $\calR_3'''$:}

By \eqref{eq:local descent gradient g}, we have
\begin{equation}\label{eq:large gradient III eq 1}\begin{split}
&\left\langle \nabla g(\mW),\mW\right\rangle \\&= \left\langle \frac{1}{2} \left(\mW\mW^\T - \mW^\star\mW^{\star\T}\right)\mW + \frac{1}{2} \widehat\mW^\star\widehat\mW^{\star\T}\mW,\mW\right\rangle\\
& \geq \frac{1}{2}\left\langle \left(\mW\mW^\T - \mW^\star\mW^{\star\T}\right)\mW,\mW\right\rangle\\
& \geq \frac{1}{2}\left(\left\| \mW\mW^\T\right\|_F^2 - \left\|  \mW\mW^\T\right\|_F\left\|  \mW^\star\mW^{\star\T}\right\|_F\right)\\
& > \frac{1}{20} \left\| \mW\mW^\T\right\|_F^2
\end{split}\end{equation}
where the last line holds because $\|\mW^\star\mW^{\star\T}\|_F<\frac{9}{10}\|\mW\mW^\T\|_F$.

\section{Proof of \Cref{thm:robust strict saddle property general} (robust strict saddle for $G(\mW)$)}\label{sec:prf robust strict saddle property sensing}
Throughout the proofs, we always utilize $\mX = \mU\mV^\T$ unless stated otherwise. To give a sense that the geometric result in Theorem~\ref{thm:robust strict saddle property} for $g(\mW)$ is also possibly preserved for $G(\mW)$, we first compute the derivative of $G(\mW)$ as
\begin{align}
\nabla G(\mW) = \begin{bmatrix} \nabla f(\mU\mV^\T)\mV \\ (\nabla f(\mU\mV^\T))^\T\mU \end{bmatrix} +  \mu \widehat\mW\widehat\mW^\T\mW.
\label{eq:gradient general prob}\end{align}
For any $\mDelta= \begin{bmatrix} \mDelta_{\mU}\\ \mDelta_{\mV}\end{bmatrix}\in\R^{(n+m)\times r}$, algebraic calculation gives the Hessian quadratic form $[\nabla^2 G(\mW)](\mDelta,\mDelta)$ as
\begin{equation}\begin{split}
&[\nabla^2G(\mW)](\mDelta,\mDelta) \\
&= [\nabla^2f(\mU\mV^\T)](\mDelta_{\mU}\mV^\T+ \mU\mDelta_{\mV}^\T,\mDelta_{\mU}\mV^\T + \mU\mDelta_{\mV}^\T)\\
& \quad + 2\langle \nabla f(\mU\mV^\T),\mDelta_{\mU}\mDelta_{\mV}^\T \rangle+ [\nabla^2\rho(\mW)](\mDelta,\mDelta)
\end{split}\label{eq:hessian general prob}\end{equation}
where $
[\nabla^2\rho(\mW)](\mDelta,\mDelta)$ is defined in \eqref{eq:hessian for regularizer}.
Thus, it is expected that $G(\mW)$, $\nabla G(\mW)$, and $\nabla^2 G(\mW)$ are close to their counterparts (i.e., $g(\mW)$, $\nabla g(\mW)$ and $\nabla^2 g(\mW)$) for the matrix factorization problem when $f(\mX)$ satisfies the $(2r,4r)$-restricted  strong convexity and smoothness condition \eqref{eq:RIP like}.

Before moving to the main proofs, we provide several useful results regarding the deviations of the gradient and Hessian. We start with a useful characterization of the restricted  strong convexity and smoothness condition.
\begin{lem}Suppose $f$ satisfies the $(2r,4r)$-restricted  strong convexity and smoothness condition \eqref{eq:RIP like} with positive constants $a = 1-c$ and $b = 1+c, c\in[0,1)$. Then  any $n\times m$ matrices $\mC,\mD,\mH$ with $\rank(\mC),\rank(\mD)\leq r$ and $\rank(\mH)\leq 2r$, we have
\[
\left|\left\langle \nabla f\left(\mC\right) - \nabla f\left(\mD\right) - (\mC- \mD), \mH \right\rangle\right| \leq c\left\|\mC - \mD\right\|_F\left\|\mH\right\|_F.
\]
\label{lem:RIP reformulation}\end{lem}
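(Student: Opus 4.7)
The plan is to use the fundamental theorem of calculus to reduce the claim to a Hessian bound, and then apply a polarization/rescaling argument to convert the quadratic restricted strong convexity/smoothness bound into the desired bilinear bound.

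First, I would invoke the fundamental theorem of calculus: writing $\mX_t := \mD + t(\mC-\mD)$ for $t\in[0,1]$, which has $\rank(\mX_t)\leq 2r$, we have
\begin{align*}
\nabla f(\mC) - \nabla f(\mD) - (\mC-\mD) = \int_0^1 \bigl([\nabla^2 f(\mX_t)] - \mathcal{I}\bigr)(\mC-\mD) \, dt,
\end{align*}
where $\mathcal{I}$ denotes the identity operator on matrices. Taking the inner product with $\mH$ and pulling the integral outside reduces the statement to showing that for every $t\in[0,1]$,
\begin{align*}
\bigl|[\nabla^2 f(\mX_t) - \mathcal{I}](\mC-\mD, \mH)\bigr| \leq c\,\|\mC-\mD\|_F\,\|\mH\|_F,
\end{align*}
and then integrating the uniform bound over $t\in[0,1]$.

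Next, for the bilinear Hessian bound, the assumption \eqref{eq:RIP like} with $a=1-c$, $b=1+c$ gives the quadratic form bound $|[\nabla^2 f(\mX_t) - \mathcal{I}](\mD',\mD')| \leq c\|\mD'\|_F^2$ whenever $\rank(\mD')\leq 4r$. I would apply this with $\mD' = \lambda(\mC-\mD) \pm \mH/\lambda$ (for a scalar $\lambda>0$ to be chosen), noting that this matrix has rank at most $2r + 2r = 4r$. The standard polarization identity then yields
\begin{align*}
[\nabla^2 f(\mX_t) - \mathcal{I}](\mC-\mD, \mH) &= \tfrac{1}{4}\bigl([\nabla^2 f(\mX_t) - \mathcal{I}](\lambda(\mC-\mD)+\mH/\lambda, \cdot) \\
&\qquad- [\nabla^2 f(\mX_t) - \mathcal{I}](\lambda(\mC-\mD)-\mH/\lambda, \cdot)\bigr),
\end{align*}
and applying the quadratic bound to both terms gives
\begin{align*}
\bigl|[\nabla^2 f(\mX_t) - \mathcal{I}](\mC-\mD, \mH)\bigr| \leq \tfrac{c}{2}\bigl(\lambda^2\|\mC-\mD\|_F^2 + \lambda^{-2}\|\mH\|_F^2\bigr).
\end{align*}
Optimizing over $\lambda>0$ via AM-GM, with $\lambda^2 = \|\mH\|_F/\|\mC-\mD\|_F$, produces the clean bilinear bound $c\|\mC-\mD\|_F\|\mH\|_F$, which integrates trivially against $t$ to give the lemma.

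The only technical point requiring care is the rank bookkeeping: verifying that $\mC - \mD$ has rank at most $2r$ (so the polarization input has rank at most $4r$), that $\mX_t$ remains within the allowed rank $2r$, and that the assumed Hessian bound is formulated so that both arguments of the quadratic form may have rank up to $4r$, which matches Assumption~\ref{assump:2} precisely. This is really the only place the $(2r,4r)$ split in the restricted strong convexity and smoothness condition is used; the rest is a routine application of Taylor's theorem and polarization.
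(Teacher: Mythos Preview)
Your proof is correct and follows the same overall strategy as the paper: reduce to a bilinear Hessian bound via the integral mean value theorem, then integrate. The only difference is that the paper cites the bilinear estimate $\bigl|[\nabla^2 f(\mZ)](\mD,\mH) - \langle \mD,\mH\rangle\bigr|\le c\|\mD\|_F\|\mH\|_F$ from an external reference (\cite[Proposition~2]{zhu2017GlobalOptimality}), whereas you derive it yourself by the standard polarization-and-rescaling trick; your version is thus self-contained and makes explicit where the ``$4r$'' in the $(2r,4r)$ condition is actually consumed.
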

\begin{proof}[Proof of \Cref{lem:RIP reformulation}]
We first invoke \cite[Proposition 2]{zhu2017GlobalOptimality} which states that under \Cref{assump:2}
for any $n\times m$ matrices $\mZ,\mD,\mH$ of rank at most $2r$, we have
\begin{align}
&\left|[\nabla^2f(\mZ)](\mD,\mH) - \langle \mD,\mH \rangle\right|\leq c\left\|\mD\right\|_F \left\|\mH\right\|_F.
\label{eq:nabla 2 G H}\end{align}
Now using integral form of the mean value theorem for $\nabla f$, we have
\begin{align*}
&\left|\left\langle \nabla f\left(\mC\right) - \nabla f\left(\mD\right) - (\mC- \mD), \mH \right\rangle\right|\\
&=\left|\int_0^1\left[ \nabla^2 f(t\mC + (1-t)\mD)\right](\mC -\mD, \mH) - \langle \mC - \mD,\mH \rangle dt \right|\\
&\leq \int_0^1\left|\left[ \nabla^2 f(t\mC + (1-t)\mD)\right](\mC -\mD, \mH) - \langle \mC - \mD,\mH \rangle \right|dt\\
& \leq \int_0^1 c\left\|\mC - \mD\right\|_F\left\|\mH\right\|_F dt = c\left\|\mC - \mD\right\|_F\left\|\mH\right\|_F.
\end{align*}
where the second inequality follows from \eqref{eq:nabla 2 G H} since $t\mC + (1-t)\mD$, $\mC - \mD$, and $\mH$ all are rank at most $2r$.

\end{proof}

The following result controls the deviation of the gradient between the general low-rank optimization \eqref{eq:general low rank} and the matrix factorization problem by utilizing the $(2r,4r)$-restricted  strong convexity and smoothness condition \eqref{eq:RIP like}.
\begin{lem}\label{lem:gradient g - G norm} Suppose $f(\mX)$ has a critical point $\mX^\star\in\R^{n\times m}$ of rank $r$ and satisfies the $(2r,4r)$-restricted  strong convexity and smoothness condition \eqref{eq:RIP like} with positive constants $a = 1-c$ and $b = 1+c, c\in[0,1)$. Then, we have
\begin{align*}
\left\|\nabla G(\mW) -  \nabla g(\mW)\right\|_F \leq c\left\|\mW\mW^\T - \mW^\star\mW^{\star\T}\right\|_F \left\|\mW\right\|.
\end{align*}
\end{lem}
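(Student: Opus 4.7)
I would first compute $\nabla G(\mW) - \nabla g(\mW)$ directly. Because $G$ and $g$ share the identical regularizer $\rho$, the balancing terms $\mu\widehat\mW\widehat\mW^\T\mW$ cancel, leaving
\[
\nabla G(\mW) - \nabla g(\mW) = \begin{bmatrix} \mN\mV \\ \mN^\T\mU \end{bmatrix},
\]
where $\mN := \nabla f(\mU\mV^\T) - (\mU\mV^\T - \mX^\star)$. Invoking \Cref{assump:1} (so that $\nabla f(\mX^\star) = \mzero$), I would rewrite this as $\mN = \nabla f(\mU\mV^\T) - \nabla f(\mX^\star) - (\mU\mV^\T - \mX^\star)$, i.e.\ exactly the deviation quantity controlled by \Cref{lem:RIP reformulation}.

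Next I would dualize the Frobenius norm via $\|\nabla G(\mW) - \nabla g(\mW)\|_F = \sup_{\|\mDelta\|_F = 1}\langle \nabla G(\mW) - \nabla g(\mW),\mDelta\rangle$. For any unit-norm test direction $\mDelta = \begin{bmatrix}\mDelta_\mU\\ \mDelta_\mV\end{bmatrix}$, a trace rearrangement yields $\langle \nabla G(\mW) - \nabla g(\mW),\mDelta\rangle = \langle \mN,\,\mDelta_\mU\mV^\T + \mU\mDelta_\mV^\T\rangle$. Since the test matrix $\mDelta_\mU\mV^\T + \mU\mDelta_\mV^\T$ has rank at most $2r$ and both $\mU\mV^\T$ and $\mX^\star$ have rank at most $r$, \Cref{lem:RIP reformulation} applied with $\mC = \mU\mV^\T$ and $\mD = \mX^\star$ bounds this inner product by $c\,\|\mU\mV^\T - \mX^\star\|_F\,\|\mDelta_\mU\mV^\T + \mU\mDelta_\mV^\T\|_F$.

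What remains is careful bookkeeping of constants. The triangle inequality, Cauchy--Schwarz, and the PSD bound $\|\mU\|^2 + \|\mV\|^2 \leq 2\|\mU^\T\mU + \mV^\T\mV\| = 2\|\mW\|^2$ together give $\|\mDelta_\mU\mV^\T + \mU\mDelta_\mV^\T\|_F \leq \sqrt{2}\,\|\mW\|\,\|\mDelta\|_F$. Separately, expanding $\mW\mW^\T - \mW^\star\mW^{\star\T}$ in block form and using $\mU^\star\mV^{\star\T} = \mX^\star$ gives
\[
\|\mW\mW^\T - \mW^\star\mW^{\star\T}\|_F^2 \geq 2\,\|\mU\mV^\T - \mX^\star\|_F^2,
\]
so that $\|\mU\mV^\T - \mX^\star\|_F \leq \frac{1}{\sqrt{2}}\,\|\mW\mW^\T - \mW^\star\mW^{\star\T}\|_F$. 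The two factors of $\sqrt{2}$ then cancel, yielding exactly the claimed constant of $1$. The only real subtlety in the whole argument lies in this final bookkeeping: naive bounds such as $\max(\|\mU\|,\|\mV\|)\leq\|\mW\|$ combined with $\|\mU\mV^\T - \mX^\star\|_F \leq \|\mW\mW^\T-\mW^\star\mW^{\star\T}\|_F$ would produce a looser leading constant, so the sharper block-level inequalities must be used.
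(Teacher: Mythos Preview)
Your proof is correct and follows essentially the same route as the paper's: dualize the Frobenius norm, invoke \Cref{lem:RIP reformulation} with $\mC=\mU\mV^\T$, $\mD=\mX^\star$, and then bound the resulting factors via block inequalities. The only cosmetic difference is that the paper splits the test matrix into the two rank-$r$ pieces $\mDelta_{\mU}\mV^\T$ and $\mU\mDelta_{\mV}^\T$ and applies \Cref{lem:RIP reformulation} to each separately, whereas you apply it once to their rank-$2r$ sum; your explicit Cauchy--Schwarz step and the block expansion of $\|\mW\mW^\T-\mW^\star\mW^{\star\T}\|_F^2$ make the final constant of~$1$ more transparent than the paper's terser last two lines.
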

\begin{proof}[Proof of Lemma~\ref{lem:gradient g - G norm}]
We bound the deviation directly:
\begin{align*}
&\left\|\nabla G(\mW) -  \nabla g(\mW)\right\|_F = \max_{\|\mDelta\|_F=1} \left\langle\nabla G(\mW) -  \nabla g(\mW),\mDelta\right\rangle\\
& = \max_{\|\mDelta\|_F=1}\left\langle \nabla f(\mX) ,\mDelta_{\mU}\mV^\T\right\rangle  - \left\langle \mX - \mX^\star,\mDelta_{\mU}\mV^\T \right\rangle \\
& \quad + \left\langle \nabla f(\mX),\mU\mDelta_{\mV}^\T \right\rangle- \left\langle \mX - \mX^\star,\mU\mDelta_{\mV}^\T \right\rangle\\
& = \max_{\|\mDelta\|_F=1}\left\langle \nabla f(\mX) - \nabla f(\mX^\star) - (\mX - \mX^\star) ,\mDelta_{\mU}\mV^\T\right\rangle \\
& \quad + \left\langle \nabla f(\mX) - \nabla f(\mX^\star) - (\mX - \mX^\star),\mU\mDelta_{\mV}^\T \right\rangle\\
& \leq \max_{\|\mDelta\|_F=1}c\left\|\mX - \mX^\star\right\|_F\left( \left\|\mDelta_{\mU}\mV^\T\right\|_F +\left\|\mU\mDelta_{\mV}^\T\right\|_F\right)\\
& \leq c\|\mU\mV^\T - \mX^\star\|_F\left( \left\|\mV\right\| +\left\|\mU\right\|\right)\\
& \leq c\|\mW\mW^\T - \mW^\star\mW^{\star\T}\|_F \left\|\mW\right\|,
\end{align*}
where the last equality follows from \Cref{assump:1} that $\nabla f(\mX^\star) = \mzero$ and
and the first inequality utilizes Lemma~\ref{lem:RIP reformulation}.
\end{proof}
Similarly, the next result controls the deviation of the Hessian between the matrix sensing problem and the matrix factorization problem.
\begin{lem}\label{lem:Hessian g - G norm}  Suppose $f(\mX)$ has a critical point $\mX^\star\in\R^{n\times m}$ of rank $r$ and satisfies the $(2r,4r)$-restricted  strong convexity and smoothness condition \eqref{eq:RIP like} with positive constants $a = 1-c$ and $b = 1+c, c\in[0,1)$. Then, for any $\mDelta= \begin{bmatrix} \mDelta_{\mU}\\ \mDelta_{\mV}\end{bmatrix}\in\R^{(n+m)\times r}$ the following holds:
\begin{align*}
&\left|\nabla^2 G(\mW)[\mDelta,\mDelta] -  \nabla^2 g(\mW)[\mDelta,\mDelta]\right|\\ & \leq 2 c\left\|\mU\mV^\T - \mX^\star\right\|_F\left\|\mDelta_{\mU}\mDelta_{\mV}^\T\right\|_F + c\left\|\mDelta_{\mU}\mV^\T+ \mU\mDelta_{\mV}^\T\right\|_F^2.
\end{align*}
\end{lem}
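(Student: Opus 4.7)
The plan is to exploit the explicit expressions for $[\nabla^2 G(\mW)](\mDelta,\mDelta)$ in \eqref{eq:hessian general prob} and $[\nabla^2 g(\mW)](\mDelta,\mDelta)$ in \eqref{eq:hessian}. The regularizer Hessian $[\nabla^2\rho(\mW)](\mDelta,\mDelta)$ appears identically in both, so it cancels. What remains is the difference
\[
\Big([\nabla^2 f(\mX)](\mDelta_{\mU}\mV^\T+\mU\mDelta_{\mV}^\T,\mDelta_{\mU}\mV^\T+\mU\mDelta_{\mV}^\T)-\|\mDelta_{\mU}\mV^\T+\mU\mDelta_{\mV}^\T\|_F^2\Big)
\]
plus $2\langle \nabla f(\mX)-(\mX-\mX^\star),\mDelta_{\mU}\mDelta_{\mV}^\T\rangle$, using $\mX=\mU\mV^\T$. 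So I split the estimate into these two pieces and bound each by the $(2r,4r)$-restricted strong convexity/smoothness machinery that has already been established in the excerpt.

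For the first piece, I would invoke the bilinear form estimate cited inside the proof of Lemma~\ref{lem:RIP reformulation}, namely $|[\nabla^2 f(\mZ)](\mD,\mH)-\langle\mD,\mH\rangle|\le c\|\mD\|_F\|\mH\|_F$ whenever $\mZ,\mD,\mH$ all have rank at most $2r$. I apply it with $\mZ=\mX$ (rank $\le r\le 2r$) and $\mD=\mH=\mDelta_{\mU}\mV^\T+\mU\mDelta_{\mV}^\T$, which has rank $\le 2r$. This immediately gives the bound $c\|\mDelta_{\mU}\mV^\T+\mU\mDelta_{\mV}^\T\|_F^2$ for the first piece.

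For the second piece, I apply Lemma~\ref{lem:RIP reformulation} directly with $\mC=\mX$, $\mD=\mX^\star$, and $\mH=\mDelta_{\mU}\mDelta_{\mV}^\T$, all of rank at most $r$ (in particular $\mH$ has rank $\le r$, and the lemma's hypothesis $\rank(\mH)\le 2r$ is satisfied). Using Assumption~\ref{assump:1} that $\nabla f(\mX^\star)=\mzero$, this yields $|\langle \nabla f(\mX)-(\mX-\mX^\star),\mDelta_{\mU}\mDelta_{\mV}^\T\rangle|\le c\|\mX-\mX^\star\|_F\|\mDelta_{\mU}\mDelta_{\mV}^\T\|_F$, so the second piece contributes $2c\|\mU\mV^\T-\mX^\star\|_F\|\mDelta_{\mU}\mDelta_{\mV}^\T\|_F$. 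Summing the two bounds by the triangle inequality produces exactly the claimed estimate.

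There is no real obstacle here; the work is routine once the two pieces are separated correctly and the rank-budget for invoking the $(2r,4r)$-restricted strong convexity/smoothness condition is verified. The only mild subtlety is noticing that although $\mDelta$ itself is arbitrary (so $\mDelta_{\mU}\mV^\T+\mU\mDelta_{\mV}^\T$ need not be low-rank a priori), it is a sum of two rank-$r$ matrices and hence has rank at most $2r$, which is within the regime guaranteed by Assumption~\ref{assump:2}.
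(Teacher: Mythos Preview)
Your proposal is correct and follows essentially the same approach as the paper: subtract the Hessian expressions so that the regularizer cancels, then bound the first-order piece via Lemma~\ref{lem:RIP reformulation} (using $\nabla f(\mX^\star)=\mzero$) and the second-order piece via the bilinear estimate \eqref{eq:nabla 2 G H}, after checking the relevant rank bounds. The paper's proof is identical in structure and detail.
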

\begin{proof}[Proof of Lemma~\ref{lem:Hessian g - G norm}]
First note that
\begin{align*}
&\nabla^2 G(\mW)[\mDelta,\mDelta] -  \nabla^2 g(\mW)[\mDelta,\mDelta]\\
&= 2\left\langle \nabla f(\mX),\mDelta_{\mU}\mDelta_{\mV}^\T \right\rangle- 2\left\langle \mX - \mX^\star,\mDelta_{\mU}\mDelta_{\mV}^\T \right\rangle \\&\quad+  [\nabla^2f(\mX)](\mDelta_{\mU}\mV^\T+ \mU\mDelta_{\mV}^\T) - \left\|\mDelta_{\mU}\mV^\T+ \mU\mDelta_{\mV}^\T\right\|_F^2.
\end{align*}
Now utilizing Lemma~\ref{lem:RIP reformulation} and \eqref{eq:RIP like}, we have
\begin{align*}
&\left|\nabla^2 G(\mW)[\mDelta,\mDelta] -  \nabla^2 g(\mW)[\mDelta,\mDelta]\right|\\
& \leq  2\left|\left\langle \nabla f(\mX) - \nabla f(\mX^\star),\mDelta_{\mU}\mDelta_{\mV}^\T \right\rangle- \langle \mX - \mX^\star,\mDelta_{\mU}\mDelta_{\mV}^\T \rangle\right| \\
&\quad+  \left| [\nabla^2f(\mX)](\mDelta_{\mU}\mV^\T+ \mU\mDelta_{\mV}^\T) - \left\|\mDelta_{\mU}\mV^\T+ \mU\mDelta_{\mV}^\T\right\|_F^2 \right|\\
& \leq 2 c\left\|\mU\mV^\T - \mX^\star\right\|_F\left\|\mDelta_{\mU}\mDelta_{\mV}^\T\right\|_F \\&\quad+ c\left\|\mDelta_{\mU}\mV^\T+ \mU\mDelta_{\mV}^\T\right\|_F^2.
\end{align*}
\end{proof}

We provide one more result before proceeding to prove the main theorem.
\begin{lem}\label{lem:bound (C-D)C^T}\cite[Lemma E.1]{bhojanapalli2016lowrankrecoveryl}
 Let $\mA$ and $\mB$ be two $n\times r$ matrices such that $\mA^\T\mB = \mB^\T\mA$ is PSD.
 Then
\[
\left\|\left(\mA -\mB \right)\mA^\T\right\|_F^2\leq
\frac{1}{2(\sqrt{2} -1)}\left\|\mA\mA^\T - \mB\mB^\T   \right\|_F^2.
\]
\end{lem}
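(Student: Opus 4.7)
The plan is to exploit the algebraic identity $\mA\mA^\T - \mB\mB^\T = \mE\mA^\T + \mB\mE^\T$ with $\mE := \mA-\mB$, together with the two consequences of the hypothesis $\mA^\T\mB=\mB^\T\mA$: both $r\times r$ matrices $\mE^\T\mA = \mA^\T\mA - \mA^\T\mB$ and $\mE^\T\mB = \mA^\T\mB - \mB^\T\mB$ are symmetric, and their difference $\mE^\T\mA - \mE^\T\mB = \mE^\T\mE$ is PSD. First I would expand
\[
\|\mA\mA^\T - \mB\mB^\T\|_F^2 = \|\mE\mA^\T\|_F^2 + \|\mB\mE^\T\|_F^2 + 2\langle \mE\mA^\T,\mB\mE^\T\rangle
\]
and, using cyclic invariance of the trace, rewrite the cross term as $2\langle \mE^\T\mA, \mE^\T\mB\rangle_F$, the Frobenius inner product of two symmetric matrices.

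Next, the polarization identity combined with the sum $\mE^\T\mA + \mE^\T\mB = \mA^\T\mA - \mB^\T\mB$ and the difference $\mE^\T\mA - \mE^\T\mB = \mE^\T\mE$ yields the closed-form expression
\[
2\langle \mE\mA^\T, \mB\mE^\T\rangle = \tfrac{1}{2}\bigl(\|\mA^\T\mA - \mB^\T\mB\|_F^2 - \|\mE^\T\mE\|_F^2\bigr).
\]
Substituting this back reduces the target bound $\|\mE\mA^\T\|_F^2 \le \tfrac{\sqrt 2+1}{2}\|\mA\mA^\T - \mB\mB^\T\|_F^2$ to the equivalent purely Gram-type inequality
\[
\|\mE^\T\mE\|_F^2 \le 2\|\mB\mE^\T\|_F^2 + \|\mA^\T\mA-\mB^\T\mB\|_F^2 + (6-4\sqrt 2)\|\mE\mA^\T\|_F^2.
\]
The constant $6-4\sqrt 2$ (equivalently $2(\sqrt 2-1)/(\sqrt 2+1)$) is pinned down by the extremal configuration $\mA = a\,\ve_1$, $\mB = b\,\ve_2$ with $\mA^\T\mB=\mzero$ and $b^2/a^2 = \sqrt 2 - 1$, for which both the reduced inequality and the original inequality are saturated simultaneously. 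This sanity check confirms that $\tfrac{1}{2(\sqrt 2-1)} = \tfrac{\sqrt 2 + 1}{2}$ is the correct sharp constant.

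Finally, writing $\mM=\mA^\T\mA$, $\mN=\mB^\T\mB$, $\mS=\mA^\T\mB$, and $\mDelta=\mE^\T\mE=\mM+\mN-2\mS\succeq\mzero$, the reduced inequality is a trace-polynomial statement in these four symmetric matrices. I would establish it by using the block-PSD structure of $[\mA\ \mB]^\T[\mA\ \mB]$, which in particular forces the operator Cauchy--Schwarz-type bound relating $\mS^2$ to $\mM$ and $\mN$, combined with a Young-style weighting calibrated by the parameter $\sqrt 2 - 1$ read off from the extremal example. The principal obstacle will be that the cross term $\langle \mE\mA^\T, \mB\mE^\T\rangle$ can be strongly negative (equal to $-a^2b^2$ in the extremal configuration), so no direct Cauchy--Schwarz bound on the cross term alone can produce the sharp constant; it is exactly the polarization identity above, together with the PSD ordering $\mE^\T\mA \succeq \mE^\T\mB$ inherited from $\mE^\T\mE\succeq\mzero$, that lets the constant $\tfrac{1}{2(\sqrt 2 - 1)}$ emerge.
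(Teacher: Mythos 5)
Your reduction is algebraically sound: the identity $\mA\mA^\T-\mB\mB^\T=\mE\mA^\T+\mB\mE^\T$, the symmetry of $\mE^\T\mA$ and $\mE^\T\mB$ (which does follow from $\mA^\T\mB=\mB^\T\mA$), the rewriting of the cross term as $\langle \mE^\T\mA,\mE^\T\mB\rangle$, and the polarization step are all correct, and your rank-one example $\mA=a\ve_1$, $\mB=b\ve_2$ with $b^2/a^2=\sqrt2-1$ does saturate the bound, confirming the constant is sharp. But the proof stops exactly where the work begins: the ``reduced purely Gram-type inequality''
\[
\|\mE^\T\mE\|_F^2 \le 2\|\mB\mE^\T\|_F^2 + \|\mA^\T\mA-\mB^\T\mB\|_F^2 + (6-4\sqrt 2)\,\|\mE\mA^\T\|_F^2
\]
is, by your own derivation, \emph{equivalent} to the lemma, and you never prove it --- the last paragraph only names tools (block-PSD structure, an operator Cauchy--Schwarz bound, a Young-type weighting) without carrying out any estimate. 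So the proposal is a correct change of variables plus a sanity check, not a proof.

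Worse, the mechanism you claim will close the gap cannot work as stated. You say the constant emerges from ``the PSD ordering $\mE^\T\mA \succeq \mE^\T\mB$ inherited from $\mE^\T\mE\succeq\mzero$,'' but that ordering is automatic for \emph{any} $\mA,\mB$ (their difference is always $\mE^\T\mE$) and does not use positive semidefiniteness of $\mA^\T\mB$ at all. The PSD hypothesis is essential: take $\mA=a\ve_1$, $\mB=-a\ve_1$, so $\mA^\T\mB=-a^2$ is symmetric but not PSD; then $\mA\mA^\T-\mB\mB^\T=\mzero$ while $\|\mE\mA^\T\|_F^2=4a^4>0$, and one checks the reduced inequality fails too ($16a^4 \not\le (8+24-16\sqrt2)a^4$). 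Hence any valid completion must inject $\mA^\T\mB\succeq\mzero$ into the final estimate --- typically via nonnegativity of traces of products of PSD matrices, e.g.\ $\trace(\mE^\T\mE\,\mA^\T\mB)\ge 0$, in a full trace expansion in the spirit of the paper's proof of Lemma~\ref{lem:regularity property for PSD} --- and your sketch does not contain that ingredient. (For reference, the paper itself gives no proof of this lemma; it imports it as Lemma~E.1 of \cite{bhojanapalli2016lowrankrecoveryl}, whose argument is precisely such a trace expansion exploiting the PSD hypothesis.)
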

\subsection{Local descent condition for the region $\calR_1$}
Similar to what used in Appendix~\ref{prf:locol sescent}, we perform the change of variable $\mW^\star\mR \rightarrow \mW^\star$ to avoid $\mR$ in the following equations. With this change of variable we have instead $\mW^\T\mW^\star = \mW^{\star\T}\mW$ is PSD.

We first control $\left|\left\langle\nabla G(\mW) -  \nabla g(\mW),\mW - \mW^\star\right\rangle\right|$ as follows:
\begin{align*}
&\left|\left\langle\nabla G(\mW) -  \nabla g(\mW),\mW - \mW^\star\right\rangle\right|\\
&\leq \left|\langle\nabla f(\mX),(\mU - \mU^\star)\mV^\T \rangle - \langle \mX - \mX^\star,(\mU - \mU^\star)\mV^\T\rangle\right|\\
 & \quad +\left|\langle\nabla f(\mX),\mU(\mV - \mV^\star)^\T \rangle - \langle \mX - \mX^\star,\mU(\mV - \mV^\star)^\T\rangle\right|\\
 & \leq c\left\|\mX - \mX^\star\right\|_F\left(\| (\mU - \mU^\star)\mV^\T\|_F +  \|\mU(\mV - \mV^\star)^\T\|_F \right) \\
& \leq c\|\mW\mW^\T - \mW^\star\mW^{\star\T}\|_F \left\|\mW(\mW-\mW^\star)^\T\right\|_F\\
& \leq  \frac{c}{2(\sqrt{2} -1)}\|\mW\mW^\T - \mW^\star\mW^{\star\T}\|_F^2
\end{align*}
where the second inequality utilizes $\nabla f(\mX^\star) = \vzero$ and Lemma~\ref{lem:RIP reformulation}, and the last inequality follows from Lemma~\ref{lem:bound (C-D)C^T}. The above result along with \eqref{eq:proof descent main}-\eqref{eq:proof descent main second term 2} gives
\begin{equation}
\begin{split}
&\left\langle \nabla G(\mW), \mW - \mW^\star\right\rangle\\
& \geq \left\langle \nabla g(\mW), \mW - \mW^\star\right\rangle - \left|\left\langle\nabla G(\mW) -  \nabla g(\mW),\mW - \mW^\star\right\rangle\right|\\
& \geq  \left\langle\nabla g(\mW), \mW - \mW^\star\right\rangle - \frac{c}{2(\sqrt{2} -1)}\left\|\mW\mW^\T - \mW^\star\mW^{\star\T}\right\|_F^2\\
& \geq \frac{1}{16}\sigma_r(\mX^\star)\dist^2(\mW,\mW^\star)  + \frac{1}{32}\left\|\mW\mW^\T - \mW^\star\mW^{\star\T}\right\|_F^2\\
 &\quad+ \frac{1}{4\|\mX^\star\|} \left\|\widehat\mW^\star\widehat\mW^{\star\T}\mW\right\|_F^2\\
 &\quad- \frac{c}{2(\sqrt{2} -1)}\left\|\mW\mW^\T - \mW^\star\mW^{\star\T}\right\|_F^2\\
& \geq \frac{1}{16}\sigma_r(\mX^\star)\dist^2(\mW,\mW^\star)  + \frac{1}{160}\left\|\mW\mW^\T - \mW^\star\mW^{\star\T}\right\|_F^2 \\&\quad+ \frac{1}{4\|\mX^\star\|} \left\|\widehat\mW^\star\widehat\mW^{\star\T}\mW\right\|_F^2
\end{split}
\label{eq:proof descent sensing main}\end{equation}
where we utilize $c\leq \frac{1}{50}$.

On the other hand, we control $\|\nabla G(\mW)\|_F$ with Lemma~\ref{lem:gradient g - G norm} controlling the deviation between $\nabla G(\mW)$ and $\nabla g(\mW)$ as follows:
\begin{equation}
\begin{split}
&\left\|\nabla G(\mW)\right\|_F^2 =  \left\|\nabla g(\mW) + \nabla G(\mW) - \nabla g(\mW)\right\|_F^2\\
&\leq \frac{20}{19}\left\|\nabla g(\mW)\right\|_F^2 + 20\left\|\nabla g(\mW) - \nabla G(\mW)\right\|_F^2\\
&\leq \frac{20}{19}\left\|\nabla g(\mW)\right\|_F^2 + 20c^2\|\mW\|^2\left\|\mW\mW^\T - \mW^\star\mW^{\star\T} \right\|_F^2\\
& =\frac{5}{19}\left\|   \left(\mW\mW^\T - \mW^\star\mW^{\star\T}\right)\mW + \widehat\mW^\star\widehat\mW^{\star\T}\mW \right\|_F^2\\
&\quad + 20c^2\|\mW\|^2\left\|\mW\mW^\T - \mW^\star\mW^{\star\T} \right\|_F^2\\
& \leq \left(\frac{5}{19}\frac{100}{99} + 20c^2\right) \left\|   \left(\mW\mW^\T - \mW^\star\mW^{\star\T}\right)\mW \right\|_F^2 \\
&\quad+ 25 \left\| \widehat\mW^\star\widehat\mW^{\star\T}\mW \right\|_F^2\\
& \leq(\frac{5}{19}\frac{100}{99} + 50c^2)(\sqrt{2}+1)^2\|\mX^\star\|\|\mW\mW^\T - \mW^\star\mW^{\star\T} \|_F^2 \\
&\quad+ 25   \| \widehat\mW^\star\widehat\mW^{\star\T}\mW \|_F^2,
\end{split}
\label{eq:proof descent sensing main 2}\end{equation}
where the first inequality  holds since $(a+b)^2\leq \frac{1+\epsilon}{\epsilon}a^2 + (1+\epsilon)b^2$ for any $\epsilon>0$, and
the fourth line follows from~\eqref{eq:local descent gradient g}.

Now combining \eqref{eq:proof descent sensing main}-\eqref{eq:proof descent sensing main 2} and assuming $c\leq \frac{1}{50}$ gives
\begin{align*}
&\left\langle \nabla G(\mW), \mW - \mW^\star\right\rangle\\
& \geq \frac{1}{16}\sigma_r(\mX^\star)\dist^2(\mW,\mW^\star) + \frac{1}{260\|\mX^\star\|}\|\nabla G(\mW)\|_F^2.
\end{align*}
This completes the proof of \eqref{eq:thm sensing regularity condition}.
\subsection{Negative curvature for the region $\calR_2$}
Let $\mDelta = \vq_k\valpha^\T$ be defined as in \eqref{eq:Delta for negative curvature}. First note that
\begin{align*}
\left\|\mDelta_{\mU}\mV^\T+ \mU\mDelta_{\mV}^\T\right\|_F^2 &\leq 2 \left\|\mDelta_{\mU}\mV^\T\right\|_F^2 + 2\left\| \mU\mDelta_{\mV}^\T\right\|_F^2\\
& \leq 2\left\|\mW\mDelta^\T\right\|_F^2 = 2\sigma_r^2(\mW)\leq \sigma_r(\mX^\star),
\end{align*}
where the last equality holds because $\sigma_r(\mW)\leq \sqrt{\frac{1}{2}}\sigma_r^{1/2}(\mX^\star)$.  Also utilizing the particular structure in $\mDelta$ yields
\[
\left\|\mDelta_{\mU}\mDelta_{\mV}^\T\right\|_F = \frac{1}{2}\left\|\vphi_k\vpsi_k^\T\right\|_F=\frac{1}{2}.
\]
Due to the assumption $\frac{20}{19}\|\mW^\star\mW^{\star\T}\|_F\geq \|\mW\mW^{\T}\|_F$, we have
\begin{align*}
&\|\mU\mV^\T - \mX^\star\|_F \leq \frac{\sqrt{2}}{2}\|\mW\mW^\T - \mW^\star\mW^{\star\T} \|_F\\
&\leq \frac{\sqrt{2}}{2} (\frac{20}{19}\|\mW^\star\mW^{\star\T} \|_F + \|\mW^\star\mW^{\star\T} \|_F) = \frac{39\sqrt{2}}{19}\|\mX^\star\|_F.
\end{align*}
Now combining the above results with Lemma~\ref{lem:Hessian g - G norm}, we have
\begin{align*}
&\nabla^2 G(\mW)[\mDelta,\mDelta]\\ &\leq \nabla^2 g(\mW)[\mDelta,\mDelta] +  \left|\nabla^2 G(\mW)[\mDelta,\mDelta] -  \nabla^2 g(\mW)[\mDelta,\mDelta]\right|\\
& \leq -\frac{1}{4}\sigma_r(\mX^\star) + 2 c\left\|\mU\mV^\T - \mX^\star\right\|_F\left\|\mDelta_{\mU}\mDelta_{\mV}^\T\right\|_F \\&\quad+ c\left\|\mDelta_{\mU}\mV^\T+ \mU\mDelta_{\mV}^\T\right\|_F^2\\
& \leq -\frac{1}{4}\sigma_r(\mX^\star) + \frac{39}{19}\sqrt{2}c\|\mX^\star\|_F + c\sigma_r(\mX^\star)\\
& \leq -\frac{1}{6}\sigma_r(\mX^\star),
\end{align*}
where the last line holds when $c\leq \frac{\sigma_r(\mX^\star)}{50\|\mX^\star\|_F}$. This completes the proof of \eqref{eq:thm sensing negative curvature}.

\subsection{Large gradient for the region $\calR_3'\cup\calR''_3\cup\calR'''_3$:}
To show that $G(\mW)$ has large gradient in these three regions, we mainly utilize Lemma~\ref{lem:gradient g - G norm} to guarantee that $\nabla G(\mW)$ is close to $\nabla g(\mW)$.
\subsubsection{Large gradient for the region $\calR_3'$}
Utilizing Lemma~\ref{lem:gradient g - G norm}, we have
\begin{align*}
&\left\|\nabla G(\mW) \right\|_F\\
&\geq \left\| \nabla g(\mW)\right\|_F - \left\|\nabla G(\mW) -  \nabla g(\mW)\right\|_F\\
&\geq \left\| \nabla g(\mW)\right\|_F -  c\left\|\mW\mW^\T - \mW^\star\mW^{\star\T}\right\|_F \left\|\mW\right\|\\
& \geq \left\| \nabla g(\mW)\right\|_F -  c(\frac{10}{9}\|\mW^\star\mW^{\star\T}\|_F + \| \mW^\star\mW^{\star\T}\|_F) \left\|\mW\right\|\\
& \geq \revise{\frac{1}{11}}\sigma_r^{3/2}(\mX^\star) - c\frac{19}{9}2\|\mX^\star\|_F\frac{20}{19}\sqrt{2}\|\mX^\star\|^{1/2}\\
& \geq \revise{\frac{1}{50}}\sigma_r^{3/2}(\mX^\star),
\end{align*}
where the fourth line follows because $\left\|\mW^\star\mW^{\star\T}\right\|_F = 2\|\mX^\star\|_F$ and $\left\|\mW\right\|\leq \frac{20}{19}\sqrt{2}\|\mX^\star\|^{1/2}$, and the last line holds if $c\leq \frac{1}{100}\frac{\sigma_r^{3/2}(\mX^\star)}{\|\mX^\star\|_F\|\mX^\star\|^{1/2}}$. This completes the proof of \eqref{eq:thm sensing large gradient 1}.

\subsubsection{Large gradient for the region $\calR_3''$}

Utilizing Lemma~\ref{lem:gradient g - G norm} again, we have
\begin{align*}
&\left\|\nabla G(\mW) \right\|_F\\
& \geq \left\| \nabla g(\mW)\right\|_F -  c\left(\left\|\mW\mW^\T\right\|_F + \left\| \mW^\star\mW^{\star\T}\right\|_F\right) \left\|\mW\right\|\\
& \geq \frac{39}{800}\|\mW\|^{3} - c\left(\frac{10}{9}\left\|\mW^\star\mW^{\star\T}\right\|_F + \left\| \mW^\star\mW^{\star\T}\right\|_F\right) \left\|\mW\right\|\\
& \geq \frac{39}{800}\|\mW\|^{3} - c\frac{19}{9}2\left\| \mX^\star\right\|_F \left\|\mW\right\|\\
& \geq  \frac{39}{800}\|\mW\|^{3} - \frac{19}{450}\left\| \mX^\star\right\| \left\|\mW\right\|\\
& \geq \frac{1}{50}\|\mW\|^3,
\end{align*}
where the fourth line holds if $c\leq \frac{1}{100}\frac{\sigma_r^{3/2}(\mX^\star)}{\|\mX^\star\|_F\|\mX^\star\|^{1/2}}$ and the last follows from the fact that
\[
\|\mW\|> \frac{20}{19}\|\mW^\star\| \geq \frac{20}{19}\sqrt{2}\|\mX^\star\|^{1/2}.
\]
This completes the proof of \eqref{eq:thm sensing large gradient 2}.
\subsubsection{Large gradient for the region $\calR_3'''$}
To show \eqref{eq:thm sensing large gradient 3},
we first control $\left|\left\langle\nabla G(\mW) -  \nabla g(\mW),\mW\right\rangle\right|$ as follows:
\begin{align*}
&\left|\left\langle\nabla G(\mW) -  \nabla g(\mW),\mW\right\rangle\right|\\
& = 2\left|\left\langle \nabla f(\mU\mV^\T ),\mU\mV^\T \right\rangle- \left\langle \mU\mV^\T - \mX^\star,{\mU}\mV^\T \right\rangle \right|\\
& \leq 2c\left\|\mU\mV^\T - \mX^\star\right\|_F \left\|{\mU}\mV^\T\right\|_F\\
& \leq 2c \frac{19}{20}\sqrt{2} \|\mW\mW^\T\|_F \frac{1}{2}\|\mW\mW^\T\|_F = \frac{19}{20}\sqrt{2}c \|\mW\mW^\T\|_F^2,
\end{align*}
where the first inequality utilizes the fact $\nabla f(\mX^\star) = \vzero$ and Lemma~\ref{lem:RIP reformulation}, and
the last inequality holds because
\begin{align*}
\left\|\mU\mV^\T - \mX^\star\right\|_F &\leq \frac{\sqrt{2}}{2}\left\|\mW\mW^\T - \mW^\star\mW^{\star\T} \right\|_F\\
&\leq \frac{\sqrt{2}}{2} \left(\frac{9}{10}\left\|\mW\mW^{\T} \right\|_F + \left\|\mW\mW^{\T} \right\|_F\right)\\
 &= \frac{19\sqrt{2}}{20}\left\|\mW\mW^{\T} \right\|_F
\end{align*}
 and
\begin{align*}
\|\mW\mW^\T\|_F^2& = \|\mU\mU^\T\|_F^2 \!+\! \|\mV\mV^\T\|_F^2 \!+\! 2\|\mU\mV^\T\|_F^2 \geq 4 \|\mU\mV^\T\|_F^2
\end{align*}
by noting that
\[
\|\mU\mU^\T\|_F^2 + \|\mV\mV^\T\|_F^2 -  2\|\mU\mV^\T\|_F^2 = \|\mU^\T\mU - \mV^\T\mV\|_F^2 \geq 0.
\]

Now utilizing \eqref{eq:large gradient III eq 1} to provide a lower bound for $\left\langle \nabla g(\mW),\mW\right\rangle$, we have
\begin{align*}
&\left|\left\langle\nabla G(\mW),\mW\right\rangle\right|\\ &\geq \left\langle \nabla g(\mW),\mW\right\rangle -  \left|\left\langle\nabla G(\mW) -  \nabla g(\mW),\mW\right\rangle\right| \\
& > \frac{1}{20}\|\mW\mW^\T\|_F^2 - \frac{19}{20}\sqrt{2}c\|\mW\mW^\T\|_F^2\\
& \geq \frac{1}{45}\|\mW\mW^\T\|_F^2,
\end{align*}
where the last line holds when $c\leq \frac{1}{50}$.
Thus,
\[
\|\nabla G(\mW)\|_F \geq \frac{1}{\|\mW\|}\left|\left\langle\nabla G(\mW),\mW\right\rangle\right|> \frac{1}{45}\|\mW\mW^\T\|_F^{3/2},
\]
where we utilize $\|\mW\|\leq \left(\|\mW\mW^\T\|_F\right)^{1/2}$. This completes the proof of \eqref{eq:thm sensing large gradient 3}.

\bibliographystyle{ieeetr}
\bibliography{nonconvex}

\end{document}